\newcounter{theorem}
\newtheorem{lemma}{Lemma}
\newtheorem{theorem}[lemma]{Theorem}
\newtheorem{corollary}[lemma]{Corollary}
\newtheorem{property}[lemma]{Property}
\let\geq\geqslant
\let\leq\leqslant
\newcommand{\IR}{\mathbb{R}}
\newcommand{\IZ}{\mathbb{Z}}
\newcommand{\eps}{\varepsilon}
\DeclareMathOperator{\diam}{diam}
\DeclareMathOperator{\dmin}{dmin}
\DeclareMathOperator{\ball}{b}
\DeclareMathOperator{\dist}{dist}
\DeclareMathOperator{\expansion}{expansion}
\title{Pattern Matching in  Doubling Spaces
\thanks{
This work was supported by Basic Science Research Program through the 
National Research Foundation of Korea (NRF) 
funded by the Ministry of Education (2017R1D1A1B04036529).
}
}
\author{Corentin Allair\\
\'Ecole Polytechnique, Paris, France\\
\texttt{corentin.allair@polytechnique.edu}
\and
Antoine Vigneron\thanks{Corresponding author}\\ 
School of Electrical and Computer Engineering\\ 
UNIST, Republic of Korea\\
\texttt{antoine@unist.ac.kr}
}
\begin{document}
\maketitle

\begin{abstract}
	\normalsize
	We consider the problem of matching a metric space $(X,d_X)$ 
	of size $k$ with a subspace of a metric space $(Y,d_Y)$ of 
	size $n \geq k$, assuming that these two spaces have constant 
	doubling dimension $\delta$.
	More precisely, given an input parameter $\rho \geq 1$, the $\rho$-distortion
	problem is to find 	a one-to-one mapping from $X$ to $Y$ that distorts
	distances by a factor at most $\rho$.
	We first show  by a reduction from $k$-clique
	that, in doubling dimension $\log_2 3$, 
	this problem is NP-hard and W[1]-hard.
	Then we provide a near-linear time approximation
	algorithm for fixed $k$: 
	Given an approximation ratio $0<\eps\leq 1$, and 
	 a 	positive instance 	of the $\rho$-distortion problem, our algorithm returns 
	a solution to	the $(1+\eps)\rho$-distortion problem in time
	$(\rho/\eps)^{O(1)}n \log n$.
	We also show how to extend these results to the minimum distortion problem
	in doubling spaces: We prove the same hardness results, 
	and for fixed $k$, we give a $(1+\eps)$-approximation 
	algorithm running in time $(\dist(X,Y)/\eps)^{O(1)}n^2\log n$, 
	where $\dist(X,Y)$
	denotes the minimum distortion between $X$ and $Y$.
\end{abstract}

%\clearpage

\pagenumbering{arabic}

\section{Introduction}

A metric space has {\it doubling dimension} $\delta$ 
if any ball can be covered by at most $2^\delta$  
balls of half its radius.
When $\delta=O(1)$, we say that this space is {\it doubling}.
(See Section~\ref{sec:preliminary}.)
For instance, the Euclidean space $\IR^{d}$ has doubling dimension $O(d)$,
hence doubling spaces are generalizations of fixed-dimensional Euclidean spaces.

In this paper, we study pattern matching problems in doubling spaces. Given two
doubling spaces $(X,d_X)$ and $(Y,d_Y)$  of doubling dimension $\delta$, 
and sizes $|X|=k$ and $|Y|=n$, where $k \leq n$, our goal is to find a subspace 
of $Y$ that resembles the {\it pattern} $X$. More precisely, we consider 
the $\rho$-{\it distortion
problem} and the {\it minimum distortion problem}, which we describe below.

Given $\rho \geq 1$, the $\rho$-distortion problem
is to find,  if it exists, a mapping $\sigma:X \to Y$ such that 
\begin{equation}\label{eq:distortion}
(1/\rho)d_X(x,x') \leq d_Y(\sigma(x),\sigma(x')) \leq \rho d_X(x,x')
\end{equation}
for all $x,x' \in X$. It follows from this definition that  $\sigma$ is injective.

The $\rho$-distortion problem is analogous to the problem of matching
two point-sets in Euclidean space under rigid transformations, which are compositions
of translations and rotations. If, in addition, we allow scaling, then
an analogous problem in general metric spaces is the minimum distortion problem.
The goal is to minimize the distortion 
$\dist(\sigma)=\expansion(\sigma) \times \expansion(\sigma^{-1})$
over all injections $\sigma:X \to Y$, where
\[
	\mathrm{expansion}(\sigma)=
	\max_{\substack{x,x'\in X \\ x \neq x'}} 
	\frac{d_Y(\sigma(x),\sigma(x'))}{d_X(x,x')}
	\quad \text{and} \quad
	\mathrm{expansion}(\sigma^{-1})=
		\max_{\substack{x,x'\in X \\ x \neq x'}} 
	\frac{d_X(x,x')}{d_Y(\sigma(x),\sigma(x'))}.
\]
The minimum of $\dist(\sigma)$ over all injections $\sigma: X \to Y$ 
is denoted $\dist(X,Y)$, and it is easy to see that $\dist(X,Y) \geq 1$.
The minimum distortion problem was introduced by Kenyon et al.~\cite{Kenyon09}
in the case where $k=n$, and thus $\sigma$ is a bijection.

Motivated by applications to natural language processing, bioinformatics
and computer vision, Ding and Ye~\cite{DingY19} recently proposed
a practical algorithm for a pattern
matching problem in doubling spaces.
However, this algorithm may only return an approximation of a local
minimum, and its
time bound is not given as a function of the input size.
One of our goals is thus to provide a provably efficient algorithm
for pattern matching in doubling spaces. Another motivation for our work is that,
even though the complexity of the minimum distortion problem has been studied 
for several types of metrics, it appears that no result was previously
known for two doubling metrics. (See the comparison with previous
work below.)

\paragraph{Our results.}

We first give a hardness result: We show that for any $\rho \geq 1$,
the  $k$-clique problem reduces
to $\rho$-distortion in doubling dimension $\log_2 3$. It implies that
the $\rho$-distortion problem is NP-hard, and is W[1]-hard when
parameterized by $k$  (Corollary~\ref{cor:hardness}). 
It also shows that this problem cannot
be solved in time $f(k) \cdot n^{o(k)}$ for any computable function $f$, 
unless the exponential time hypothesis (ETH) is false
(Corollary~\ref{cor:ETH}).

On the positive side, we present a near-linear time approximation algorithm
for small values of $k$. More precisely, if $\rho \geq 1$ 
and  $0<\eps\leq 1$, our algorithm returns in 
$2^{O(k^2 \log k)} (\rho^2/\eps)^{2k\delta}n\log n$  time
a solution to the $(1+\eps)\rho$-distortion problem whenever
a solution to the $\rho$-distortion problem exists
(Theorem~\ref{th:nospread}). In this time bound, it is reasonable to assume that
$\rho$ is a small constant, say $\rho \leq 10$, as a larger value 
would mean that we allow
a relative error of more than 900\% in the quality of the matching, which is 
probably too much for most applications.

We also show how to extend these results to the minimum distortion problem.
In particular, we show that the minimum distortion problem cannot be solved
in time $f(k) \cdot n^{o(k)}$ for any computable function $f$, 
unless ETH is false, and we give a
$(1+\eps)$-approximation algorithm running in  
$2^{O(k^2 \log k)} (\dist(X,Y)^{2k\delta}/\eps^{2k\delta+O(1)})n^2\log n$
time (Theorem~\ref{thm:optimization}).
Here again, it is reasonable to assume that $\dist(X,Y)=O(1)$, 
and then for any fixed $k$, this algorithm is an FPTAS with running time
$(1/\eps)^{O(1)} n^2 \log n$.

\paragraph{Comparison with previous work.}

One of the main differences between our results and previous work on
point pattern matching under rigid transformations, or on the minimum distortion problem,
is that we parameterize the problem by $k$, and hence
$k$ is regarded as a small number. The advantage is that the dependency
of our time bounds in $n$ are low (near-linear or near-quadratic). However,
we obtain an exponential dependency in $k$, which may be unavoidable due to
our hardness results.

Geometric point pattern matching problems have been studied extensively.
(See for instance the survey by Alt and Guibas~\cite{Alt00}.) In the
fixed-dimensional Euclidean space $\IR^d$, these problems are usually tractable, 
as the space of transformations has a constant number of degrees of freedom.
For instance, when $k=n$,
we may want to decide whether $X$ and $Y$ are congruent, which means that
there is a rigid transformation $\mu$ such that $\mu(X)=Y$. Alt et al.~\cite{Alt88} 
showed how
to find such a transformation in time $O(n^{d-2}\log n)$, when it exists.
In practice, however, we cannot expect that point coordinates are known exactly,
so it is unlikely that an exact match exists.  We may thus want to find the smallest
$\eps>0$ such that each point of $X$ is brought to distance at most $\eps$
from a point in $Y$. (In other words, we allow an additive error $\eps$.)
Chew et al.  gave an  $O(k^3n^2\log^2(nk))$-time algorithm
to solve this problem in the plane under rigid transformations~\cite{Chew97}.

As mentioned above, to the best of our knowledge, the only work published so far on pattern
matching in doubling spaces presents  a practical algorithm for matching
two doubling spaces~\cite{DingY19}. However it has not been proven to return a good 
approximation of the optimal solution in the worst case. Other problems studied
in doubling spaces include
approximate near-neighbor searching~\cite{AryaMVX08,CG06,HM06},
spanners~\cite{Borradaile19}, routing~\cite{ChanLNS15,GottliebR08}, 
TSP~\cite{Talwar04}, clustering~\cite{Friggstad19}, 
Steiner forest~\cite{Chan18} \dots

The minimum distortion problem has been studied under various metrics, when $k=n$.
Kenyon et al.~\cite{Kenyon09} gave a polynomial-time algorithm for
line metrics (1-dimensional point sets) when $\dist(X,Y)<5+2\sqrt 6$. 
They also gave an algorithm that computes $\dist(X,Y)$ when $d_X$ is
the metric associated with an unweighted graph over $X$ and $d_Y$ is the
metric associated with a bounded degree tree over $Y$,
with a running time  exponential in the maximum degree and doubly
exponential in $\dist(X,Y)$. 
For general metrics, the minimum distortion
is hard to approximate within a factor less than $\log^{1/4-\gamma} n$, for
any $\gamma>0$~\cite{Khot07}. 
Hall and Papadimitriou~\cite{Hall05} showed that even for line metrics,
the distortion is hard to approximate when it is large.

When $k \leq n$, {F}ellows et al.~\cite{Fellows05} showed that the problem of deciding whether
$\dist(X,Y) \leq D$ is fixed-parameter tractable when parameterized
by $\Delta$ and $D$, where $d_X$ is the metric associated with
an unweighted graph over $X$, and $d_Y$ is the metric associated with
a tree of degree at most $\Delta$ over $Y$. For two unweighted graph metrics,
Cygan et al.~\cite{Cygan17} showed that the problem cannot be solved in
time $2^{o(n \log n)}$ unless ETH is false. When $X$ is an arbitrary finite
metric space and $Y$ is a subset of the real line,  Nayyeri and Raichel~\cite{Nayyeri15}
showed that a constant-factor approximation of $\dist(X,Y)$ can be computed
in time $\Phi(X)^{O((\dist(x,y))^2)}(kn)^{O(1)}$, where $\Phi(X)$ is the
spread of $X$. (See Section~\ref{sec:preliminary}.)

In summary, the previously known theoretical results on the minimum distortion 
problem are either
hardness results, or algorithms for cases where $d_Y$ is a subset of a line metric
or a tree metric. The algorithms presented in this paper, on the other hand,
apply when $d_X$ and $d_Y$ are doubling metrics, which are generalizations
of fixed-dimensional Euclidean metrics.

\paragraph{Our approach.}

In Section~\ref{sec:hardness}, we present hardness results on the
$\rho$-distortion problem. We reduce an instance $G(V,E)$ of $k$-clique 
to an instance of $\rho$-distortion consisting of
two metric spaces $(X,d_X)$ and $(Y,d_Y)$ of sizes $k$ and $km$, respectively,
where $m=|V|$.
The pattern $(X,d_X)$ is an ultrametric, with exponentially increasing
distances.
The space $(Y,d_Y)$ consists of $k$ rings, each ring consisting
of $m$ points regularly spaced on a circle of perimeter 1.
Each of these $m$ points is associated with a vertex of the $k$-clique instance.
(See \figurename~\ref{fig:ring}a.) The distances between the
rings increase exponentially (\figurename~\ref{fig:rings}), 
and the input graph is encoded
by having slightly longer edges for pairs of vertices lying in
different rings that correspond to edges in the input graph
(\figurename~\ref{fig:ring}b.) We prove that these two
spaces have doubling dimension $\log_2 3$, and that this instance
of $\rho$-distortion is equivalent to the $k$-clique instance
we started from.

In Section~\ref{sec:decision}, we give a self-contained description of
a first approximation algorithm for the $\rho$-distortion that runs in time  
$2^{O(k^2 \log k)} (\rho^2/\eps)^{3k\delta}n+O(kn\log \Phi(Y))$
where $\Phi(Y)$ is the {\it spread} of $Y$. (See Section~\ref{sec:preliminary}.)
We first construct a  {\it navigating net} over $Y$~\cite{KL04}. 
A navigating net is essentially a
coordinate-free quadtree that records a metric space. 
It represents $Y$ at all resolutions $r$ where $r$ is a powers of 2.
At each scale $r$, our navigating net records an {\it $r$-net} $Y_r$ of $Y$,
which is a subset of $Y$ whose points are a distance
at least $r$ apart, and such that the radius-$r$ balls centered
at $Y_r$ cover $Y$. (See \figurename~\ref{fig:nnet}.)

Let $r_X$ be the smallest scale that is at least $\rho$ times the diameter of $X$.
Our algorithm constructs, for each point $y \in Y_{r_X}$,
a sparse set of matchings 
whose images are in the radius-$3r_X$ ball centered at $y$.
(By sparse, we mean that any two such matchings send at least one
point of $X$ to two points of $Y$ that are a distance at least
$\eps r_X/(2\rho^2)$ apart.)
The union of these sets of matchings 
over all $y \in Y_{r_X}$ is denoted $L(X,\eps,r_X)$,
and we show that any solution to the $\rho$-distortion problem
is close to at least one matching in $L(X,\eps,r_X)$.

We compute $L(X,\eps,r_X)$ recursively from  $L(P,\beta,r_P)$
and $L(Q,\beta,r_Q)$ for $\beta=\eps/(8k-8)$, where $P$
and $Q$ form a partition of $X$. More precisely,
we obtain $P$ and $Q$ by running Kruskal's algorithm on $X$,
and stopping at the second last step. It 
ensure that $P$ and $Q$ are well separated, and it follows that  
any $\rho$-matching
$\bar \sigma:X \to Y$ can be approximated by a combination of two
$(1+\beta)\rho$-matchings $\bar \sigma^P_\beta:P \to Y$ and
$\bar \sigma^Q_\beta:Q \to Y$ recorded  in $L(P,\beta,r_P)$
and $L(Q,\beta,r_Q)$, respectively.
(See \figurename~\ref{fig:split}.)
After computing $L(X,\eps,r_X)$, we simply return
one of the matchings that it records,  if any.

In Section~\ref{sec:improved}, we show how to improve the time bound to
$2^{O(k^2 \log k)} (\rho^2/\eps)^{2k\delta} n\log n$.
We achieve it using the approximate near-neighbor (ANN) data structure
by Cole and Gottlieb~\cite{CG06}. First, this data structure allows
us to efficiently prune the sets of matchings recorded at layer $r_X$,
by only inserting a new matching if it is far enough from all previously
inserted matching. It can be checked by performing a constant number of 
ANN queries in the set of matchings. As the space of matchings is doubling
(Corollary~\ref{cor:dimmatching}), it takes logarithmic time.
This saves a factor $k(\rho^2/\eps)^{k\delta}/\log n$ in our time
bound.
Second, instead of computing the whole navigating net, which takes
time $O(n \log \Phi(Y))$, we show how to compute
any layer in $O(n \log n)$ time using ANN queries. As our algorithm
only requires $k$ layers of the navigating net, it removes the
dependency on $\Phi(Y)$ from the time bound.

Finally, in Section~\ref{sec:mindist},
we show how to extend our results on the $\rho$-distortion
problem to the minimum distortion problem. 
For the hardness result,
it suffices to add an extra point to $X$ and $Y$ that is far enough
from the other points, in order to make the reduction work. For the
algorithms, we use the reduction by Kenyon et al.~\cite{Kenyon09}
of the minimum distortion problem to the $\rho$-distortion problem,
which we speed-up using exponential search, and using a well-separated
pairs decomposition, which allows us to reduce the number of candidate values
for $\dist(X,Y)$.

\section{Notation and Preliminary}\label{sec:preliminary}

Let $(S,d_S)$ be a finite metric space.
The ball $\ball(x,r)$ centered 
at $x$ with radius $r$ 
is the set of points $x' \in S$ such that $d_S(x,x') \leq r$.
The minimum and maximum interpoint distances 
in $S$ are denoted 
$\dmin(S)$ and $\diam(S)$, respectively.
In other words, $\diam(S)$ is the diameter of $S$.
The {\em spread} of $S$ is
the ratio $\Phi(S)=\diam(S)/\dmin(S)$.
The distance from a point $x$ to a subset $T$ of $S$
is $d_S(x,T)=\min_{t \in T}d_S(x,t)$. The distance between two
sets $T$ and $U$ is 
$d_S(T,U)=\min_{t \in T, u \in U}d_S(t,u)$.

%\paragraph*{Doubling Spaces} 

A metric space $(S,d_S)$ has {\em doubling dimension}
$\delta$ if any ball of radius $r$ is contained in the union of at
most $2^\delta$ balls of radius $r/2$. When $\delta=O(1)$, we say
that this space is {\em doubling}. 
This notion of dimension generalizes the dimension of a Euclidean space:
In particular, the Euclidean space $\IR^d$
has doubling dimension $O(d)$~\cite{GKL03}.
In this paper, we will consider spaces
of constant doubling dimension, so we assume that $\delta=O(1)$.
We will need the following packing lemma:
\begin{lemma}[\cite{KN19}]\label{lem:packing}
	If a metric space $(S,d_S)$ has doubling dimension $\delta$, 
	then $|S|\leq (4\Phi(S))^\delta$.
\end{lemma}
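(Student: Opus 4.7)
The plan is to apply the doubling property iteratively from the scale $\diam(S)$ down to a scale small enough that each ball at that scale contains at most one point of $S$, and then bound $|S|$ by the number of balls needed.

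First I would fix an arbitrary point $x_0 \in S$ and observe that $S \subseteq \ball(x_0, \diam(S))$ by the definition of diameter. Then I would iterate the doubling property: a single application covers this ball with $2^\delta$ balls of radius $\diam(S)/2$, and after $i$ applications (halving radii and re-covering each of the sub-balls) we obtain a cover of $S$ by at most $2^{i\delta}$ balls of radius $\diam(S)/2^i$.

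Next I would choose $i$ as the smallest integer for which $\diam(S)/2^i < \dmin(S)/2$, so that any such ball has diameter strictly less than $\dmin(S)$ and therefore contains at most one point of $S$. This yields $|S| \leq 2^{i\delta}$. By minimality of $i$, the previous scale fails the inequality, giving $2^{i-1} \leq 2\,\diam(S)/\dmin(S) = 2\,\Phi(S)$, hence $2^i \leq 4\,\Phi(S)$. Combining, $|S| \leq 2^{i\delta} \leq (4\Phi(S))^\delta$, which is the desired bound.

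The argument is essentially a one-line packing computation, so there is no substantial obstacle; the only point to be slightly careful about is the choice of $i$ and the factor of $2$ hidden in the passage from ``ball of radius $r$'' to ``contains at most one point when $2r < \dmin(S)$'', which is exactly where the factor $4$ (rather than $2$) in $(4\Phi(S))^\delta$ originates.
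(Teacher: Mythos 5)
Your proof is correct, and it is the standard packing argument for this lemma. The paper itself does not prove the statement---it is cited from~\cite{KN19}---but the argument you give (iterate the doubling cover from radius $\diam(S)$ down to a radius below $\dmin(S)/2$, count balls, and use minimality of the stopping index to extract the factor $4\Phi(S)$) is exactly the textbook derivation; in particular your bookkeeping for the factor of $4$ (one factor of $2$ from needing radius below $\dmin(S)/2$, one from rounding the stopping scale to a power of $2$) is right, and the edge case $|S|\leq 1$ is trivial.
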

We will also make use of the fact that a product of
doubling metrics is  doubling. It was probably known,
but we could not find a reference, so we include a proof.
\begin{lemma}\label{lem:productmetric}
	Let $(S_1,d_1),\dots,(S_k,d_k)$ be metric spaces with
	doubling dimensions $\delta_1,\dots,\delta_k$, respectively. 
	Then the product
	metric $(S,d_S)$ where $S=S_1\times \dots \times S_k$ and
	\[	d_S((u_1,\dots,u_k),(v_1,\dots,v_k))
		=\max(d_1(u_1,v_1),\dots,d_k(u_k,v_k))
	\] for all $(u_1,\dots,u_k)$, $(v_1,\dots,v_k) \in S$ 
	has doubling dimension $\delta_1+\dots+\delta_k$.
\end{lemma}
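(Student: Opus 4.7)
The plan is to exploit the fact that balls in the max-product metric factor as Cartesian products of balls in the coordinate metrics. More precisely, I would first observe that for any $v=(v_1,\dots,v_k)\in S$ and any $r>0$,
\[
	\ball(v,r)=\ball_1(v_1,r)\times\cdots\times\ball_k(v_k,r),
\]
where $\ball_i$ denotes balls in $(S_i,d_i)$. This is immediate from the definition $d_S=\max_i d_i$: a tuple $(u_1,\dots,u_k)$ lies in $\ball(v,r)$ if and only if $d_i(u_i,v_i)\leq r$ for every $i$.

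Starting from an arbitrary ball $\ball(v,r)$ in the product space, I would apply the doubling property of each factor to obtain, for each $i$, a family $\mathcal{F}_i$ of at most $2^{\delta_i}$ balls of radius $r/2$ in $(S_i,d_i)$ whose union covers $\ball_i(v_i,r)$. I would then form the Cartesian products $B_1\times\cdots\times B_k$ over all tuples $(B_1,\dots,B_k)\in\mathcal{F}_1\times\cdots\times\mathcal{F}_k$. The number of such products is at most $2^{\delta_1}\cdots 2^{\delta_k}=2^{\delta_1+\cdots+\delta_k}$, and their union covers the product cover, hence covers $\ball(v,r)$.

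Finally, I would use the same product-of-balls identity in the reverse direction: a product $\ball_1(c_1,r/2)\times\cdots\times\ball_k(c_k,r/2)$ coincides with the ball of radius $r/2$ centered at $(c_1,\dots,c_k)$ in the product metric. Therefore $\ball(v,r)$ is covered by at most $2^{\delta_1+\cdots+\delta_k}$ balls of radius $r/2$ in $(S,d_S)$, which gives the required bound on the doubling dimension.

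There is no real obstacle here; the proof is bookkeeping built on top of the single observation that a max-metric ball is a Cartesian box. I would not bother trying to show that the bound $\delta_1+\cdots+\delta_k$ is tight, since the lemma only claims an upper bound on the doubling dimension, which is all the later arguments in the paper will need.
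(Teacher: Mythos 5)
Your proof is correct and follows essentially the same approach as the paper's: observe that a max-metric ball is the Cartesian product of coordinate balls, cover each factor with $2^{\delta_i}$ half-radius balls, and note that the products of these covers are themselves half-radius balls in the product metric. The only cosmetic difference is that the paper does the case $k=2$ and then invokes induction, whereas you handle general $k$ directly.
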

\begin{proof}
	We prove this lemma for $k=2$; the general case follows by induction
	on $k$.
	Let $b=\ball((u_1,u_2),r)$ be a ball of radius $r$ in $S$.
	Then $b=b_1 \times b_2$ where $b_1=\ball(u_1,r)$ and $b_2=\ball(u_2,r)$
	are balls of radius $r$ in $S_1$ and $S_2$, respectively.
	So $b_1$ is covered by $m_1 \leq 2^{\delta_1}$ balls 
	$b_1^1,\dots,b_1^{m_1}$ balls of radius  $r/2$. Similarly,
	$b_2$ is covered by $m_2 \leq 2^{\delta_2}$ balls 
	$b_2^1,\dots,b_2^{m_2}$ balls of radius  $r/2$. 

	Let $(v_1,v_2) \in b$.
	Then we have $d_1(u_1,v_1) \leq r$ and $d_2(u_2,v_2) \leq r$, which means
	that $v_1 \in b_1$ and $v_2 \in b_2$. So there exist
	$i_1$ and $i_2$ such that $v_1 \in b_1^{i_1}$ and $v_2 \in  b_2^{i_2}$.
	In other words, $(v_1,v_2) \in  b_1^{i_1} \times  b_2^{i_2}$.
	We have just proved that $b$ is contained in the union of the 
	Cartesian products 	$b_1^{i_1} \times b_2^{i_2}$. There are 
	$m_1 \times m_2 \leq 2^{\delta_1+\delta_2}$ such Cartesian
	products, and each one of them  is a ball of radius $r/2$ in $S$. 
	Therefore, $(S,d)$ 	has doubling dimension $\delta_1+\delta_2$. 
\end{proof}

We call a mapping $\sigma$ satisfying Equation~\eqref{eq:distortion} a 
{\em $\rho$-matching} from $X$ to $Y$. The distance between two 
matchings $\sigma$ and $\sigma'$ from $X$ to $Y$ 
is $d_M(\sigma,\sigma')=\max_{x \in X}d_Y(\sigma(x),\sigma'(x))$.
We denote by $x_1,x_2,\dots,x_k$ the $k$ elements of $X$.
So a matching $\sigma:X \to Y$ can be identified with a sequence
of $k$ points $(y_1,\dots,y_k)$ where $y_1=\sigma(x_1),\dots,y_k=\sigma(x_k)$.
In other words, the space of matchings from $X$ to $Y$ can be identified
with $(Y^k,d_M)$. Then it follows from Lemma~\ref{lem:productmetric} that:
\begin{corollary}\label{cor:dimmatching}
	The space of matchings from $X$ to $Y$ has doubling dimension $k\delta$.
\end{corollary}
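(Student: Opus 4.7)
The plan is to observe that this corollary is an immediate consequence of Lemma~\ref{lem:productmetric}, as already suggested by the sentence preceding the statement. Concretely, I would first recall from the paragraph above that a matching $\sigma:X\to Y$ is identified with the $k$-tuple $(\sigma(x_1),\ldots,\sigma(x_k))\in Y^k$, and that under this identification the matching distance becomes
\[
d_M(\sigma,\sigma')=\max_{1\leq i\leq k}d_Y(\sigma(x_i),\sigma'(x_i)),
\]
which is exactly the $\ell_\infty$-product metric on $Y^k$ in the sense of Lemma~\ref{lem:productmetric}.

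Next, I would apply Lemma~\ref{lem:productmetric} with $S_1=\dots=S_k=Y$, $d_1=\dots=d_k=d_Y$, and $\delta_1=\dots=\delta_k=\delta$. The lemma then yields directly that $(Y^k,d_M)$ has doubling dimension $\delta_1+\dots+\delta_k=k\delta$, which is exactly the statement of the corollary.

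There is no real obstacle here; the only subtlety is that strictly speaking the set of $\rho$-matchings is a subset of $Y^k$ (since matchings are required to be injective and satisfy~\eqref{eq:distortion}), but since the metric $d_M$ on this subset is simply the restriction of the product metric on $Y^k$ and the doubling-dimension bound we care about is an upper bound, we may work with the ambient space $(Y^k,d_M)$ and transfer the bound to any subset. Thus the corollary follows in one line from Lemma~\ref{lem:productmetric}, and no further argument is needed.
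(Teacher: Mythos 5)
Your proposal is correct and is essentially identical to the paper's reasoning: the paper introduces the identification of the space of matchings with $(Y^k,d_M)$ in the sentence immediately preceding the corollary and then cites Lemma~\ref{lem:productmetric}, exactly as you do. Your closing remark about restricting to the subset of injective matchings is harmless but unnecessary, since the corollary as stated concerns the full space $(Y^k,d_M)$.
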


\section{Reduction from $k$-clique}\label{sec:hardness}

Given an integer $k\geq 1$ and a graph $G(V,E)$, 
the $k$-{\it clique} problem is to 
decide whether there exists a subset $C \subseteq V$ of $k$ vertices such that
any two of these vertices are connected by an edge in $E$. This subset $C$
is called a $k$-clique.
In this section, we present a reduction from the $k$-clique problem to 
the $\rho$-distortion problem. 

\subsection{Construction}
\label{sec:construction}

So let $G(V,E)$ be an instance of $k$-clique with $m$ vertices.
We denote $V=\{v_1,\dots,v_m\}$, and
we assume that $m\geq 24$. For any $\rho \geq 1$, we will show how to 
construct
an equivalent instance of the $\rho$-distortion problem consisting of
two metric spaces $(X,d_X)$ and $(Y,d_Y)$ of respective sizes $k$ and $km$,
and of doubling dimension $\log_2 3$.

\begin{figure}[h]
	\begin{subfigure}[t]{0.39\textwidth}
		\centering
		\includegraphics{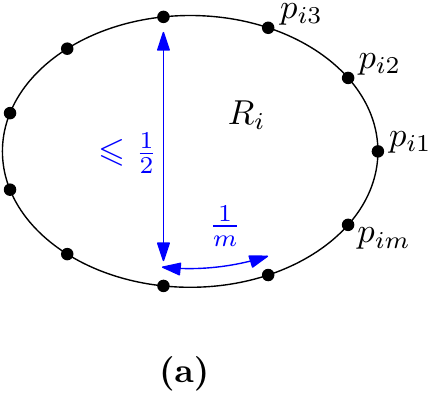}
	\end{subfigure}
	\begin{subfigure}[t]{0.59\textwidth}
		\centering
		\includegraphics{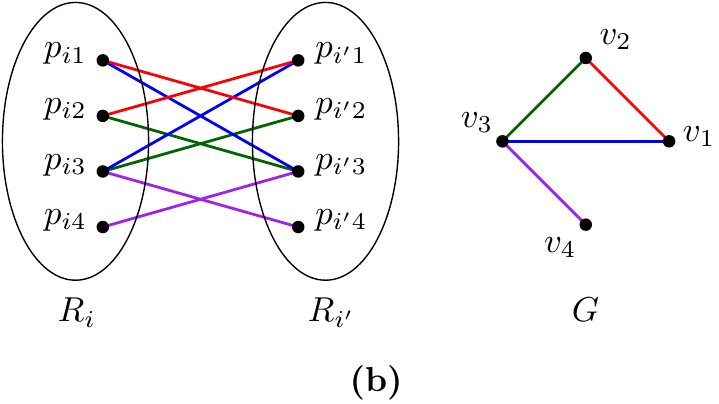}
	\end{subfigure}
	\caption{{\bf(a)} An $m$-point ring gadget $R_i$. 
		{\bf(b)} Two rings $R_i$ and $R_{i'}$.
			The edges drawn between $R_i$ and $R_{i'}$ represent distances
			equal to ${2}^{\max(i,i')}$, and the other distances between a point
			in $R_i$ and a point in $R_{i'}$ are ${2}^{\max(i,i')}-(1/m)$
		\label{fig:ring}}
\end{figure}

Let us first build $Y$, and its associated metric $d_Y$.
We define an $m$-point {\it ring gadget} (\figurename~\ref{fig:ring}a)
as a set $R_i=\{p_{i1},..,p_{im}\}$  of $m$ points
spaced regularly on a circle of perimeter 1, so that the distance between 
any two points is the usual distance along this circle:
\begin{equation}
	d_Y(p_{ij},p_{ij'}) =(1/m) \cdot \min{(j'-j, m+j-j')} 
	\text{ whenever } 1\leq j \leq j' \leq m.
	\label{eq:onering}
\end{equation}

\begin{figure}[h]
	\centering
	\includegraphics{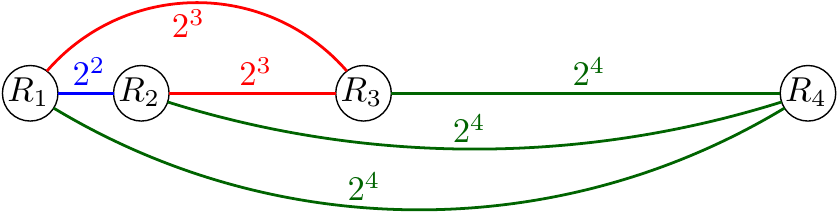}
	\caption{Distances between rings. For instance, the distance
		between any point in $R_2$ and any point in $R_4$ is ${2}^4$ or
		${2}^4-(1/m)$.
	\label{fig:rings}}
\end{figure}

We define $Y=R_1\cup \dots \cup R_k$ as the disjoint union of $k$ ring gadgets. 
The index $j$ of the point $p_{ij} \in R_i$ 
corresponds to the vertex  $v_j$ of $G$. 
The distances between points in different rings 
are defined as follows. 
(See \figurename~\ref{fig:ring}b and \figurename~\ref{fig:rings}.)
For any $i,i' \in \{1,\dots,k\}$ such that $i \neq i'$, and for any 
$j,j' \in \{1,\dots,m\}$, 
\begin{equation}
	d_Y(p_{ij},p_{i'j'}) = 
		\begin{cases}
			 {2}^{\max(i,i')} & \text{if } (v_j,v_{j'}) \in E, \text{ and}	\\
			 {2}^{\max(i,i')}-(1/m) & \text{otherwise}.
		\end{cases}
	\label{eq:distrings}
\end{equation}
Thus, the distance between two vertices in different rings is a power of ${2}$ 
if and only if their associated vertices in $G$ are connected by an edge. The
distance between two points in the same ring is given by Equation~\eqref{eq:onering}.

The pattern set $X$ consists of $k$ distinct points $x_1,\dots,x_k$.
We associate with $X$ the following distance function $d_X$:
\[
	d_X(x_i,x_{i'})=\begin{cases}
		0 & \text{if  $i=i'$, and} \\
		 {2}^{\max(i,i')}\rho  & \text{otherwise.}
	\end{cases}
\]

We now prove that the two spaces $(X,d_X)$ and $(Y,d_Y)$ are metric spaces with doubling
dimension $\log_2 3$. 

\begin{lemma}\label{lem:Ymetric}
The distance function $d_Y:Y^2 \to \IR$  defined above is a metric.
\end{lemma}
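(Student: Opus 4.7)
The plan is to verify the three axioms of a metric for $d_Y$. Non-negativity with the identity of indiscernibles, and symmetry, are immediate from Equations~\eqref{eq:onering} and~\eqref{eq:distrings}, once one reads Equation~\eqref{eq:onering} symmetrically in the indices $j,j'$; the symmetry of $\max(i,i')$ takes care of the inter-ring definition. All the substance therefore lies in establishing the triangle inequality $d_Y(p,r) \leq d_Y(p,q) + d_Y(q,r)$ for three points $p,q,r \in Y$, which I would prove by a case analysis on the number of distinct rings containing $\{p,q,r\}$.

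When all three points lie in a single ring $R_i$, the inequality reduces to that of the arc-length metric on a circle of perimeter $1$, which is classical. When exactly two of the points share a ring, say $p,q \in R_i$ and $r \in R_{i''}$ with $i \neq i''$, I would exploit the observation, immediate from Equation~\eqref{eq:distrings}, that $d_Y(p,r)$ and $d_Y(q,r)$ both belong to the two-element set $\{2^{\max(i,i'')},\, 2^{\max(i,i'')} - 1/m\}$. Hence they differ by at most $1/m$ and each is at least $1$, while $d_Y(p,q) \leq 1/2$. The three required inequalities then fall out easily: the one with $d_Y(p,q)$ on the left is trivial because the two inter-ring sides dwarf $1/2$; the other two use that $d_Y(p,q)$ is either $0$ (when $p = q$) or at least $1/m$, which suffices to absorb the discrepancy between $d_Y(p,r)$ and $d_Y(q,r)$.

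The remaining case — three points in three distinct rings — is where I expect the only real work, since the accumulated $-1/m$ corrections must be controlled; this is where the hypothesis $m \geq 24$ enters. I would label the ring indices so that $i < i' < i''$, note that every pairwise distance is then of the form $2^s$ or $2^s - 1/m$ for some $s \in \{i', i''\}$, and compare the longest side (at most $2^{i''}$) to the sum of the other two (at least $(2^{i'} - 1/m) + (2^{i''} - 1/m) = 2^{i'} + 2^{i''} - 2/m$). The triangle inequality then reduces to $2^{i'} \geq 2/m$, which holds comfortably since $i' \geq 2$ and $m \geq 24$. The two remaining triangle inequalities, in which the short side $2^{i'}$ sits on the left, are even easier: their right-hand sides are bounded below by $2(2^{i''} - 1/m) > 2^{i'}$. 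Exhausting these cases yields the metric property.
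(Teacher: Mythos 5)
Your proof is correct and follows essentially the same route as the paper's: verify the degenerate axioms, then establish the triangle inequality by a case analysis keyed to how many distinct rings the three points occupy, exploiting the gap between within-ring distances (at most $1/2$) and between-ring distances (close to a power of $2$, hence at least $4 - 1/m$), and observing that a within-ring distance of at least $1/m$ absorbs the $1/m$ discrepancy between two between-ring distances. One small remark: the hypothesis $m \geq 24$ is not actually needed here (any $m \geq 1$ gives $2^{i'} \geq 4 > 2/m$); in the paper it is reserved for the doubling-dimension bound of Lemma~\ref{lem:Ydoubling}.
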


\begin{proof}
First observe that $d_Y$ is non-negative, symmetric, and 
that $d_Y(p,p')=0$ if and only if $p=p'$.
Within one ring, $d_Y$ respects the triangle inequality because it
is the usual distance along a circle. So we only need to prove that
the triangle inequality holds for points lying in two or three different rings.
Let  $i,i',i''\in \{1,\dots,k\}$ and $j,j',j'' \in \{1,\dots,m\}$.  
If $i$, $i'$ and $i''$ are distinct,  we have
\[
	d_Y(p_{ij},p_{i'j'})+d_Y(p_{i'j'},p_{i''j''}) 
		 \geq {2}^{\max(i,i')} + {2}^{\max(i',i'')} - \frac 2 m 
		 \geq {2}^{\max(i,i'')} \geq d_Y(p_{ij},p_{i''j''}).
\]
If $i \neq i'$ and $j' \neq j''$, then we have
\[
	d_Y(p_{ij},p_{i'j'})+d_Y(p_{i'j'},p_{i'j''}) 
	 \geq \left({2}^{\max(i,i')} - \frac 1 m \right)+ \frac 1 m 
	 \geq {2}^{\max(i,i')} \geq d_Y(p_{ij},p_{i'j''}).
\]
If $i \neq i'$, then we have
\[
	d_Y(p_{ij},p_{i'j'})+d_Y(p_{i'j'},p_{ij''}) 
	 \geq 2\left({2}^{\max(i,i')} - \frac 1 m \right)
	 \geq {2}^i \geq d_Y(p_{ij},p_{ij''}).
\]
This completes the proof that $d_Y$ is a metric.
\end{proof}

\begin{lemma}\label{lem:Ydoubling}
The metric space $(Y,d_Y)$ has doubling dimension $\log_2 3$.
\end{lemma}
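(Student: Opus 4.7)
The plan is to show that every ball $b=b(p_{ij},r)$ in $Y$ is covered by at most three balls of radius $r/2$, which yields the claimed bound on the doubling dimension. The first step is to describe $b$ structurally. From $p_{ij}$ the distance to other points of $R_i$ lies in $[0,1/2]$, while the distance to any point of $R_{i'}$ with $i'\neq i$ equals $2^{\max(i,i')}$ (for edges of $v_j$) or $2^{\max(i,i')}-1/m$ (for non-edges). Consequently, for each ring $R_{i'}$, the intersection $b\cap R_{i'}$ is one of: empty; an arc (only when $i'=i$ and $r<1/2$); all of $R_{i'}$; or exactly the non-edge neighbours of $v_j$ in $R_{i'}$, the last ``transitional'' case occurring precisely when $r\in[2^{\max(i,i')}-1/m,\,2^{\max(i,i')})$.

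When $r<1/2$ the ball is an arc of $R_i$ of length at most $2r$, hence covered by two arcs of length $r$, each a ball of radius $r/2$. For $r\geq 1/2$ the whole ring $R_i$ lies in $b$, so I would let $s$ be the largest index with $R_s\subseteq b$; then $b$ equals $R_1\cup\cdots\cup R_s$, possibly together with the non-edge points in $R_{s+1}$. In the ``clean'' subcase $r\in[2^s,\,2^{s+1}-1/m)$ (or $s=k$), I would split $b$ into the pieces $R_s$ and $R_1\cup\cdots\cup R_{s-1}$: the first has diameter $1/2\leq r/2$ and the second has diameter at most $2^{s-1}\leq r/2$, so a single ball of radius $r/2$ centered in each piece suffices, for two balls total. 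The bottom transition $r\in[2^i-1/m,\,2^i)$, where $b$ equals $R_i$ together with the non-edges of $v_j$ in $R_1,\dots,R_{i-1}$, is also handled with two balls: one anywhere in $R_i$ covers $R_i$, while a ball centered at $p_{i-1,j}$ of radius $r/2\geq 2^{i-1}-1/(2m)$ covers $R_{i-1}$ entirely as well as every non-edge of $v_j$ in $R_1,\dots,R_{i-2}$, since those points lie at distance $2^{i-1}-1/m$ from $p_{i-1,j}$.

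The only case genuinely requiring three balls is the top transition $r\in[2^{s+1}-1/m,\,2^{s+1})$ with $s\geq i$ and $s+1\leq k$, where $b$ contains $R_1\cup\cdots\cup R_s$ together with the non-edges of $v_j$ in $R_{s+1}$. Here I would take three covering balls centered in $R_{s+1}$, in $R_s$, and in $R_1\cup\cdots\cup R_{s-1}$ respectively. Since $r/2\geq 2^s-1/(2m)>1/2$, each of the first two balls engulfs its entire ring (in particular the desired non-edge subset of $R_{s+1}$); and since $r/2\geq 2^{s-1}$, the third ball covers the diameter-$2^{s-1}$ set $R_1\cup\cdots\cup R_{s-1}$. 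When $s\leq 1$ the third piece is empty and only two balls are needed.

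The main obstacle is the arithmetic bookkeeping: one must verify the diameter-versus-radius inequalities uniformly across every interval of the parameter $r$, using the hypothesis $m\geq 24$, and confirm that a covering ball placed at a specific vertex position such as $p_{i-1,j}$ really does reach all the required non-edges in the lower rings (at distance exactly $2^{i-1}-1/m$) while not needing to reach the corresponding edges (at distance $2^{i-1}$), which are not in $b$ in the first place. Because no case needs more than three balls of radius $r/2$, the space $(Y,d_Y)$ has doubling dimension at most $\log_2 3$.
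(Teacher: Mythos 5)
Your overall plan---a structural description of $\ball(p_{ij},r)$ followed by a case analysis on $r$---matches the paper's, and your treatment of the cases with $r\geq 2^i-1/m$ (clean, bottom transition, top transition) is essentially sound. However, there are two genuine gaps in the branch where $\ball(p_{ij},r)\subseteq R_i$, that is, $r<2^i-1/m$.

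First, for $r<1/2$ you claim the ball, an arc of length at most $2r$, is covered by two balls of radius $r/2$. This continuous intuition fails because the covering balls must be centered at ring points spaced $1/m$ apart: a radius-$r/2$ ball contains only $2\lfloor mr/2\rfloor+1$ points of $R_i$, while $\ball(p_{ij},r)$ contains $2\lfloor mr\rfloor+1$, so whenever $\lfloor mr\rfloor$ is odd two balls are strictly too small by a counting argument. Concretely, with $m=24$ and $r=3/24$ the ball has $7$ points while each radius-$r/2$ ball holds only $3$, so two balls cover at most $6<7$. The paper's proof gets around this by adding a third ball centered at $p_{ij}$ itself, alongside the two balls at positions $j\pm\lceil mr/2\rceil$.

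Second, your three subcases for $r\geq 1/2$ collectively begin only at $r\geq 2^i-1/m\geq 2-1/m$, so the range $r\in[1/2,\,2^i-1/m)$ is left uncovered. In this range $\ball(p_{ij},r)=R_i$ exactly (every other ring lies at distance $\geq 2^i-1/m>r$), and your blanket claim that for $r\geq 1/2$ ``$b$ equals $R_1\cup\cdots\cup R_s$'' is false there. In particular for $r\in[1/2,1)$ one has $r/2<1/2=\diam(R_i)$, so no single ball centered in $R_i$ engulfs the whole ring, and the two-piece clean-case decomposition (which needs $r\geq 2^s$) does not apply. The paper folds all of $r<2^i-1/m$ into one case ($t<i$) and, when $\lceil mr/2\rceil\geq m/4$ (which includes all $r\geq 1/2$), covers $R_i$ with three balls centered roughly $m/3$ positions apart around the circle. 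Both gaps are repairable along the paper's lines, but as written the plan does not yield a correct proof for $r<2^i-1/m$.
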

\begin{proof}
Let $b_0 =\ball(p,r)$ be an arbitrary ball, where $p=p_{ij}$. 
We assume that the index $j'$ in $p_{ij'}$ is defined modulo $m$,
hence $p_{ij'}=p_{i(j'+m)}$. Let $t = \lfloor \log_{2} (r+1/m) \rfloor$.

Suppose that  $t<i$. Then $\log_{2} (r+1/m) < i$, and thus $r < {2}^i-1/m$.
As the distance from any point in $R_i$ to any point in another ring is at least
${2}^i-1/m$, it follows that  $b_0 \subseteq R_i$. 
\begin{itemize} 
	\item If $\lceil mr/2 \rceil \geq m/4$, then $1+mr/2\geq m/4$.
		As $m \geq 24$, it implies that $r \geq 5/12$. 
		Let $c=p_{i1}$, $c'=p_{i(1+\lfloor m/3 \rfloor)}$ and
		$c''=p_{i(1-\lfloor m/3 \rfloor)}$.
		As $m \geq 24$, we have $7m/24 \leq \lfloor m/3 \rfloor \leq m/3$,
		and thus $d_Y(c,c')=d_Y(c,c'') \leq 1/3$ and $d_Y(c',c'') \leq 1-2\cdot(7/24)= 5/12$.
		So we can cover the whole ring $R_i$ (and thus $b_0$) 
		with the balls of radius $r/2$ centered at $c$, $c'$ and $c''$.
	\item Otherwise, the points 
		$p'=p_{i(j+\lceil mr/2 \rceil)}$ and $p''=p_{i(j-\lceil mr/2 \rceil)}$ 
		lie in the open half-circle centered at $p$,
		that is, $d_Y(p,p') < 1/4$ and $d_Y(p,p'') < 1/4$.
		As $mr/2 \leq \lceil mr/2 \rceil < mr/2+1$, it implies that 
		$r/2 \leq d_Y(p,p') < r/2+1/m$ and $r/2 \leq d_Y(p,p'') < r/2+1/m$. 
		Therefore $\ball(p',r/2) \cup \ball(p'',r/2)$
		contains $b_0 \setminus \{p\}$, and thus 
		$b_0 \subseteq \ball(p,r/2) \cup \ball(p',r/2)\cup \ball(p'',r/2)$.
\end{itemize}

Now suppose that $t \geq i$ and $t \geq 3$. As $t \leq \log_{2} (r+1/m)< t+1$, we have 
${2}^t-1/m \leq r < {2}^{t+1}-1/m$.
For any $i' \geq t+1$, the distance from $p$ to any point in $R_{i'}$ is at least 
${2}^{t+1}-1/m$, so $b_0 \subseteq R_1 \cup \dots \cup R_t$.
We consider three balls $b_1$, $b_{t-1}$ and $b_t$ of radius $r/2$  
centered at an arbitrary point  $c_1 \in R_1$, $c_{t-1} \in R_{t-1}$ 
and $c_t \in R_t$, respectively.
Since 
$
	\frac r 2 > \frac{{2}^t}2-\frac 1{2m} 
	\geq \frac 2 2-\frac 1{48} = \frac{47}{48}, 
$
and  $\diam(R_t)\leq 1/2$, we have $R_t \subseteq b_t$.
Similarly, we have $R_{t-1} \subseteq b_{t-1}$. 
Let  $q \in R_1 \cup \dots \cup R_{t-2}$. 
Then we have $d_Y(c_1,q) \leq 2^{t-2}$.
As $r/2 \geq 2^{t-1}-1/(2m) \geq 2^{t-2}$, it follows that $q \in b_1$. So we just
proved that $R_1 \cup \dots \cup R_{t-2} \subseteq b_1$.
Therefore, 
$b_0 \subseteq R_1 \cup \dots \cup R_t \subseteq b_1\cup b_{t-1} \cup b_t$.

Similarly, if $t \geq i$ and $t=1$, then $t=1=i$ and we have $b_0 \subseteq b_1$.
If $t \geq i$ and $t=2$, then $b_0 \subseteq b_1 \cup b_2$.

In any case, $b_0$ is contained in the union of at most 3 balls of half its radius, and thus $(Y, d_y)$ has doubling dimension $\log_2 3$.
\end{proof}
A simpler version of the arguments in the proofs of Lemma~\ref{lem:Ymetric}
and~\ref{lem:Ydoubling} yields the following.
\begin{lemma}\label{lem:Xdoubling}
	$(X,d_X)$ is a metric space of doubling dimension $\log_2 3$.
\end{lemma}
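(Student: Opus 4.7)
The plan is to verify the metric axioms first, then bound the doubling dimension by a case analysis on the radius of an arbitrary ball, mirroring the proof of Lemma~\ref{lem:Ydoubling} without the complications caused by the ring gadgets.

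For the metric axioms, the only non-trivial point is the triangle inequality. I would observe that $d_X$ is in fact an \emph{ultrametric}: for any three distinct indices $i, i', i''$, since $\max(i, i'') \leq \max\bigl(\max(i,i'), \max(i',i'')\bigr)$, the monotonicity of $t \mapsto 2^t\rho$ gives $d_X(x_i, x_{i''}) \leq \max\bigl(d_X(x_i, x_{i'}), d_X(x_{i'}, x_{i''})\bigr)$, which is strictly stronger than the triangle inequality. Symmetry, non-negativity, and $d_X(x_i, x_{i'}) = 0 \Leftrightarrow i = i'$ are immediate from the definition.

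For the doubling dimension, I would let $b_0 = \ball(x_i, r)$ be arbitrary and set $t = \lfloor \log_2(r/\rho) \rfloor$ (and treat $r < \rho$ trivially, since then $b_0 = \{x_i\}$). Because $d_X(x_i, x_{i'}) = 2^{\max(i, i')}\rho$ for $i \neq i'$, a point $x_{i'}$ lies in $b_0$ iff $\max(i, i') \leq t$. So either $t < i$, in which case $b_0 = \{x_i\}$ and one ball suffices, or $t \geq i$, in which case $b_0 = \{x_1, \ldots, x_{\min(t,k)}\}$.

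In the second case, since $\rho \leq 2^t\rho \leq r < 2^{t+1}\rho$, the half-radius satisfies $t' := \lfloor \log_2(r/(2\rho))\rfloor = t-1$, so $\ball(x_j, r/2) = \{x_1,\ldots,x_{\min(t-1,k)}\}$ whenever $j \leq t-1$, and $\ball(x_j, r/2) = \{x_j\}$ when $j \geq t$. Taking $\ball(x_1, r/2)$ together with $\ball(x_t, r/2)$ (when $t \leq k$) therefore covers $b_0$ with at most two balls of radius $r/2$; when $t > k$, a single ball $\ball(x_1, r/2) = X$ already covers $b_0$. In every case $b_0$ is covered by at most $2 \leq 2^{\log_2 3}$ balls of half its radius, so $(X, d_X)$ has doubling dimension at most $\log_2 3$, matching the bound stated for $(Y, d_Y)$.

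The argument is structurally straightforward; the only mild obstacle is carefully tracking the edge cases where $r$ is close to a threshold $2^j\rho$ or where the would-be covering ball $\ball(x_1, r/2)$ degenerates (i.e.\ $t = 1$ or $t > k$), which is why I organize the analysis around the integer $t$.
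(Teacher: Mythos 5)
Your proof is correct, and it follows the approach the paper intends: a simplified version of the case analysis on $t = \lfloor \log_2(r/\rho)\rfloor$ from the proof of Lemma~\ref{lem:Ydoubling}, with the ring-gadget complications stripped away, and the ultrametric observation dispatching the simplified analogue of Lemma~\ref{lem:Ymetric}. You in fact establish the stronger statement that two half-radius balls always suffice (doubling dimension at most $1$), which of course implies the stated $\log_2 3$ bound.
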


\subsection{Proof of Correctness}
We now prove that our reduction of $k$-clique to the $\rho$-distortion problem 
is correct.
So given an instance $G=(V,E)$ of the $k$-clique problem,
we construct the metric spaces $(X,d_X)$ and $(Y,d_Y)$ as described
above. These two metric spaces form an instance of the $\rho${-distortion problem}.
We need to show that these two instances of $k$-clique and $\rho$-distortion are
equivalent.

We first assume that $G$ is a positive instance of $k$-clique.
So there is a clique $\{v_{j_1},\dots,v_{j_k}\}$ of size $k$ in $G$.
Let $\sigma:X \to Y$ be the matching defined by $\sigma(x_i)=p_{ij_i}$
for all $i \in \{1,\dots,k\}$. For any $i \neq i'$, there is an edge
between $v_{j_i}$ and $v_{j_{i'}}$ in $G$. Therefore, we have 
$
d_Y( p_{ij_i}, p_{i'j_{i'}}) = {2}^{\max(i,i')} 
= \frac{1}{\rho} d_X(x_i,x_{i'}).
$
It means that 
$d_Y(\sigma(x_i),\sigma(x_{i'}))=\frac{1}{\rho} d_X(x_i,x_{i'})$, 
and thus
$\sigma$ is a solution to our instance of the $\rho$-distortion problem.

We now prove the converse. Let $\sigma:X \to Y$ be a solution to 
our instance of the $\rho$-distortion problem. We denote $y_i=\sigma(x_i)$ for
each $x_i \in X$. Each $y_i$ corresponds to a vertex $w_i \in V$. More precisely,
we have $y_i=p_{i'j'}$ for some indices $i',j'$, and we set $w_i=v_{j'}$.
We want to show that $w_1,\dots,w_k$ is a $k$-clique in $G$. 

\begin{lemma}
\label{lem:natural_gadget}
Each ring contains exactly one point $y_i$, and $y_i \in R_i$ whenever $i \geq 3$.
\end{lemma}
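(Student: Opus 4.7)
The plan is to rely only on the lower-bound side of the $\rho$-matching inequality, which after the scaling built into $d_X$ becomes $d_Y(y_i,y_{i'}) \geq 2^{\max(i,i')}$ for all distinct $i,i'$. The upper bound, carrying the loose factor $\rho^2$, will not be needed; the rigidity of the ring structure alone pins down the matching.

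For the first assertion I would observe that a single ring has diameter $1/2$, whereas for distinct $i,i'$ one always has $\max(i,i')\geq 2$ and hence $d_Y(y_i,y_{i'})\geq 4$. Thus no two images can lie in the same ring, and a pigeonhole argument upgrades the ring-assignment map $i \mapsto a_i$, where $y_i\in R_{a_i}$, into a permutation of $\{1,\dots,k\}$.

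For the second assertion I would combine the lower bound with the inter-ring distance formula in~\eqref{eq:distrings}, which gives $d_Y(y_i,y_{i'}) \leq 2^{\max(a_i,a_{i'})}$ whenever $a_i\neq a_{i'}$. This yields the key arithmetic condition $\max(a_i,a_{i'}) \geq \max(i,i')$ for every pair of distinct indices. I then prove $a_j=j$ by downward induction on $j$, starting at $j=k$ and descending to $j=3$: assuming $a_l=l$ for every $l>j$, the permutation $a$ restricts to a permutation of $\{1,\dots,j\}$, and if we had $a_j<j$, then applying the condition to every pair $(i',j)$ with $i'\in\{1,\dots,j-1\}$ would force $a_{i'}\geq j$, hence $a_{i'}=j$, for every such $i'$. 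Since $j\geq 3$ produces at least two such $i'$, this contradicts injectivity of $a$, so $a_j=j$ and the induction closes.

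The only subtlety I anticipate is keeping track of the $1/m$ correction in~\eqref{eq:distrings}: the claimed upper bound $d_Y(y_i,y_{i'}) \leq 2^{\max(a_i,a_{i'})}$ is valid because the cross-ring distance is either that clean power of two or exactly $1/m$ smaller, so the bound holds automatically. Everything else is bookkeeping, and no use of the upper distortion constraint or of any property of the edge set $E$ is required for this lemma.
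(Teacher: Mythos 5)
Your proof is correct and takes essentially the same approach as the paper: both establish that the ring-assignment is a permutation from the lower distortion bound, then force $y_j\in R_j$ by a downward induction in which placing $y_j$ in a lower-indexed ring would require at least two of the remaining $y_{i'}$ (with $i'<j$) to land in the same high-indexed ring, contradicting injectivity. Your formulation via the clean inequality $\max(a_i,a_{i'})\geq\max(i,i')$ is a slightly more explicit rendering of the counting argument the paper leaves implicit in the sentence ``This is only possible if $y_k\in R_k$.''
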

\begin{proof}
Let $1 \leq i_1 < i_2 \leq k$. Then $d_X(x_{i_1},x_{i_2})={2}^{i_2}\rho$. As $\sigma$
is a solution to the $\rho$-distortion problem, it implies that 
$
	d_Y(y_{i_1},y_{i_2})=d_Y(\sigma(x_{i_1}),\sigma(x_{i_2}))\geq {2}^{i_2} \geq 4.
$
As the diameter of each ring is at most $1/2$, it follows that $y_{i_1}$ and 
$y_{i_2}$
lie in different rings. Hence each ring contains exactly one point $y_i$.

We now prove the second part of the lemma, so we may assume that $k \geq 3$.
The point $x_k$ is at distance ${2}^{k}\rho$ from each 
point $x_1,\dots,x_{k-1}$. As $\sigma$ is a solution to the $\rho$-distortion problem,
the distance from $y_k$ to any point $y_1,\dots,y_{k-1}$ is at least 
${2}^k$, and since it is the largest interpoint distance in $Y$, it implies
that $d_Y(y_i,y_k)={2}^k$ for any $1\leq i < k$.
This is only possible if $y_k \in R_k$.

Applying the same argument repeatedly to $y_{k-1},\dots,y_3$ shows that $y_i \in R_i$
for any $i \in \{3,\dots,k-1\}$. (As $d_X(x_1,x_i)=d_X(x_2,x_i)$ for any $i \geq 3$,
the points $x_1$ and $x_2$ are indistinguishable, hence we have either $y_1 \in R_1$ and
$y_2 \in R_2$, or $y_1 \in R_2$ and $y_2 \in R_1$.)
\end{proof}

We can now prove that:
\begin{lemma}
The vertices $\{w_1,..,w_k\}$ form a $k$-clique in $G$.
\end{lemma}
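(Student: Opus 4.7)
The plan is to fix an arbitrary pair $i<i'$ and argue that $(w_i,w_{i'})$ must be an edge in $G$, by squeezing the value of $d_Y(y_i,y_{i'})$ between the lower bound coming from the $\rho$-matching condition and the two possible inter-ring distances given by Equation~\eqref{eq:distrings}. First I would note that Lemma~\ref{lem:natural_gadget} already tells us that $y_i$ and $y_{i'}$ lie in distinct rings. Writing $y_i\in R_{a_i}$, I would observe that $\max(a_i,a_{i'})=i'$ in every case: if $i'\geq 3$ then $a_{i'}=i'$ and $a_i\neq i'$; if $i'=2$ then $i=1$ and $\{a_1,a_2\}=\{1,2\}$.

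Next I would apply the lower-bound side of the $\rho$-matching inequality~\eqref{eq:distortion}:
\[
d_Y(y_i,y_{i'})\ \geq\ \tfrac{1}{\rho}\,d_X(x_i,x_{i'})\ =\ 2^{i'}.
\]
Because $y_i$ and $y_{i'}$ belong to different rings, Equation~\eqref{eq:distrings} forces $d_Y(y_i,y_{i'})$ to be either $2^{\max(a_i,a_{i'})}=2^{i'}$ or $2^{i'}-1/m$. The lower bound just established rules out the second option, so $d_Y(y_i,y_{i'})=2^{i'}$, and by the definition of $d_Y$ this means exactly that $(v_{j_i},v_{j_{i'}})=(w_i,w_{i'})$ is an edge of $G$.

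It remains to argue that $w_1,\dots,w_k$ are pairwise distinct so that they actually form a $k$-clique (rather than a multiset). If two of them coincide, say $w_i=w_{i'}$, then the corresponding inter-ring distance would be $2^{i'}-1/m$ (since $G$ has no self-loops, $(w_i,w_{i'})\notin E$), contradicting the lower bound just derived. So all $w_i$ are distinct, and applying the edge-argument to every pair shows that $\{w_1,\dots,w_k\}$ is a $k$-clique.

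I do not anticipate a genuine obstacle here: the only mild subtlety is handling the pair $i=1,i'=2$ separately, because Lemma~\ref{lem:natural_gadget} does not pin down whether $y_1\in R_1$ or $y_1\in R_2$; this is absorbed by the observation $\max(a_1,a_2)=2$ regardless of the assignment, after which the same squeezing argument applies uniformly.
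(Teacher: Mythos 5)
Your proof is correct and follows the same squeezing argument as the paper: Lemma~\ref{lem:natural_gadget} places the $y_i$ in distinct rings, and the $\rho$-matching lower bound $d_Y(y_i,y_{i'})\geq 2^{\max(i,i')}$ rules out the shorter inter-ring distance from Equation~\eqref{eq:distrings}, forcing an edge. You additionally spell out two details the paper leaves implicit --- that $\max(a_i,a_{i'})$ equals the larger of $i,i'$ even for the ambiguous pair $\{1,2\}$, and that the $w_i$ are pairwise distinct because $G$ has no self-loops --- both of which are reasonable clarifications but do not change the approach.
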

\begin{proof}
Let $1 \leq i' < i \leq k$. By Equation~\eqref{eq:distrings},
the distance $d_Y(y_i,y_{i'})$ can take only two values: it
is equal to ${2}^i$ if $(w_i,w_{i'}) \in E$, and ${2}^i-(1/m)$
otherwise. As $\sigma$ is a solution to the $\rho$-distortion problem,
we also have
$
	d_Y(y_i,y_{i'}) = d_Y(\sigma(x_i),\sigma(x_{i'}))
		\geq \frac 1 {\rho}d_X(x_i,x_{i'})= {2}^i.
$
It implies
that $d_Y(y_i,y_{i'})={2}^i$, and that $(w_{i},w_{i'}) \in E$.
\end{proof}

In summary, we obtained the following result.

\begin{theorem}\label{thm:reduction}
	The graph $G$ admits a $k$-clique if and only if 
	$(X,d_X)$, $(Y,d_Y)$ is a positive instance of the $\rho$-distortion problem.
\end{theorem}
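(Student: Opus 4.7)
The plan is to prove the two directions of the equivalence separately, essentially assembling the work already done in Lemmas~\ref{lem:natural_gadget} and the subsequent clique lemma.

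For the forward direction, I would suppose that $G$ admits a $k$-clique $\{v_{j_1},\ldots,v_{j_k}\}$ and exhibit an explicit $\rho$-matching. The natural candidate is $\sigma(x_i)=p_{ij_i}$: one point is chosen per ring, and the ring index matches the point index of $X$. Since any two vertices of the clique are joined by an edge of $G$, Equation~\eqref{eq:distrings} gives $d_Y(\sigma(x_i),\sigma(x_{i'}))=2^{\max(i,i')}=(1/\rho)d_X(x_i,x_{i'})$ for all $i\neq i'$. Because the factor $1/\rho$ satisfies both inequalities in~\eqref{eq:distortion} with equality on one side and slack on the other, $\sigma$ is a valid $\rho$-matching.

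For the reverse direction, I would start from a $\rho$-matching $\sigma$ and recover a clique from the associated vertices $w_i=v_{j'}$ (where $y_i=\sigma(x_i)=p_{i'j'}$). The first ingredient is Lemma~\ref{lem:natural_gadget}, which forces each ring to contain exactly one image $y_i$, with $y_i\in R_i$ whenever $i\geq 3$; the cases $i\in\{1,2\}$ may be swapped between $R_1$ and $R_2$, but this does not affect the clique property because the two distinguished rings still contribute two distinct vertices $w_1,w_2\in V$. For each pair $i'<i$, the $\rho$-matching condition gives $d_Y(y_i,y_{i'})\geq (1/\rho)d_X(x_i,x_{i'})=2^{i}$. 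But by Equation~\eqref{eq:distrings} the only allowed value of $d_Y(y_i,y_{i'})$ that is at least $2^i$ is $2^i$ itself, which by definition occurs if and only if $(w_i,w_{i'})\in E$. Hence all pairs of the $w_i$ are adjacent in $G$, yielding the desired clique.

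The main obstacle is really just the mild bookkeeping for small indices $i\in\{1,2\}$ where $x_1$ and $x_2$ are indistinguishable in $d_X$; I would handle this by noting that although the identification between $\{x_1,x_2\}$ and $\{R_1,R_2\}$ is not canonical, the resulting pair $\{w_1,w_2\}$ is still a pair of distinct vertices whose mutual distance in $Y$ must be $2^{\max(1,2)}=2$, which again by~\eqref{eq:distrings} forces the corresponding edge in $G$. With both directions established the theorem follows immediately.
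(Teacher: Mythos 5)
Your proof is correct and takes essentially the same approach as the paper: the forward direction exhibits the same explicit matching $\sigma(x_i)=p_{ij_i}$ and verifies Inequality~\eqref{eq:distortion} directly, and the reverse direction invokes Lemma~\ref{lem:natural_gadget} and then the same distance argument to force each pair $(w_i,w_{i'})$ to be an edge. (One trivial arithmetic slip: near the end you write $2^{\max(1,2)}=2$, but $\max(1,2)=2$ so the required distance is $4$; this does not affect the argument.)
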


\subsection{Consequences}

The construction of $(X,d_X)$ and $(Y,d_Y)$ from $G$ is performed in polynomial time.
It is also an FPT-reduction with parameter $k$. As $k$-clique is NP-complete
and $W[1]$-hard, it follows that:
\begin{corollary}
	\label{cor:hardness}
	The $\rho$-distortion problem for doubling spaces of dimension $\log_2 3$ 
	is NP-hard, and is $W[1]$-hard when  parameterized
	by $k$.
\end{corollary}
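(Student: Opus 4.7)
The plan is to observe that Corollary~\ref{cor:hardness} is an immediate consequence of Theorem~\ref{thm:reduction} together with the fact that the construction of Section~\ref{sec:construction} is an FPT-reduction that produces spaces of doubling dimension $\log_2 3$. So the task is just to verify each of these three properties and assemble them.

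First I would check that the reduction runs in polynomial time. Given $G=(V,E)$ with $|V|=m$ and parameter $k$, the sets $X$ and $Y$ have sizes $k$ and $km$. All pairwise distances in $X$ are given by the formula depending only on $i,i'$, and all pairwise distances in $Y$ are given by Equations~\eqref{eq:onering} and~\eqref{eq:distrings}; the latter requires a single membership test in $E$ per pair. Thus the two metric spaces can be written down in time polynomial in $k$, $m$ and $|E|$, i.e., polynomial in the size of $G$. The technical assumption $m \geq 24$ from Section~\ref{sec:construction} is harmless: if $m < 24$, we may solve $k$-clique directly in constant time.

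Next, I would note that Lemmas~\ref{lem:Ydoubling} and~\ref{lem:Xdoubling} guarantee that both $(X,d_X)$ and $(Y,d_Y)$ have doubling dimension $\log_2 3$, so the output of the reduction is a valid instance of the $\rho$-distortion problem in doubling dimension $\log_2 3$. By Theorem~\ref{thm:reduction}, this instance is positive if and only if $G$ contains a $k$-clique. Since $k$-clique is NP-hard, NP-hardness of the $\rho$-distortion problem follows.

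Finally, for the parameterized statement, I would observe that the reduction is an FPT-reduction with parameter $k$: the parameter of the produced instance (the size of the pattern $X$) equals the parameter of the source instance (the clique size), and the running time is polynomial in the total input size, hence bounded by $f(k)\cdot |G|^{O(1)}$ for $f$ constant. Since $k$-clique is $W[1]$-hard when parameterized by $k$, $W[1]$-hardness of the $\rho$-distortion problem parameterized by $k=|X|$ follows. There is no real obstacle here; the content was all in Theorem~\ref{thm:reduction} and the two doubling-dimension lemmas.
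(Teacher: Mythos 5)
Your proposal is correct and takes essentially the same approach as the paper, which proves the corollary by observing that the construction of Section~\ref{sec:construction} is a polynomial-time FPT-reduction from $k$-clique, that Lemmas~\ref{lem:Ydoubling} and~\ref{lem:Xdoubling} give doubling dimension $\log_2 3$, and that Theorem~\ref{thm:reduction} gives the equivalence of instances. Your write-up just spells out in more detail what the paper states in a single sentence, including the harmless handling of the $m \geq 24$ assumption.
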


Unless the exponential time hypothesis (ETH) is false, our reduction also shows that
that the $\rho$-distortion problem cannot be solved in time $n^{o(k)}$. 
More precisely, it follows from a known hardness
result on $k$-clique~\cite[Theorem 14.21]{Cygan15} that: 
\begin{corollary}\label{cor:ETH}
	The $\rho${-distortion problem} for doubling spaces of dimension $\log_2 3$ 
  cannot be solved in time $f(k) \cdot n^{o(k)}$ for any computable function
	$f$, unless  ETH is false.
\end{corollary}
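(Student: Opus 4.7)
The plan is to derive Corollary~\ref{cor:ETH} by a standard contradiction argument, combining the equivalence of Theorem~\ref{thm:reduction} with the ETH-based lower bound for $k$-clique cited as~\cite[Theorem 14.21]{Cygan15}: assuming ETH, no computable function $f$ and no algorithm can decide $k$-clique on an arbitrary $m$-vertex graph in time $f(k) \cdot m^{o(k)}$.

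First, I would suppose for contradiction that some algorithm $\mathcal{A}$ solves the $\rho$-distortion problem on instances of size $n = |Y|$ in time $f(k) \cdot n^{o(k)}$ for some computable $f$. Given an instance $G$ of $k$-clique with $m$ vertices, and handling the finitely many cases $m < 24$ directly in constant time, I would apply the construction of Section~\ref{sec:construction} in polynomial time to produce the two metric spaces $(X, d_X)$ and $(Y, d_Y)$, both of doubling dimension $\log_2 3$, with $|X| = k$ and $|Y| = km$. By Theorem~\ref{thm:reduction}, running $\mathcal{A}$ on this instance correctly decides whether $G$ contains a $k$-clique.

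Next, I would bound the total running time by the polynomial-in-$km$ cost of the construction, plus $f(k) \cdot (km)^{o(k)}$ for the call to $\mathcal{A}$. Splitting $(km)^{o(k)} = m^{o(k)} \cdot k^{o(k)}$ and absorbing the factor $k^{o(k)}$, which depends only on $k$, into a new computable function $f'(k) = f(k) \cdot k^{o(k)}$, the whole pipeline decides $k$-clique in time $f'(k) \cdot m^{o(k)}$. This contradicts the cited $k$-clique lower bound under ETH, and hence establishes the corollary.

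The hard part is really just a bookkeeping check rather than a mathematical obstacle: I need to confirm that the reduction preserves the parameter $k$ exactly (it does, since $|X| = k$ is built into the construction), that the blow-up in instance size is only a factor $k$ so that $n^{o(k)}$ translates into $m^{o(k)}$ without inflating the exponent, and that $k^{o(k)}$ can be absorbed into a computable function of $k$ alone. Once these three points are verified, the corollary follows at once from Theorem~\ref{thm:reduction}.
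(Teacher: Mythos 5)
Your argument is correct and is exactly the standard FPT-reduction argument that the paper is invoking when it states the corollary ``follows from'' Theorem 14.21 of Cygan et al.\ together with the reduction of Section~\ref{sec:hardness}; the paper gives no more detail than the observation you spell out. One small simplification worth noting: you can sidestep any worry about whether $k^{o(k)}$ is a computable function of $k$ by observing that (after discarding the trivial case $k>m$) we have $n=km\leq m^2$, so $n^{o(k)}\leq m^{o(k)}$ directly, without the split into $m^{o(k)}\cdot k^{o(k)}$.
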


\section{Approximation algorithm for the $\rho$-distortion problem}\label{sec:decision}

In this section, we present an approximation algorithm
for the $\rho$-distortion problem. 
We assume that the doubling dimension is constant, that is, $\delta=O(1)$.
As we saw in Section~\ref{sec:hardness}, the $\rho$-distortion problem
is hard, so we relax the problem slightly: Given a 
parameter $0<\eps\leq 1$, the $(\rho,\eps)$-{\it distortion 
problem} is to find a $(1+\eps)\rho$-matching
whenever a $\rho$-matching exists. 
If there is no $\rho$-matching,
then our algorithm either returns a $(1+\eps)\rho$-matching,
or it does not return any result.

\subsection{Navigating nets}\label{sec:nnet}
Our algorithm  records $Y$ in a {\em navigating net}, which is a data
structure representing $Y$ at different resolutions. 
(See \figurename~\ref{fig:nnet}.)
This structure was
introduced by Krauthgamer and Lee~\cite{KL04}. We will use a slightly
modified version of it. Our version of the navigating net
has the advantages that each layer is an $r$-net of $Y$ (see definition
below),
and that it can easily be computed in $O(n \log \Phi(Y))$ time. On the
other hand, it does not allow efficient deletions.
Several other variations exists. 
In particular, Cole and Gottlieb~\cite{CG06} and Har-Peled and Mendel~\cite{HM06}
presented more involved data structures whose time bounds do not
depend on the spread.

\begin{figure}[h]
	\centering
	\includegraphics{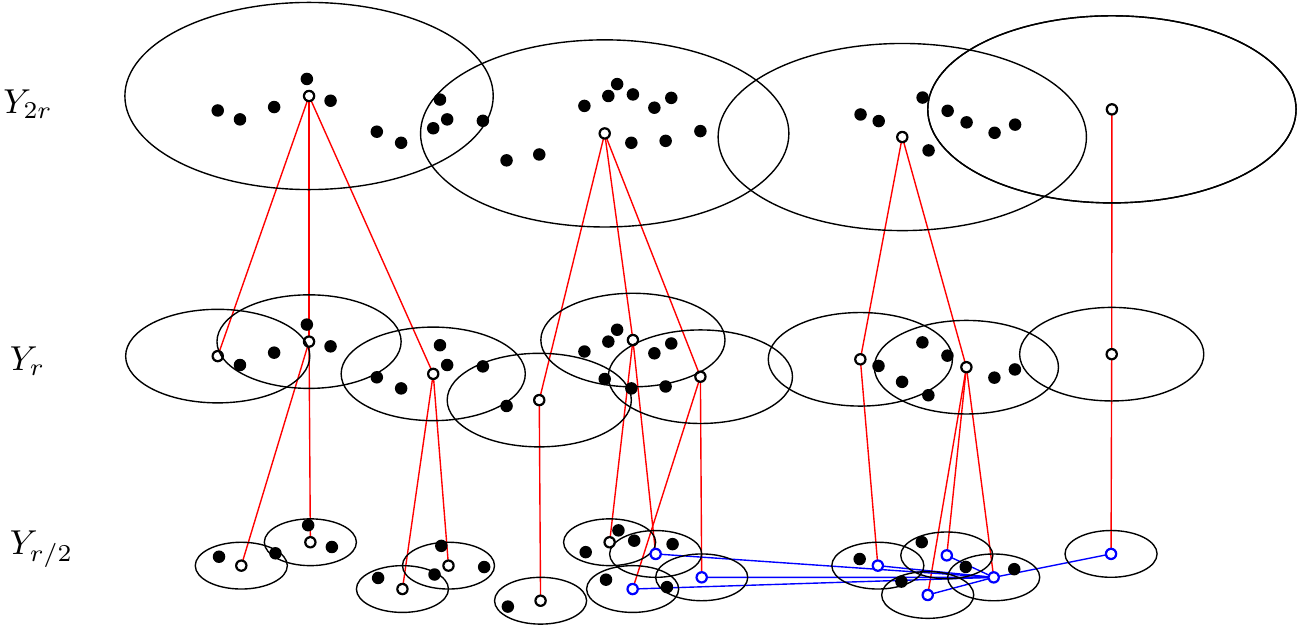}
	\caption{A navigating net at three consecutive scales $r/2$, $r$ and $2r$.
	The white dots represent points of the $r$-nets $Y_{r/2}$, $Y_r$ and
	$Y_{2r}$. The ellipses represent the balls of radius $r/2$, $r$ and $2r$
	centered at the points in $Y_{r/2}$, $Y_r$ and 	$Y_{2r}$, respectively.
	The blue edges are the horizontal edges incident to one of the vertices
	of $Y_{r/2}$. The red edges are the vertical edges.
	\label{fig:nnet}}
\end{figure}

For any $r \geq 0$, an {\em $r$-net} of $Y$ is a subset 
$Y_r \subseteq Y$ such that $\dmin(Y_r) \geq r$ and 
$Y \subseteq \bigcup_{y \in Y_r} \ball(y,r)$. An $r$-net can be 
constructed incrementally by repeatedly adding new points that
lie outside of the current union of balls, until $Y$ is
completely covered. Intuitively, an $r$-net  represents
$Y$ at resolution $r$.

%Let $r_{\min}=2^{\lfloor {\log_2 \dmin(Y)} \rfloor}$ and
%$r_{\max}=2^{\lceil {\log_2 \diam(Y)} \rceil}$.
A {\em scale} is a rational number $r=2^i$ such that $i\in \IZ$.
Our navigating net records a sequence of $r$-nets 
$Y_{r_{\min}}, Y_{2r_{\min}},\dots,Y_{r_{\max}}$
such that $r_{\min}$ and $r_{\max}$ are scales satisfying the inequalities
$\dmin(Y)/2 < r_{\min} \leq \dmin(Y)$ and $\diam(Y)/2 \leq r_{\max} < 2\diam(Y)$.
At the largest scale $r_{\max}$, the $r_{\max}$-net $Y_{r_{\max}}$ consists
of a single point $y_{\mathrm{root}}$, and thus $Y=\ball(y_{\mathrm{root}},r_{\max})$.
At the lowest scale, we set $Y_{r_{\min}}=Y$.
So the navigating net represents $Y$ at all scales $r$ such
that $r_{\min} \leq r \leq r_{\max}$ by an $r$-net $Y_r$.
All scales $r > r_{\max}$ are represented by
$Y_r=Y_{r_{\max}}=\{y_{\mathrm{root}}\}$, but we do not construct
these copies of $Y_{r_{\max}}$ explicitly.

We may assume that $\log(\Phi(Y))\geq 1$ as otherwise,
$|Y|=O(1)$ by Lemma~\ref{lem:packing}, and the $\rho$-distortion
problem can be solved in $O(1)$ time by brute force.
So we only construct $Y_r$ at  $O(\log(\Phi(Y))$ different scales $r$
such that $\dmin(Y)/2 < r_{\min} \leq r \leq r_{\max} <2 \diam(Y)$.

We construct a graph over these $r$-nets. First, at each scale $r$ such
that $r_{\min} \leq r \leq r_{\max}$,
we connect any two nodes $y$, $y' \in Y_r$ such that 
$d_Y(y,y') \leq 6r$ by an edge, that we call a {\em horizontal} edge.
At each scale $r$ such that $2r_{\min} \leq r \leq r_{\max}$,
we also connect with a {\em vertical} edge each $y \in Y_{r/2}$ to a 
node $p(y) \in Y_{r}$,
called the {\it parent} of $y$, such that $d_Y(y,p(y)) \leq r$.
At least one such node $p(y)$ exists since $Y_r$ is an $r$-net.
We call $y$ a {\it child} of $p(y)$. More generally, 
we say that $y$ is a {\it descendant} of $y'$ if $y$
is a child of $y'$, or $y$ is a child of a descendant of $y'$.
Conversely, we say that $y'$ is an {\em ancestor} of $y$ if
$y$ is a descendant of $y'$. 
When $r'>r_{\max}$, the node $y_{\mathrm{root}} \in Y_{r'}$ is an 
ancestor of any node at a lower level. 

These $r$-nets, together with the horizontal and vertical edges,
form our navigating net. 
The lemma below shows that it has bounded degree. 
\begin{lemma}\label{lem:sizennet}
	Each node of the navigating net has degree $O(1)$.
\end{lemma}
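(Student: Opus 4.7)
The plan is to bound each of the three contributions to the degree separately -- horizontal edges within the same layer, the single vertical edge going up to the parent, and vertical edges going down to children -- and then sum them. In every case the bound will follow from Lemma~\ref{lem:packing} applied to a small subset of $Y$ of controlled spread.

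First I would handle the horizontal edges. Fix a node $y \in Y_r$ and consider the set $S = Y_r \cap \ball(y, 6r)$, which contains every horizontal neighbour of $y$ at this layer. Because $Y_r$ is an $r$-net, any two points of $S$ are at distance at least $r$, so $\dmin(S) \geq r$, while $\diam(S) \leq 12r$. Therefore $\Phi(S) \leq 12$. Since $(S,d_Y)$ inherits doubling dimension $\delta$ from $(Y,d_Y)$, Lemma~\ref{lem:packing} gives $|S| \leq 48^\delta = O(1)$, which bounds the number of horizontal edges incident to $y$.

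Next I would handle the vertical edges. The edge going up from $y \in Y_{r/2}$ to its parent $p(y) \in Y_r$ contributes exactly one edge by construction, so there is nothing to do. For the edges going down, fix $y' \in Y_r$ and consider the set $T$ of children of $y'$, which is a subset of $Y_{r/2} \cap \ball(y', r)$. Since $Y_{r/2}$ is an $(r/2)$-net, any two points of $T$ are at distance at least $r/2$; and $\diam(T) \leq 2r$. Hence $\Phi(T) \leq 4$, so Lemma~\ref{lem:packing} gives $|T| \leq 16^\delta = O(1)$.

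Adding these contributions, together with the two layers (one above and one below) at which vertical edges can appear at each node, the total degree at any node of the navigating net is $O(1)$. The only subtlety is making sure the packing lemma is applied to a subspace with the right spread; once the horizontal radius $6r$ and the vertical radius $r$ are fixed by the construction, the rest is a direct calculation, so I do not expect a real obstacle here.
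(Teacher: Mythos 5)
Your proposal is correct and follows essentially the same approach as the paper: bound the horizontal neighbours by packing a subset of spread at most $12$, bound the children by packing a subset of spread at most $4$, and add one for the parent edge. The constants and the invocation of Lemma~\ref{lem:packing} match the paper's argument exactly.
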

\begin{proof}
	The children of $y \in Y_r$ are at distance at most
	$r$ from $y$, and thus at distance at most $2r$ from each
	other. As they are in $Y_{r/2}$, they 
	are at distance at least $r/2$ from each other. 
	So the spread of the set of children of $y$ is at most 4, thus
	$y$ has at most $16^\delta$ children by Lemma~\ref{lem:packing}.
	It follows that any node is incident to at most 
	$1+16^\delta=O(1)$ vertical edges.
	The nodes adjacent to $y$ via horizontal edges are at 
	distance at most $6r$ from $y$ and at least $r$ from
	each other. Hence their spread is at most $12$, so
	there are at most $48^{\delta}$ of them. 
\end{proof}

The lemma below shows that our navigating net can be computed 
efficiently. 

\begin{lemma}\label{lem:timennet}
	The navigating net of $Y$ can be computed in 
	$O(n \log \Phi(Y))$ time.
\end{lemma}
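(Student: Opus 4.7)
The plan is to construct the navigating net top-down, scale by scale from $r_{\max}$ down to $r_{\min}$, and to show that each refinement step costs $O(n)$ time. As preprocessing I pick an arbitrary $y_0\in Y$ and compute $D=\max_{y\in Y} d_Y(y_0,y)$ in $O(n)$ time; since $\diam(Y)/2 \leq D \leq \diam(Y)$, setting $r_{\max}$ to be the smallest scale at least $D$ meets the definition. I initialize $Y_{r_{\max}}=\{y_0\}$ and for every $u\in Y$ maintain a pointer $\pi(u)$ to its current representative in the net (initially $y_0$), together with cluster lists $C(y)=\pi^{-1}(y)$.

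For the refinement from scale $2r$ to scale $r$ I rely on $Y_{2r}\subseteq Y_r$ and process each $u\in Y\setminus Y_{2r}$ exactly once. Let $y_u=\pi(u)\in Y_{2r}$. The candidate set of potentially close net points for $u$ consists of the points of $Y_r$ lying in $C(y_u)$ or in $C(y')$ for some horizontal neighbor $y'\in N(y_u)$ of $y_u$ at scale $2r$. Any $u'\in Y_r$ with $d_Y(u,u')<r$ must lie in this set, since if $u'\in C(y')$ then $d_Y(y_u,y')\leq 2r+r+2r=5r\leq 12r$, placing $y'\in N(y_u)$. Lemma~\ref{lem:sizennet} bounds $|N(y_u)|=O(1)$, and the packing lemma (Lemma~\ref{lem:packing}) bounds the number of scale-$r$ net points per cluster by $O(1)$, so the candidate set has $O(1)$ elements. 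If $u$ is at distance $\geq r$ from every candidate I add it to $Y_r$ and set $\pi(u)=u$; otherwise I set $\pi(u)$ to a witness at distance $<r$, which simultaneously records the vertical edge to the parent of $u$. Horizontal edges at scale $r$ are obtained by an analogous scan: for each $y\in Y_r$ I test pairwise distances against the $O(1)$-size list drawn from the clusters of $\pi_{2r}(y)$ and its scale-$2r$ horizontal neighbors, and an identical triangle inequality shows that every $y'\in Y_r$ with $d_Y(y,y')\leq 6r$ lies in that list.

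Each refinement plus edge-construction step therefore costs $O(n)$ time, and the descent terminates at the first scale where $|Y_r|=n$, yielding an $r_{\min}$ within a factor of $2$ of $\dmin(Y)$; combining with the $O(\log \Phi(Y))$ scales gives the claimed $O(n\log \Phi(Y))$ bound. The main obstacle is the locality claim above, namely that the constant-size candidate set provably contains every scale-$r$ net point within distance $r$ of $u$, and the analogous statement at radius $6r$ for the horizontal edges; both rely on the triangle inequality together with an inductive assumption that the horizontal edges at the coarser scale $2r$ are already correct. Correctness of the greedy refinement itself is independent of the processing order, since $Y_r$ is $r$-separated by construction and covers $Y$ because every rejected point is assigned a representative at distance strictly less than $r$, while every accepted point covers itself.
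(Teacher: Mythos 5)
Your argument is correct, and it is a genuinely different algorithm from the paper's. The paper builds the navigating net \emph{point by point}: each new point $y$ is inserted by descending through all scales while maintaining the local neighborhood $N(y,r)$, so the outer loop is over points and the inner work is a top-down traversal; this yields a structure that naturally supports online insertions. You instead build it \emph{level by level}: you refine $Y_{2r}$ into $Y_r$ in one $O(n)$ batch pass using per-cluster pointers $\pi(u)$ and the already-computed scale-$2r$ horizontal edges to bound the candidate set to $O(1)$. Both approaches hinge on the same two ingredients — the packing lemma to bound local candidate counts, and the triangle inequality to confine the search to a bounded neighborhood at the coarser scale — and both give $O(n)$ work per scale times $O(\log\Phi(Y))$ scales. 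Two small points worth making explicit if you wrote this up in full: you must compute the candidate sets against the \emph{scale-$2r$} cluster assignment (freezing $\pi$ during the pass and updating it only afterwards, or keeping two pointer arrays), since otherwise the locality argument via $d_Y(y_u,y')\leq 5r$ would reference the wrong partition; and the parent pointer for a newly accepted $u\in Y_r$ is precisely the old $\pi(u)\in Y_{2r}$, which satisfies $d_Y(u,\pi(u))<2r$ as required by the vertical-edge definition — you gesture at this but it deserves the one-line check.
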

\begin{proof}
	We construct the navigating net incrementally,
	adding the nodes one-by-one in a top-down manner.	
	We first pick an arbitrary $y_{\mathrm{root}} \in Y$,
	and we let $r_{\max}$ be the smallest scale that
	is at least $\max_{y \in Y}d_Y(y_{\mathrm{root}},y)$.
	We initialize our navigating net with the single node
	$y_{\mathrm{root}}$ at scale $r_{\max}$.
	Initially, we set $r_{\min}=r_{\max}$, and we will decrease
	$r_{\min}$ when we insert new points, if needed. We will
	maintain the invariants that any two inserted points
	are at distance at least $r_{\min}$ from each other,
	and that $Y_{r_{\min}}$ is the set of all inserted points.

	We repeatedly insert an arbitrary point $y \in Y$,
	until all points have been inserted. We now show how to
	update the data structure while inserting $y$.

	We start from the root and move down to scale $r_{\min}$.
	During this process, at the current scale $r$, 
	we maintain the set $N(y,r)$ of points in $Y_r$	that are at 
	distance at most 	$6r$ from $y$. We have $\dmin(N(y,r)) \geq r$
	and $\diam(N(y,r))\leq 12r$, hence $|N(y,r)|\leq 48^\delta=O(1)$ 
	by Lemma~\ref{lem:packing}.

	We now show how to maintain $N(y,r)$ while traversing the
	data structure from top to bottom. So we assume that we know
	$N(y,2r)$, and we show how to find $N(y,r)$. Suppose that
	$y' \in N(y,r)$. Then we have $d_Y(y',y) \leq 6r$ and
	$d_Y(y',p(y')) \leq 2r$. Therefore, $d_Y(y,p(y'))\leq 8r$,
	which implies that $p(y') \in N(y,2r)$. So in order to find
	the points in $N(y,r)$, we can simply go through all the
	children of the nodes in $N(y,2r)$, and check for each of
	them if it is at distance at most $6r$ from $y$. It takes
	$O(1)$ time as the navigating net has constant degree.
	So we can compute all the sets $N(y,r)$ such that
	$r_{\min} \leq r \leq r_{\max}$ in $O(\log \Phi(Y))$ time
	as there are $O(\log \Phi(Y))$ different scales.
	
	If $N(y,r_{\min})$ is empty, then the closest inserted point $y_c$
	to $y$ is at distance more than $r_{\min}$, so we don't
	need to update $r_{\min}$. Otherwise, since $y_c \in N(y,r_{\,min})$,
	we can find $y_c$ in $O(1)$ time by brute force. 
	If $d_Y(y,y_c) < r_{\min}$,
	we set $r_{\mathrm{temp}}$ to be the largest scale that is
	at most $d_Y(y,y_c)$, and we create new levels in the navigating
	net at all scales $r$ such that $r_{\mathrm{temp}} \leq r < r_{\min}$.
	As $Y_{r_{\min}}$ consists of all the inserted points, the
	same is true for $Y_r$ at the new levels, and we can
	compute the horizontal edges in $O(1)$ time per node as
	they are a subset of the horizontal edges at scale $2r$.
	We also connect each node by a vertical edge to its copy at
	scale $2r$. After updating all levels, we set $r_{\min}=r_{\mathrm{temp}}$.
	
	We now show how to insert $y$ in the navigating net. When
	inserting $y$ at level $r$, we check whether all the points
	in $N(y,r)$ are at distance at distance at least  $r$
	from $y$. If it is the case, we insert $y$ into $Y_r$, and
	we connect $y$ to all the points in $N(y,r)$ by a horizontal
	edge.  Then let $y''$ be the point of
	$N(y,2r)$ that is closest to $y$. As $Y_{2r}$ is a $2r$-net,
	we have $d_Y(y,y'') \leq 2r$, so we set $y''=p(y)$.

	This algorithm constructs the navigating net incrementally
	in $O(\log \Phi(Y))$ time per point in $Y$, so the overall 
	running time 	is $O(n \log \Phi(Y))$. Our construction ensures
	that $\dmin(Y)/2 < r_{\min} \leq \dmin(Y)$ and $\diam(Y)/2\leq r_{\max} < 2\diam(Y)$.
	
\end{proof}

We associate the ball $\ball(y,2r)$ with each node $y$ in an $r$-net $Y_r$. 
These balls have the following properties.
\begin{lemma}\label{lem:balldiameter}
	At any scale $r$, and for any subset $S \subseteq Y$ such that
	$\diam(S) \leq r$, there exists a node $y \in Y_r$ such
	that $S \subseteq \ball(y,2r)$.
\end{lemma}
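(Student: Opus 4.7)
The plan is to combine the defining property of an $r$-net with the triangle inequality. If $S$ is empty the statement is vacuous, so I may pick an arbitrary point $s_0 \in S$.

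First I would handle the generic case where $r_{\min} \leq r \leq r_{\max}$, so that $Y_r$ is literally an $r$-net of $Y$ as defined in Section~\ref{sec:nnet}. The covering property $Y \subseteq \bigcup_{y \in Y_r}\ball(y,r)$ gives some $y \in Y_r$ with $d_Y(s_0,y)\leq r$. Then for any $s \in S$,
\[
    d_Y(s,y) \leq d_Y(s,s_0)+d_Y(s_0,y) \leq \diam(S) + r \leq 2r,
\]
so $S \subseteq \ball(y,2r)$, as required.

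The only thing left is to account for the extremal scales. When $r > r_{\max}$, the navigating net identifies $Y_r$ with $\{y_{\mathrm{root}}\}$, and we already have $Y \subseteq \ball(y_{\mathrm{root}},r_{\max}) \subseteq \ball(y_{\mathrm{root}},2r)$, so $y=y_{\mathrm{root}}$ works trivially. When $r<r_{\min}$, the assumption $\diam(S)\leq r < r_{\min} \leq \dmin(Y)$ forces $|S|\leq 1$, and if $S=\{s_0\}$ we may take $y=s_0 \in Y = Y_{r_{\min}}$ (or pick any enclosing net point at the lowest explicitly stored scale), again giving $S \subseteq \ball(y,2r)$.

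I do not expect a real obstacle here: the argument is a one-line triangle inequality once the right center is chosen, and the only care needed is to make sure the boundary scales (above $r_{\max}$ and below $r_{\min}$) are treated explicitly so that the statement "there exists $y \in Y_r$" is well-defined at every scale considered by the algorithm.
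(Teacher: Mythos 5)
Your argument is exactly the paper's proof: pick a point of $S$, use the $r$-net covering property to find a center $y\in Y_r$ within distance $r$, and finish with the triangle inequality. The extra paragraph on boundary scales $r>r_{\max}$ and $r<r_{\min}$ is fine but not needed, since the paper only invokes the lemma at scales where $Y_r$ is stored and is by construction an $r$-net.
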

\begin{proof}
		Let $z$ be a point in $S$. Since $Y_r$ is an $r$-net, 
	there exists $y \in Y_r$ such that $d_Y(y,z) \leq r$.
	For any $z' \in S$, we have $d_Y(z,z') \leq \diam(S) \leq r$ 
	so by the triangle
	inequality, $d_Y(y,z') \leq 2r$. In other words, $S \subseteq \ball(y,2r)$.
\end{proof}
The lemma below shows that any ball other than the root is contained
in the ball associated with any of its ancestors.
\begin{lemma}\label{lem:ballcontainment}
	Let $r$ and $r'$ be two scales such that 
	$r < r'$. For any $y \in  Y_{r}$,
	and for any ancestor $y' \in Y_{r'}$ of $y$,
	we have $b(y,2r) \subseteq \ball(y',2r')$.
\end{lemma}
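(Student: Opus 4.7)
The plan is to walk up the ancestor chain from $y$ to $y'$ and apply the triangle inequality, letting the geometric growth of the scales absorb the accumulated edge lengths.

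Concretely, I would first unroll the ancestor relation into a chain $y = z_0, z_1, \ldots, z_m = y'$ such that each $z_i$ is the parent of $z_{i-1}$. Because parents live exactly one level up in the navigating net, we have $z_i \in Y_{2^i r}$ for every $i$, so in particular $r' = 2^m r$. The definition of a vertical edge then yields the scale-dependent bound $d_Y(z_{i-1}, z_i) \leq 2^i r$ for each $1 \leq i \leq m$.

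Next, for an arbitrary $z \in \ball(y, 2r)$, I would apply the triangle inequality along this chain to get
\[
d_Y(z, y') \;\leq\; d_Y(z, y) + \sum_{i=1}^{m} d_Y(z_{i-1}, z_i) \;\leq\; 2r + \sum_{i=1}^{m} 2^i r \;=\; 2r + (2^{m+1} - 2)\, r \;=\; 2 r',
\]
which is exactly what is needed to conclude $z \in \ball(y', 2r')$, and hence $\ball(y, 2r) \subseteq \ball(y', 2r')$.

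For the degenerate case $r' > r_{\max}$, where the excerpt declares $y_{\mathrm{root}}$ to be an ancestor of every lower node even though no explicit levels are built above $r_{\max}$, I would stop the chain at $y_{\mathrm{root}} \in Y_{r_{\max}}$ and invoke the bound just proved at scale $r_{\max}$: this gives $\ball(y, 2r) \subseteq \ball(y_{\mathrm{root}}, 2 r_{\max}) \subseteq \ball(y_{\mathrm{root}}, 2 r')$. There is no real obstacle here; the only point that requires a little care is that the vertical-edge length bound scales with the target level rather than the source level, which is precisely what makes the telescoping sum match the target radius $2r'$ exactly instead of falling short.
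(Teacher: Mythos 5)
Your argument is correct and is essentially the same as the paper's, just presented with the induction unrolled into an explicit telescoping sum: the paper proves the single-step case ($r'=2r$, giving $\ball(y,2r)\subseteq\ball(y',2r+r')=\ball(y',2r')$) and then invokes induction for $r'\geq 4r$, while you sum the geometric series $\sum_{i=1}^m 2^i r$ along the parent chain directly. Your observation that the vertical-edge bound scales with the target level is exactly the mechanism that makes both versions close.
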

\begin{proof}
	When $r'=2r$,  we have $y'=p(y)$, so $d_Y(y,y') \leq r'$ 
	by definition of $p(y)$. It follows that 
	$\ball(y,2r) \subseteq \ball(y',2r+r')=\ball(y',2r')$.
	When $r' \geq 4r$, the result is obtained by induction.
\end{proof}
In summary, the balls $\ball(y,2r)$, connected
by the vertical edges, form a tree such that
each ball is contained in each of its ancestors. The horizontal edges will
help us traverse this tree within a given level.

\subsection{Splitting the pattern}\label{sec:split}

Our algorithm proceeds recursively, by partitioning $X$ into
two well-separated subsets $P$ and $Q$ at each stage. 
More precisely, we will split $X$ as follows. (Remember that $k=|X|$.)
\begin{lemma}\label{lem:splitting}
	If $k\geq 2$, we can partition $X$ into two non-empty subsets
	$P$ and $Q$ such that $\diam(X) \leq (k-1) \cdot d_X(P,Q)$.
\end{lemma}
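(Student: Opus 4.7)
The plan is to use the minimum spanning tree (MST) of the complete weighted graph on $X$ whose edge weights are given by $d_X$, and partition $X$ according to its heaviest edge, exactly as the introduction hinted by referring to stopping Kruskal's algorithm one step before the end.

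First, I would compute (conceptually) an MST $T$ of $X$, which has exactly $k-1$ edges. Let $e=\{p,q\}$ be a heaviest edge of $T$, and let $\ell=d_X(p,q)$ be its weight. Removing $e$ from $T$ disconnects it into two non-empty subtrees, whose vertex sets I take to be $P$ and $Q$ (with $p \in P$ and $q \in Q$); this is a genuine partition of $X$ into two non-empty parts since $k \geq 2$. Equivalently, $P$ and $Q$ are the two components of the forest produced by Kruskal's algorithm just before it adds its final edge $e$.

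Next I would establish two inequalities. For the lower bound on $d_X(P,Q)$: pick any $p' \in P$ and $q' \in Q$; since adding the edge $\{p',q'\}$ to $T \setminus \{e\}$ would close a unique cycle through $e$, the standard cycle property of MSTs gives $d_X(p',q') \geq \ell$, and hence $d_X(P,Q) \geq \ell$. For the upper bound on $\diam(X)$: for any $x,x' \in X$, the unique $x$-to-$x'$ path in $T$ has at most $k-1$ edges, each of weight at most $\ell$ by the choice of $e$; so the triangle inequality in $(X,d_X)$ yields $d_X(x,x') \leq (k-1)\ell$. Combining the two, $\diam(X) \leq (k-1)\ell \leq (k-1)\,d_X(P,Q)$.

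I do not anticipate a real obstacle here; the only thing to be slightly careful about is invoking the MST cycle property (or, equivalently, the correctness of Kruskal's algorithm) to justify $d_X(P,Q) \geq \ell$ rather than just $d_X(P,Q) \geq$ (distance of some specific pair). Everything else is a direct triangle-inequality walk along a tree path of length at most $k-1$.
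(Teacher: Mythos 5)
Your proof is correct and follows essentially the same approach as the paper: both partition $X$ at the heaviest MST edge (equivalently, stopping Kruskal one step early), bound $\diam(X)$ by a tree path of at most $k-1$ edges each of weight at most $\ell$, and relate $\ell$ to $d_X(P,Q)$. The only cosmetic difference is that you derive $d_X(P,Q)\geq\ell$ from the MST cycle property, whereas the paper simply notes that Kruskal's last step adds the shortest edge crossing the cut, so $\ell=d_X(P,Q)$ outright.
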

\begin{proof}
	We obtain $P$ and $Q$ by running Kruskal's algorithm~\cite{CLRS} 
	for computing a minimum spanning tree of $X$, 	
	and stopping at the second-last step.
	So starting from the forest $(X,\emptyset)$, we repeatedly
	insert the shortest edge that connects any two trees of the current forest,
	until we are left with exactly two trees $P$ and $Q$.	
	At the last step, $P$ and $Q$ are then connected with an edge of length 
	$\ell=d_X(P,Q)$,
	which is not shorter than any edge in the spanning trees we 
	constructed for $P$ and $Q$.
	Thus, $\diam(P)\leq (|P|-1)\ell$ and $\diam(Q)	\leq (|Q|-1)\ell$.
	It follows that 
	$\diam(X) \leq   (|P|-1)\ell + \ell+ (|Q|-1)\ell=(k-1)\ell$. 
\end{proof}

\subsection{Recording approximate matchings}\label{sec:recordingam}

Our algorithm for the $(\rho,\eps)$-distortion problem
records a collection of approximate matchings for some
layers of the navigating net. More precisely, for some 
non-empty subset $W$ of $X$, for some $0<\beta \leq 1$ and 
for some scales $r \geq \rho\diam(W)$,
we will construct a data structure $L(W,\beta,r)$ that 
records at least one $(1+\beta)\rho$-matching from $W$ to $Y$ 
if a $\rho$-matching from $W$ to $Y$
exists. In particular, we will compute $L(X,\eps,r_X)$ at an appropriate
scale $r_X$, which records a solution to the $(\rho,\eps)$-distortion problem if there is one.
We first give three invariants of $L(W,\beta,r)$, and in the
next two sections we will show how to compute
$L(X,\eps,r_X)$  recursively.

The data structure $L(W,\beta,r)$ records
a set $M(y,W,\beta,r)$ of matchings at each node $y \in Y_r$.
These sets satisfy the following properties.
\begin{property}\label{prop:DXY}
Let $r$ be a scale such that $r \geq \rho \diam(W)$.
\begin{enumerate}
	\item[(a)] For any $y \in Y_r$, each matching $\sigma_\beta \in M(y,W,\beta,r)$ 
		is a $(1+\beta)\rho$-matching from $W$ to $Y$ such that
		$\sigma_\beta(W) \subseteq \ball(y,3r)$.	
	\item[(b)] For any  $y \in Y_r$ and  
		any two distinct $\sigma_\beta$, $\sigma_\beta' \in M(y,W,\beta,r)$, we have
		$d_M(\sigma_\beta,\sigma_\beta') \geq \beta r/(2\rho^2)$.
	\item[(c)] For any $\rho$-matching $\sigma:W \to Y$,
		there exist $y \in Y_r$ and  $\sigma_\beta \in M(y,W,\beta,r)$
		such 	that $d_M(\sigma,\sigma_\beta) \leq \beta r/\rho^2$.
\end{enumerate}
\end{property}

These properties imply the following bound on the sizes of these sets.
\begin{lemma}\label{lem:sizemr}
	For any $y\in Y_r$, we have  $|M(y,W,\beta,r)|=O((\rho^2/\beta)^{k\delta})$.
\end{lemma}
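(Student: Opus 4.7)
The plan is to view $M(y,W,\beta,r)$ as a finite subset of the metric space of matchings from $W$ to $Y$, which by Corollary~\ref{cor:dimmatching} is doubling of dimension at most $k\delta$, and then to apply the packing lemma (Lemma~\ref{lem:packing}) after bounding the spread of this subset.

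First I would observe, using Property~\ref{prop:DXY}(a), that every $\sigma_\beta \in M(y,W,\beta,r)$ has image contained in $\ball(y,3r)$. Hence, regarded as a point $(\sigma_\beta(w))_{w\in W}$ in $Y^{|W|}$ equipped with the product metric $d_M$, each such matching lies in the ball of radius $3r$ centered at the constant tuple $(y,y,\dots,y)$. In particular, $\diam(M(y,W,\beta,r)) \leq 6r$ with respect to $d_M$.

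Next, Property~\ref{prop:DXY}(b) directly gives $\dmin(M(y,W,\beta,r)) \geq \beta r/(2\rho^2)$, so the spread of this set, as a subspace of the matching space, is at most
\[
\Phi(M(y,W,\beta,r)) \;\leq\; \frac{6r}{\beta r/(2\rho^2)} \;=\; \frac{12\rho^2}{\beta}.
\]
Since $|W| \leq k$, Corollary~\ref{cor:dimmatching} tells us that the ambient matching space has doubling dimension at most $k\delta$, and this bound is inherited by any subspace. Applying Lemma~\ref{lem:packing} to $M(y,W,\beta,r)$ then yields
\[
|M(y,W,\beta,r)| \;\leq\; \bigl(4\Phi(M(y,W,\beta,r))\bigr)^{k\delta} \;\leq\; \bigl(48\rho^2/\beta\bigr)^{k\delta} \;=\; O\bigl((\rho^2/\beta)^{k\delta}\bigr).
\]

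There is no real obstacle here: the only thing one has to be slightly careful about is to confirm that $M(y,W,\beta,r)$, viewed through the identification of matchings with tuples, is genuinely a subset of a ball of radius $3r$ in the product metric, which is immediate since the $\max$-product metric of a Cartesian product of balls of radius $3r$ is a ball of radius $3r$. Everything else is a direct substitution into the packing bound.
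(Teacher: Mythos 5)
Your proof is correct and follows essentially the same route as the paper: bound the diameter via Property~\ref{prop:DXY}(a), bound the minimum interpoint distance via Property~\ref{prop:DXY}(b), deduce a spread bound of $12\rho^2/\beta$, and invoke Lemma~\ref{lem:packing} together with Corollary~\ref{cor:dimmatching}. The only cosmetic difference is that you spell out the intermediate observation that the image constraint places each matching in a ball of radius $3r$ around the constant tuple $(y,\dots,y)$ in the product space, whereas the paper states the $\diam \leq 6r$ consequence directly.
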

\begin{proof}
	Property~\ref{prop:DXY}a implies that any two matchings
	in $M(y,W,\beta,r)$ are at distance at most $6r$ from each other,
	that is, $\diam(M(y,W,\beta,r))\leq 6r$.
	Property~\ref{prop:DXY}b means that $\dmin(M(y,W,\beta,r)) \geq \beta r/(2\rho^2)$.
	Therefore, $\Phi(M(y,W,\beta,r)) \leq 12\rho^2/\beta$. Then by 
	Lemma~\ref{lem:packing} and Corollary~\ref{cor:dimmatching},
	we have $|M(y,W,\beta,r)| \leq (48\rho^2/\beta)^{k\delta}$.
\end{proof}

\subsection{Recursive construction} \label{sec:recursivec}
Our algorithm constructs $L(W,\beta,r)$ recursively,
for some subsets $W$ of $X$ and some values $\beta$ and $r$.
We start with the base case, then we show how to compute $L(W,\beta,r')$
from $L(W,\beta,r)$ when $\rho\diam(W) \leq r < r'$.
Finally, we show how to recursively compute $L(X,\eps,r_X)$ at an
appropriate scale $r_X$ by splitting $X$ into $P$ and $Q$ according 
to Lemma~\ref{lem:splitting}. (See \figurename~\ref{fig:split}.)

\begin{figure}[h]
	\centering
	\includegraphics[scale=.75]{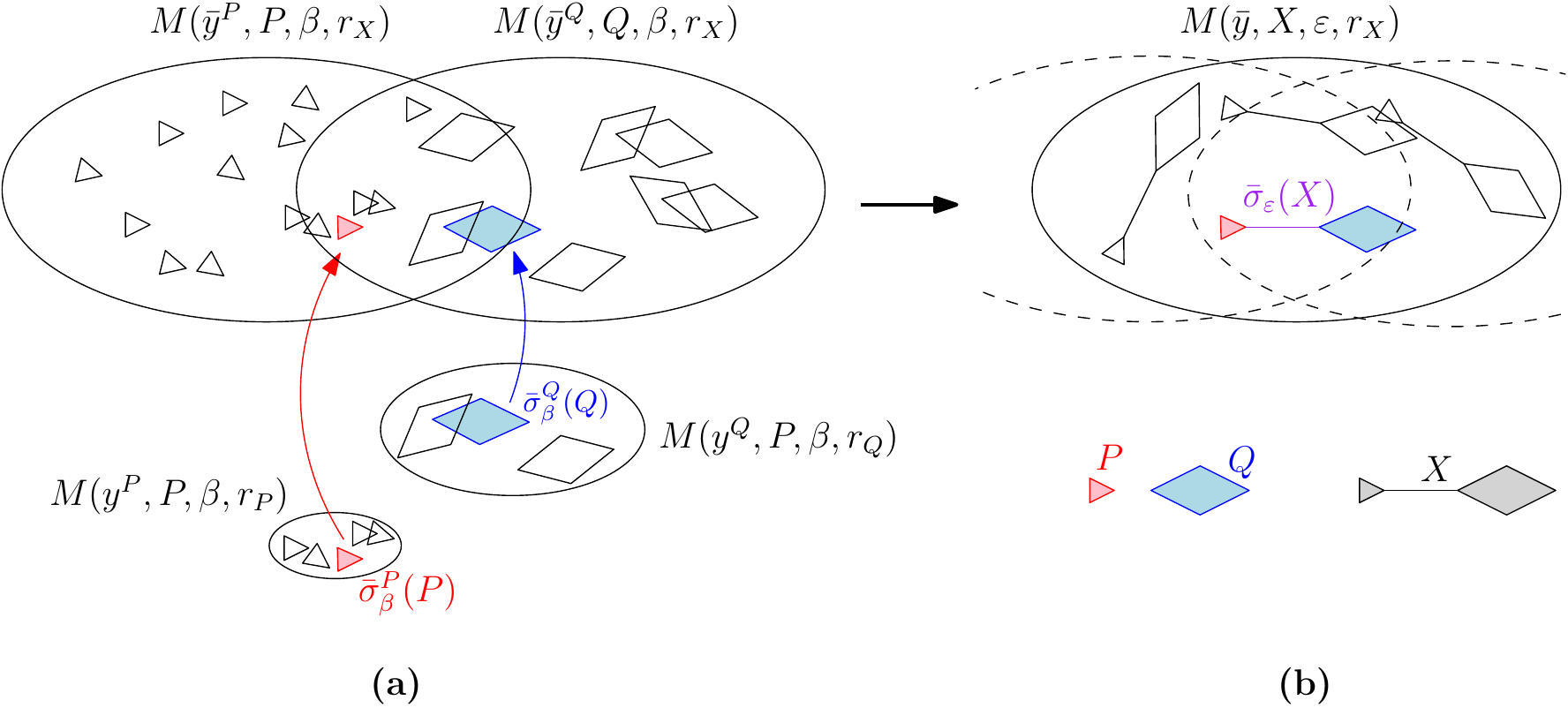}
	\caption{Recursive construction of $L(X,\eps,r_X)$.
		The pattern $X$ is split into $P$ and $Q$. 
		(a) Bottom-up phase: We compute $L(P,\beta,r_X)$ and $L(Q,\beta,r_X)$
		from $L(P,\beta,r_P)$ and $L(P,\beta,r_Q)$.
		(b) An approximate
		matching $\bar \sigma_\eps:X \to Y$ in $L(X,\eps,r_X)$ is obtained by 
		combining  two matchings  $\bar \sigma_\beta^P:P \to Y$ and
		$\bar \sigma_\beta^Q:Q \to Y$. 
		\label{fig:split}}
\end{figure}

\paragraph*{Base case where $|W|=1$ and $r=r_{\min}$.}
We have $W=\{x_i\}$, so a $(1+\beta)\rho$-matching simply maps $x_i$
to any element of $Y$. Therefore, at scale $r=r_{\min}$, we have $Y_r=Y$, 
so for each $y \in Y$, we  record the matching that sends $x_i$ to $y$ 
in $M(y,W,\beta,r_{\min})$.

\paragraph*{Bottom-up construction.} 
Suppose that $L(W,\beta,r)$
has been computed for some scale $r \geq \rho \diam(W)$. Given a larger
scale $r' > r$, we now show how to construct $L(W,\beta,r')$.
(See Algorithm~\ref{alg:bottomup}.)

\begin{algorithm}[H]
	\caption{Computing $L(W,\beta,r')$ from  $L(W,\beta,r)$} \label{alg:bottomup}
	\begin{algorithmic}[1]
		\Procedure{BottomUpConstruction}{}
			\For{$y' \in Y_{r'}$}
				\State $M(y',W,\beta,r') \gets \emptyset$
				\For{each descendant $y$ of $y'$ at scale $r$}
						\For{each matching $\sigma_\beta \in M(y,W,\beta,r)$}
								\If{$d_M(\sigma_\beta,\sigma'_\beta) \geq \beta r'/(2\rho^2)$
									for all $\sigma'_\beta \in M(y',W,\beta,r')$}
									\State insert $\sigma_\beta$ into $M(y',W,\beta,r')$
								\EndIf
					\EndFor
				\EndFor
			\EndFor
		\EndProcedure
	\end{algorithmic}
\end{algorithm}

For each $y' \in Y_{r'}$, we proceed as follows. 
Initially, we set $M(y',W,\beta,r')=\emptyset$.
Then for each descendant $y \in Y_r$ of $y'$, and for each 
matching $\sigma_\beta \in M(y,W,\beta,r)$,
we check by brute force whether 
$d_M(\sigma_\beta,\sigma_\beta') \geq \beta r'/(2\rho^2)$ for
each matching $\sigma_\beta'$
that was previously inserted into  $M(y',W,\beta,r')$. If it is the case,
we insert $\sigma_\beta$ into $M(y',W,\beta,r')$, and otherwise we discard
$\sigma_\beta$.
It ensures that  Property~\ref{prop:DXY}b holds for $M(y',W,\beta,r')$.

We now prove that Property~\ref{prop:DXY}a holds.
Let $\sigma_\beta \in M(y',W,\beta,r')$. By construction, 
$\sigma_\beta \in M(y,W,\beta,r)$ for some descendant $y$ of $y'$.
As Property~\ref{prop:DXY}a holds at scale $r$, it
follows that $\sigma_\beta$ is a $(1+\beta)\rho$-matching, and that
$\sigma_\beta(W) \subseteq \ball(y,3r)$. Since $y$ is a descendant
of $y'$, by Lemma~\ref{lem:ballcontainment}, we have 
$\ball(y,2r) \subseteq \ball(y',2r')$, and thus 
$\ball(y,3r) \subseteq \ball(y',2r'+r) \subseteq \ball(y',3r')$.
It follows that $\sigma_\beta(W) \subseteq \ball(y',3r')$,
and thus Property~\ref{prop:DXY}a holds for $M(y',W,\beta,r')$.

Finally, we prove that Property~\ref{prop:DXY}c holds
as well. Let $\sigma:W \to Y$ be a $\rho$-matching. 
As Property~\ref{prop:DXY}c holds for $L(W,\beta,r)$,
there must be a node $y \in Y_{r}$ and $\sigma_\beta \in M(y,W,\beta,r)$
such that $d_M(\sigma,\sigma_\beta)\leq \beta r/\rho^2$.
Let $y'\in Y_{r'}$ be the ancestor of $y$ at scale $r'$.
If $\sigma_\beta$ was inserted into $M(y',W,\beta,r')$, then
we are done because 
$d_M(\sigma,\sigma_{\beta}) \leq \beta r/\rho^2 \leq \beta r'/\rho^2$. 
Otherwise, we must have inserted
a matching $\sigma_\beta'$ such that 
$d_M(\sigma_\beta,\sigma_\beta') < \beta r'/(2\rho^2)$.
It follows that $d_M(\sigma,\sigma_\beta') \leq d_M(\sigma,\sigma_\beta)
+d_M(\sigma_\beta,\sigma_\beta') \leq 
\beta (r+r'/2)/\rho^2 \leq \beta r'/\rho^2$,
which completes the proof that Property~\ref{prop:DXY} holds
for $L(W,\beta,r')$.

We now analyze this algorithm. First we need to find all the descendants
of each node $y' \in Y_{r'}$. We can do this by traversing the
navigating net, which takes time $O(n \log\Phi(Y))$ as there are
$O(\log\Phi(Y))$ levels in the navigating net.
By Lemma~\ref{lem:sizemr}, we have
$|M(y',W,\beta,r')|=O((\rho^2/\beta)^{k\delta})$. Therefore, each time we attempt
to insert a matching $\sigma_\beta$ into $M(y',W,\beta,r')$, we compare it with
$O((\rho^2/\beta)^{k\delta})$ previously inserted matchings,
so it takes $O(k(\rho^2/\beta)^{k\delta})$ time as the distance between
two matchings can be computed in $O(k)$ time. Since $|Y_r| \leq n$, 
and each set $M(y,W,\beta,r)$ has cardinality 
$O((\rho^2/\beta)^{k\delta})$, we spend $O(nk(\rho^2/\beta)^{2k\delta})$ time
for computing $L(W,\beta,r')$. So we just
proved the following.
\begin{lemma}\label{lem:bottom-up}
	Let $r$ and $r'$ be two scales such that $\rho\diam(W) \leq r <  r'$.
	Given $L(W,\beta,r)$, we can compute 	$L(W,\beta,r')$ in 
	$O(nk(\rho^2/\beta)^{2k\delta}+n\log \Phi(Y))$ time.
\end{lemma}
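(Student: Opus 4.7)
The plan is to analyze Algorithm~\ref{alg:bottomup} in two phases: first, verify that the resulting $L(W,\beta,r')$ satisfies Properties~\ref{prop:DXY}(a)--(c), and second, tally the running time by accounting for navigating-net traversal and pairwise distance checks.

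For correctness, Property~\ref{prop:DXY}(b) is immediate from the algorithm, which refuses to insert a matching $\sigma_\beta$ into $M(y',W,\beta,r')$ unless it is already at distance at least $\beta r'/(2\rho^2)$ from every matching currently in that set. For Property~\ref{prop:DXY}(a), I observe that every $\sigma_\beta$ placed in $M(y',W,\beta,r')$ was copied from some $M(y,W,\beta,r)$ where $y$ is a descendant of $y'$; by the inductive hypothesis at scale $r$, it is already a $(1+\beta)\rho$-matching with image inside $\ball(y,3r)$, and Lemma~\ref{lem:ballcontainment} combined with the triangle inequality will push this image into $\ball(y',3r')$.

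The key step is Property~\ref{prop:DXY}(c). Given a $\rho$-matching $\sigma:W\to Y$, I apply (c) at scale $r$ to produce some $y\in Y_r$ and $\sigma_\beta\in M(y,W,\beta,r)$ with $d_M(\sigma,\sigma_\beta)\leq \beta r/\rho^2$. Letting $y'\in Y_{r'}$ be the ancestor of $y$, either $\sigma_\beta$ survived the pruning test (in which case $d_M(\sigma,\sigma_\beta)\leq \beta r'/\rho^2$ follows trivially from $r<r'$), or it was blocked by some already-inserted $\sigma_\beta'$ at distance less than $\beta r'/(2\rho^2)$, and the triangle inequality yields $d_M(\sigma,\sigma_\beta') \leq \beta r/\rho^2 + \beta r'/(2\rho^2) \leq \beta r'/\rho^2$. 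This inequality closes precisely because consecutive scales differ by a factor of at least $2$, so $r\leq r'/2$.

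For the running time, I first perform a top-down traversal of the navigating net to attach, to each $y'\in Y_{r'}$, its descendant list at scale $r$; because there are $O(\log\Phi(Y))$ layers and $O(1)$ degree, this takes $O(n\log\Phi(Y))$ time overall. Next, each $y\in Y_r$ is visited once by its unique ancestor chain, and the total work is dominated by the insertion phase: by Lemma~\ref{lem:sizemr} each $M(y,W,\beta,r)$ contributes $O((\rho^2/\beta)^{k\delta})$ candidates, each compared in $O(k)$ time against at most $O((\rho^2/\beta)^{k\delta})$ matchings already in $M(y',W,\beta,r')$, yielding $O(nk(\rho^2/\beta)^{2k\delta})$. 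Adding the traversal overhead gives the claimed bound. The only real obstacle is the Property~(c) argument: the tolerance at the new scale must absorb both the old tolerance and the pruning gap, which is exactly why the pruning threshold is set to $\beta r'/(2\rho^2)$ rather than a larger constant.
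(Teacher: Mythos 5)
Your proof is correct and follows essentially the same route as the paper: Property (b) by construction, Property (a) via Lemma~\ref{lem:ballcontainment}, Property (c) via the triangle inequality and the observation that $r \leq r'/2$ for consecutive-or-larger scales, and the running-time bound by the $O(n\log\Phi(Y))$ descendant traversal plus $O(nk(\rho^2/\beta)^{2k\delta})$ for the pairwise pruning tests using Lemma~\ref{lem:sizemr}. No gaps; the reasoning matches the paper's argument step for step.
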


\paragraph*{Computing $L(X,\eps,r_X)$ by  splitting $X$.} 

Suppose that $k \geq 2$.
Let $r_X$ be the smallest scale that is at least as large as $\rho\diam(X)$,
hence $r_X=2^{\lceil \log_2(\rho\diam(X)) \rceil}$.
In particular, we have $r_X/2 < \rho\diam(X) \leq r_X$.
If $r_X < r_{\min}$, then we have $\rho \diam(X) < r_{\min} \leq \dmin(Y)$, and 
there cannot be any $\rho$-matching, so our algorithm does not return
any matching. Therefore, from now on, we may assume that $r_X \geq r_{\min}$.
%(If $r_X > r_{\max}$, then as mentioned in 
%Section~\ref{sec:nnet}, the $r_X$-net $Y_{r_X}$ consists of the 
%single node $y_{\mathrm{root}}$, and the algorithm below applies.)

Let $P$ and $Q$ be the sets obtained by splitting $X$ as 
described in Lemma~\ref{lem:splitting}, so
$\diam(X) \leq (k-1)\cdot d_X(P,Q)$.
Let $\beta=\eps/(8k-8)$, and suppose that $L(P,\beta,r_X)$
and $L(Q,\beta,r_X)$ have been computed earlier.
We now show how to compute $L(X,\eps,r_X)$.

For any two matchings $\sigma^P:P \to Y$ and $\sigma^Q: Q \to Y$,
we denote by $\sigma^P\cdot \sigma^Q$ the matching from $X$
to $Y$ whose restrictions to $P$ and $Q$ are $\sigma^P$ and $\sigma^Q$,
respectively. In other words, if $\sigma=\sigma^P\cdot \sigma^Q$, 
then we have $\sigma(x)=\sigma^P(x)$ for
all $x \in P$ and $\sigma(x)=\sigma^Q(x)$ for all $x \in Q$.

We compute $L(X,\eps,r_X)$ as follows. (See Algorithm~\ref{alg:lowlevel}.) 
For each node $y \in Y_{r_X}$,
we consider all the pairs of matchings consisting of a matching
$\sigma^P_\beta \in M(y^P,P,\beta,r_X)$ and 
a matching $\sigma^Q_\beta \in M(y^Q,Q,\beta,r_X)$,
where $y^P$ and $y^Q$ are in $Y_{r_X}$ and are at distance at most $6r_X$ from $y$.
Then we consider $\sigma_\eps=\sigma^P_\beta\cdot\sigma^Q_\beta$
as a candidate for being inserted into $M(y,X,\eps,r_X)$. 
We first check whether $\sigma_\eps$
is a $(1+\eps)\rho$-matching 
and $\sigma_\eps(X) \subseteq \ball(y,3r_X)$. 
If it is the case, and if $\sigma_\eps$ is
at distance at least $\eps r_X/(2\rho^2)$ from any matching previously
inserted into $M(y,X,\eps,r_X) $, we insert $\sigma_\eps$ into $M(y,X,\eps,r_X) $.

\begin{algorithm}[H]
	\caption{Computing $L(X,\eps,r_X)$ from 
		$L(P,\beta,r_X)$ and $L(Q,\beta,r_X)$} \label{alg:lowlevel}
	\begin{algorithmic}[1]
		\Procedure{ComputeBySplitting}{}
			\For{$y \in Y_{r_X}$}
				\State $M(y,X,\eps,r_X) \gets \emptyset$
				\For{each pair of horizontal edges $(y,y^P)$, $(y,y^Q)$}
						\For{each pair $\sigma^P_\beta \in M(y^P,P,\beta,r_X)$,
							$\sigma^Q_\beta \in M(y^Q,Q,\beta,r_X)$}
							\State $\sigma_\eps \gets 
								\sigma_\beta^P \cdot \sigma_\beta^Q$
							\If{$\sigma_\eps$ is a $(1+\eps)\rho$-matching
									and $\sigma_\eps(X) \subseteq \ball(y,3r_X)$}
								\If{$d_M(\sigma_\eps,\sigma'_\eps) \geq \eps r_X/(2\rho^2)$
									for all $\sigma'_\eps \in M(y,X,\eps,r_X)$}
									\label{line:lowlevel:compare}
									\State insert $\sigma_\eps$ into $M(y,X,\eps,r_X)$
								\EndIf
							\EndIf
					\EndFor
				\EndFor
			\EndFor
		\EndProcedure
	\end{algorithmic}
\end{algorithm}

We first prove that this algorithm is correct. So we must
prove that Property~\ref{prop:DXY} holds for each set $M(y,X,\eps,r_X)$.
Property~\ref{prop:DXY}a  follows from the fact that
we only insert $\sigma_\eps$ if it 
is a $(1+\eps)\rho$-matching and if 
$\sigma_\eps(X) \subseteq \ball(y,3r_X)$.
Property~\ref{prop:DXY}b follows from fact that we only
insert $\sigma_\eps$ if it is at distance at least
$\eps r_X/(2\rho^2)$ from all the previously inserted matchings.

We now prove that Property~\ref{prop:DXY}c holds.
So given a $\rho$-matching $\bar \sigma:X \to Y$, we want to prove
that there exist $\bar y \in Y_{r_X}$ and  
$\bar\sigma_\eps \in M(\bar y,X,\eps,r_X)$
such 	that $d_M(\bar\sigma,\bar\sigma_\eps) \leq \eps r_X/\rho^2$.
Let 
$\bar\sigma^P$ be the restriction of $\bar \sigma$ to $P$. In other words,
$\bar\sigma^P:P \to Y$ is defined by $\bar\sigma^P(x)=\bar\sigma(x)$ 
for all $x \in P$. 
Similarly, let $\bar\sigma^Q$ be the restriction of $\bar\sigma$ to $Q$.
Then $\bar\sigma^P$ and $\bar\sigma^Q$ are $\rho$-matchings.
As Property~\ref{prop:DXY}c holds for $L(P,\beta,r_X)$,
there exist $\bar y^P \in Y_{r_X}$ 
and $\bar\sigma^P_{\beta} \in M(\bar y^P,P,\beta,r_X)$
such that $d_Y(\bar\sigma^P,\bar\sigma^P_{\beta})\leq \beta r_X/\rho^2$.
Similarly, there exist $\bar y^Q \in Y_{r_X}$ and  
$\bar\sigma^Q_{\beta} \in M(\bar y^Q,Q,\beta,r_X)$
such that $d_Y(\bar\sigma^Q,\bar\sigma^Q_{\beta})\leq \beta r_X/\rho^2$.

The matching $\bar\sigma_\eps=\bar\sigma^P_\beta \cdot \bar\sigma^Q_\beta$ 
satisfies 
$d_M(\bar\sigma,\bar\sigma_\eps)\leq \beta r_X/\rho^2 < \eps r_X/\rho^2$.
The lemma below shows that it is a $(1+\eps)\rho$-matching.
\begin{lemma}\label{lem:apxcombine}
	The matching $\bar\sigma_\eps$ is a $(1+\eps)\rho$-matching 
	from $X$ to $Y$.
\end{lemma}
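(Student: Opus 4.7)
My plan is to verify the $(1+\eps)\rho$-matching condition in Equation~\eqref{eq:distortion} for the matching $\bar\sigma_\eps$ on each pair $x,x' \in X$ by a case analysis based on whether $x$ and $x'$ lie in the same part of the partition $\{P,Q\}$. The only non-trivial case will be the mixed one, where the quality of $\bar\sigma_\eps$ cannot be inferred directly from the quality of $\bar\sigma^P_\beta$ or $\bar\sigma^Q_\beta$ and we instead have to rely on the fact that these matchings are close to $\bar\sigma$.

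First I would treat the easy cases. If both $x,x' \in P$, then $\bar\sigma_\eps(x)=\bar\sigma^P_\beta(x)$ and $\bar\sigma_\eps(x')=\bar\sigma^P_\beta(x')$, and since $\bar\sigma^P_\beta$ is a $(1+\beta)\rho$-matching (by Property~\ref{prop:DXY}a applied to $L(P,\beta,r_X)$), the desired inequalities hold because $\beta \leq \eps$. The case $x,x' \in Q$ is symmetric.

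The main obstacle is the mixed case, say $x \in P$ and $x' \in Q$. Here I would apply the triangle inequality in $Y$ to compare $d_Y(\bar\sigma_\eps(x),\bar\sigma_\eps(x'))$ with $d_Y(\bar\sigma(x),\bar\sigma(x'))$, using the two hypotheses $d_M(\bar\sigma^P,\bar\sigma^P_\beta)\leq \beta r_X/\rho^2$ and $d_M(\bar\sigma^Q,\bar\sigma^Q_\beta)\leq \beta r_X/\rho^2$, which together give
\[
	\bigl| d_Y(\bar\sigma_\eps(x),\bar\sigma_\eps(x'))-d_Y(\bar\sigma(x),\bar\sigma(x'))\bigr| \leq 2\beta r_X/\rho^2.
\]
Since $\bar\sigma$ is a $\rho$-matching, $d_Y(\bar\sigma(x),\bar\sigma(x'))$ lies in $[d_X(x,x')/\rho,\rho d_X(x,x')]$, so it remains to show that the additive slack $2\beta r_X/\rho^2$ is absorbed into a multiplicative $(1\pm\eps)$ factor on $d_X(x,x')/\rho$ and $\rho d_X(x,x')$.

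The key estimate, and what makes the choice $\beta=\eps/(8k-8)$ work, is the lower bound on $d_X(x,x')$ coming from the splitting: since $x\in P$ and $x'\in Q$, Lemma~\ref{lem:splitting} gives $d_X(x,x') \geq d_X(P,Q) \geq \diam(X)/(k-1)$, while the definition of $r_X$ yields $r_X \leq 2\rho\diam(X)$, hence $r_X \leq 2\rho(k-1)\,d_X(x,x')$. Plugging this into the triangle-inequality bound, the additive error is at most $4\beta(k-1) d_X(x,x')/\rho$, which, combined with $\beta=\eps/(8(k-1))$ and $\rho\geq 1$, gives both the upper bound $d_Y(\bar\sigma_\eps(x),\bar\sigma_\eps(x'))\leq (1+\eps/2)\rho\, d_X(x,x')\leq(1+\eps)\rho\,d_X(x,x')$ and, using $\eps/(1+\eps)\geq\eps/2$ for $\eps\leq 1$, the matching lower bound $d_Y(\bar\sigma_\eps(x),\bar\sigma_\eps(x'))\geq d_X(x,x')/((1+\eps)\rho)$. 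This completes all three cases, so $\bar\sigma_\eps$ is a $(1+\eps)\rho$-matching.
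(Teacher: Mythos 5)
Your proof is correct and takes essentially the same approach as the paper: you split into the trivial same-part cases and the mixed case, and in the mixed case you apply the (reverse) triangle inequality together with the lower bound $d_X(x,x')\geq\diam(X)/(k-1)$ from Lemma~\ref{lem:splitting} and the relation $r_X<2\rho\diam(X)$ to absorb the additive error $2\beta r_X/\rho^2$ into the multiplicative $(1+\eps)$ factor using $\beta=\eps/(8k-8)$. The only cosmetic difference is in how you verify the final inequality $1-\eps/2\geq 1/(1+\eps)$, but the argument is equivalent to the paper's.
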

\begin{proof}
	Let $x, x' \in X$. If $(x,x') \in P^2$ or $(x,x') \in Q^2$,
	then 
	\[\frac{1}{(1+\beta)\rho} d_X(x,x') \leq d_Y(\bar\sigma_\eps(x),
		\bar\sigma_\eps(x')) \leq (1+\beta)\rho d_X(x,x')\] 
	since $\bar\sigma^P_\beta$ and $\bar\sigma^Q_\beta$ are 
	$(1+\beta)\rho$-matchings by Property~\ref{prop:DXY}a. 
	As $\beta<\eps$, it follows that
	\[\frac{1}{(1+\eps)\rho} d_X(x,x') \leq d_Y(\bar\sigma_\eps(x),
		\bar\sigma_\eps(x')) \leq (1+\eps)\rho d_X(x,x')\] 
	which is the desired inequality.	

	So we may now	assume, without loss of generality, 
	that $x \in P$ and $x' \in Q$,
	and thus $d_X(x,x') \geq d_X(P,Q)$.
	As $X$ was partitioned into $P$ and $Q$ using Lemma~\ref{lem:splitting},
	we have  $\diam(X) \leq (k-1)\cdot d_X(P,Q)$. By the definition of
	$r_X$, we also have 	$\rho\diam(X) > r_X/2$, and thus
	$(k-1)d_X(x,x') > r_X/(2\rho)$. By the triangle inequality, we have
	$
		d_Y(\bar\sigma_\eps(x),\bar\sigma_\eps(x')) 
		\leq d_Y(\bar\sigma_\eps(x),\bar\sigma(x))+
			d_Y(\bar\sigma(x),\bar\sigma(x')) +
			d_Y(\bar\sigma(x'),\bar\sigma_\eps(x')),
	$
	and thus
	\begin{align*}
		d_Y(\bar\sigma_\eps(x),\bar\sigma_\eps(x')) 
		& \leq d_Y(\bar\sigma(x),\bar\sigma(x'))+\frac{2\beta r_X}{\rho^2} &
			\text{because $d_M(\bar\sigma,\bar\sigma_\eps) \leq  \frac{\beta r_X}{\rho^2}$} \\
		& \leq 	\rho d_X(x,x')+\frac{2\beta r_X}{\rho^2}	& 
			\text{since $\bar\sigma$ is a $\rho$-matching} \\	
		& \leq \rho d_X(x,x')+\frac{4\beta(k-1)}{\rho}d_X(x,x') & \\
 		& \leq 	\rho d_X(x,x')+\frac{\eps}{2\rho} d_X(x,x')
 			& \text{because $\beta = \frac{\eps}{8k-8}$} \\	
		& \leq \left(1+\frac{\eps}{2}\right)\rho  d_X(x,x') .
	\end{align*}
	
	We just proved the right-hand side of Inequality~\eqref{eq:distortion}.
	For the left-hand side, we use the same arguments as above,
	and obtain
	\begin{align*}
		d_Y(\bar\sigma_\eps(x),\bar\sigma_\eps(x')) & 
			\geq d_Y(\bar\sigma(x),\bar\sigma(x'))-\frac{2\beta r_X}{\rho^2}
			 \geq 	\frac{1}{\rho}d_X(x,x')-\frac{2\beta r_X}{\rho^2} 	\\
			 & \geq \frac{1}{\rho} d_X(x,x')-\frac{4\beta(k-1)}{\rho} d_X(x,x') \\
			 &  =\frac{1}{\rho} d_X(x,x')-\frac{\eps}{2\rho}d_X(x,x')
			 	= \left(1 -\frac \eps{2}\right) \frac{1}{\rho}d_X(x,x').
	\end{align*}
	So in order to complete the proof, we only need to argue that
	$1-\eps/2 \geq 1/(1+\eps).$
	This is equivalent to 
	$1+\eps/2 -\eps^2/2 \geq 1$,
	which follows from our assumptions that $0<\eps \leq 1$.
\end{proof}

As $\bar \sigma$ is a $\rho$-matching, we have 
$\diam(\bar \sigma(X)) \leq \rho \diam(X)$ and thus $\diam(\bar \sigma(X)) \leq r_X$.
By Lemma~\ref{lem:balldiameter}, it implies that  
there exists  $\bar y\in Y_{r_X}$ such 
that $\bar\sigma(X) \subseteq \ball(\bar y,2r_X)$, and thus
$\bar\sigma^P(P) \subseteq \ball(\bar y,2r_X)$. As  
$d_M(\bar\sigma^P,\bar\sigma^P_{\beta})\leq \beta r_X/\rho^2 \leq r_X/8$, we have
$\bar\sigma^P_{\beta}(P)\subseteq \ball(\bar y,17r_X/8)$. 
By Property~\ref{prop:DXY}a, we also have
$\bar\sigma^P_\beta(P) \subseteq \ball(\bar y^P,3r_X)$,
so the balls 
$\ball(\bar y,17r_X/8)$ and $\ball(\bar y^P,3r_X)$ intersect, which implies 
that $d_Y(\bar y,\bar y^P) < 6r_X$. 
The same proof shows that $d_Y(\bar y,\bar y^Q)  <6r_X$.

Therefore, our algorithm considers 
$\sigma_\eps=\sigma_\beta^P \cdot \sigma_\beta^Q$ as a candidate 
solution for each $\sigma_\beta^P \in M(\bar y^P,P,\beta,r_X)$ and each 
$\sigma_\beta^Q \in M(\bar y^Q,Q,\beta,r_X)$. In particular, 
we must have considered the matching
$\bar\sigma_\eps=\bar\sigma^P_\beta \cdot \bar\sigma^Q_\beta$. 
As $d_M(\bar \sigma,\bar \sigma_\eps) < \eps  r_X / \rho^2$ and
$\bar\sigma(X)\subseteq b(\bar y,2r_X)$, we have 
$\bar\sigma_\eps(X) \subseteq b(\bar y,3r_X)$, and thus we
must have attempted to insert $\bar\sigma_\eps$ into 
$M(\bar y,X,\eps,r_X)$.
If $\bar \sigma_\eps$ was inserted into $M(\bar y,X,\eps,r_X)$, 
then we are done.
Otherwise, it means that there exists 
$\sigma_\eps' \in M(\bar y,X,\eps,r_X)$ 
such that $d_M(\bar\sigma_\eps,\sigma_\eps')< \eps r_X/(2\rho^2)$.
Since 
$d_M(\bar\sigma,\bar\sigma_\eps)\leq  \beta r_X/\rho^2<\eps r_X/(2\rho^2)$, 
it follows
that $d_M(\bar \sigma,\sigma_\eps') < \eps r_X/\rho^2$.
In any case, it shows that Property~\ref{prop:DXY}c holds.
So we obtain the following result.
\begin{lemma}\label{lem:lowlevel}
	Suppose that $k \geq 2$ and $\beta=\eps/(8k-8)$. Then
	we can compute $L(X,\eps,r_X)$ from  
	$L(P,\beta,r_X)$ and $L(Q,\beta,r_X)$ in  
	$O(nk(\rho^2/\beta)^{3k\delta})$ time.
\end{lemma}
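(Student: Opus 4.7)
The correctness of Algorithm~\ref{alg:lowlevel} has essentially been established in the discussion preceding the lemma. Property~\ref{prop:DXY}a follows from the explicit conditions tested before insertion; Property~\ref{prop:DXY}b from the distance test on line~\ref{line:lowlevel:compare}; and Property~\ref{prop:DXY}c from Lemma~\ref{lem:apxcombine} combined with the argument that the balls $\ball(\bar y, 17r_X/8)$ and $\ball(\bar y^P, 3r_X)$ (and the analogous pair for $\bar y^Q$) intersect, forcing $d_Y(\bar y,\bar y^P) < 6r_X$ and $d_Y(\bar y,\bar y^Q) < 6r_X$, so that the relevant pair $(\bar y^P,\bar y^Q)$ is actually enumerated by the algorithm. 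My plan is therefore to devote the proof almost entirely to the running-time analysis of Algorithm~\ref{alg:lowlevel}.

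The approach is to bound the cost of each nested loop in turn. The outer loop over $y \in Y_{r_X}$ contributes a factor of at most $n$. The loop over pairs $(y^P,y^Q)$ of horizontal neighbors of $y$ (allowing $y$ itself) contributes $O(1)$, since the navigating net has bounded degree by Lemma~\ref{lem:sizennet}. The innermost loop over pairs $(\sigma^P_\beta,\sigma^Q_\beta)\in M(y^P,P,\beta,r_X)\times M(y^Q,Q,\beta,r_X)$ contributes $O((\rho^2/\beta)^{2k\delta})$ by two applications of Lemma~\ref{lem:sizemr}. For each candidate $\sigma_\eps = \sigma^P_\beta \cdot \sigma^Q_\beta$, I will form it in $O(k)$ time, verify in $O(k^2)$ time that it is a $(1+\eps)\rho$-matching by testing every pair in $X$, and check $\sigma_\eps(X) \subseteq \ball(y,3r_X)$ in $O(k)$ time. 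The dominant cost is the separation test on line~\ref{line:lowlevel:compare}: comparing $\sigma_\eps$ with each of the at most $O((\rho^2/\beta)^{k\delta})$ matchings already inserted into $M(y,X,\eps,r_X)$ (again by Lemma~\ref{lem:sizemr}), where each evaluation of $d_M$ takes $O(k)$ time, costs $O(k(\rho^2/\beta)^{k\delta})$ per candidate.

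Multiplying these factors together yields a total running time of $n \cdot O(1) \cdot O((\rho^2/\beta)^{2k\delta}) \cdot O(k(\rho^2/\beta)^{k\delta}) = O(nk(\rho^2/\beta)^{3k\delta})$, as claimed. There is no genuine obstacle here: the main subtleties are to make precise that $y^P$ and $y^Q$ are allowed to coincide with $y$ so that the entire $6r_X$-neighborhood is searched, and to observe that the packing bound of Lemma~\ref{lem:sizemr} (which rests on Corollary~\ref{cor:dimmatching} and Lemma~\ref{lem:packing}) applies uniformly to all three of $M(y^P,P,\beta,r_X)$, $M(y^Q,Q,\beta,r_X)$, and the set $M(y,X,\eps,r_X)$ under construction.
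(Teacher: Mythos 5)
Your proposal matches the paper's proof: both defer correctness to the preceding discussion and obtain the time bound by multiplying out the nested loops, invoking the bounded degree of the navigating net for the $O(1)$ choices of $(y^P,y^Q)$ and Lemma~\ref{lem:sizemr} for the three matching-set sizes, with $O(k)$ per distance evaluation. Your two added precisions (the $O(k^2)$ cost of verifying the $(1+\eps)\rho$-matching condition, which is indeed dominated, and the observation that $y^P$ or $y^Q$ may coincide with $y$) are minor points the paper leaves implicit, but they do not change the argument or the bound.
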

\begin{proof}
	The discussion above shows that our algorithm is
	correct. We still need to analyze its running time.
	Let $y \in Y_{r_X}$. As $d_Y(y,y^P)\leq 6r_X$ and $d_Y(y,y^Q) \leq 6r_X$,
	the nodes $y^P$ and $y^Q$ are connected to $y$ by horizontal edges.
	As the nodes of the navigating net have constant degree,
	it implies that there are $O(1)$ pairs $(y^P,y^Q)$ to consider. 
	By Lemma~\ref{lem:sizemr}, there are
	$O((\rho^2/\beta)^{k\delta})$ matchings in $M(y^P,P,\beta,r_X)$ and
	$M(y^Q,Q,\beta,r_X)$. 
	Therefore, we consider $O((\rho^2/\beta)^{2k\delta})$ matchings
	$\sigma_\eps$ when constructing $M(y,X,\eps,r_X)$. 
	Each of these matchings is then compared with 
	previously inserted matchings.
	By Lemma~\ref{lem:sizemr}, we have 
	$|M(y,X,\eps,r_X)|=O((\rho^2/\eps)^{k\delta})=O((\rho^2/\beta)^{k\delta})$.
	We can compute the distance between two matchings in $O(k)$ time, 
	so it takes  
	$O(k(\rho^2/\beta)^{3k\delta})$ time to
	compute $M(y,X,\eps,r_X)$.
	As there are at most $n$ nodes $y \in Y_{r_X}$, the overall time bound
	is $O(nk(\rho^2/\beta)^{3k\delta})$.
\end{proof}

\subsection{Putting everything together} \label{sec:main}
We can now describe our algorithm 
for the $(\rho,\eps)$-distortion problem. 
We first compute the navigating net in 
$O(n \log \Phi(Y))$ time.
If $r_X < r_{\min}$, then $\rho \diam X < \dmin(Y)$, and thus
there is no $\rho$-matching.
Otherwise, we recursively compute $L(X,\eps,r_X)$,
as described below. If $L(X,\eps,r_X)$ contains at least one matching, 
then we return one of them, which by Property~\ref{prop:DXY}a is a 
$(1+\eps)\rho$-matching.
By Property~\ref{prop:DXY}c, if
there is a solution to the $\rho$-distortion problem, then  at least one  
$(1+\eps)\rho$-matching must be recorded in $L(X,\eps,r_X)$, which
shows that this algorithm indeed solves the 
$(\rho,\eps)$-distortion problem.

We now explain how to compute $L(X,\eps,r_X)$ recursively.
The base case is $k=1$ and $r_X=r_{\min}$. As explained above,
it can be done in $O(n)$ time by recording a trivial matching
at each leaf node.

When $k \geq 2$, we split $X$ into $P$ and $Q$ as described in
Lemma~\ref{lem:splitting}. Then we compute recursively 
$L(P,\beta,r_P)$ and $L(Q,\beta,r_Q)$ where 
$\beta=\eps/(8k-8)$ and $r_P$ and $r_Q$ are the smallest
scales at least as large as $\rho\diam(P)$ and $\rho\diam(Q)$, 
respectively. We compute $L(P,\beta,r_X)$ and
$L(Q,\beta,r_X)$ using Lemma~\ref{lem:bottom-up},
which takes time  $O(nk(\rho^2/\beta)^{2k\delta}+n \log \Phi(Y))$.
Then we obtain
$L(X,\eps,r_X)$ by Lemma~\ref{lem:lowlevel} in 
$O(nk(\rho^2/\beta)^{3k\delta})$
time. 

So the running time $T(n,k,\rho,\eps)$ of our algorithm satisfies
the relation
\[
	T(n,k,\rho,\eps)=T(n,k_1,\rho,\beta)+ 
	T(n,k_2,\rho,\beta)+
	O(nk(\rho^2/\beta)^{3k\delta}+n \log \Phi(Y)),
\]
where 
$k_1+k_2=k$, $1\leq k_1 \leq k-1$ and $\beta=\eps/(8k-8)$.
This expression expands into a sum of $k-1$ terms 
$O(nk(\rho^2(8k)^k/\eps)^{3k\delta}+n\log \Phi(Y))$ and
$k$ terms $O(n)$ for the base cases. Hence we have
$T(n,k,\rho,\eps)=2^{O(k^2 \log k)} n(\rho^2/\eps)^{3k\delta}+
O(kn\log \Phi(Y))$. We just proved the following:

\begin{theorem}\label{thm:algorithm}
	The $(\rho,\eps)$-distortion problem can be solved in
	$2^{O(k^2 \log k)} (\rho^2/\eps)^{3k\delta}n+
	O(kn\log \Phi(Y))$ time.
\end{theorem}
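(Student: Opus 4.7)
The plan is to assemble the pieces already developed in the section. The algorithm first builds the navigating net of $Y$ in time $O(n\log\Phi(Y))$ using Lemma~\ref{lem:timennet}, and computes the smallest scale $r_X$ that is at least $\rho\diam(X)$. If $r_X<r_{\min}$, then $\rho\diam(X)<\dmin(Y)$ and no $\rho$-matching can exist, so the algorithm correctly reports failure. Otherwise it computes $L(X,\eps,r_X)$ recursively and returns any matching it contains. Correctness follows directly from Property~\ref{prop:DXY}a (every recorded matching is a $(1+\eps)\rho$-matching) and Property~\ref{prop:DXY}c (the collection is nonempty whenever a $\rho$-matching exists).

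The recursive construction is the following. In the base case $k=1$, the ``Base case'' paragraph of Section~\ref{sec:recursivec} yields $L(\{x_i\},\beta,r_{\min})$ in $O(n)$ time by recording one trivial matching at each point of $Y$. For $k\geq 2$, I split $X$ into $P$ and $Q$ using Lemma~\ref{lem:splitting}, set $\beta=\eps/(8k-8)$, and recursively build $L(P,\beta,r_P)$ and $L(Q,\beta,r_Q)$ at their natural scales. Then Lemma~\ref{lem:bottom-up} lifts these two structures to the common scale $r_X$, and Lemma~\ref{lem:lowlevel} combines them into $L(X,\eps,r_X)$. The non-recursive cost of one such step is $O(nk(\rho^2/\beta)^{3k\delta}+n\log\Phi(Y))$, the dominating term coming from Lemma~\ref{lem:lowlevel}.

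The main obstacle, and essentially the only real calculation, is solving the recurrence
\[
T(n,k,\rho,\eps)=T(n,k_1,\rho,\beta)+T(n,k_2,\rho,\beta)+O\!\left(nk(\rho^2/\beta)^{3k\delta}+n\log\Phi(Y)\right),
\]
with $k_1+k_2=k$, $k_1,k_2\geq 1$ and $\beta=\eps/(8k-8)$. The recursion tree has $k$ leaves and hence at most $k-1$ internal nodes and depth at most $k-1$. Along any root-to-leaf path the accuracy parameter is divided at most $k-1$ times, each time by a factor bounded by $8k$, so at every internal node we have $\beta\geq \eps/(8k)^{k-1}$. Substituting this lower bound into the per-node cost and recalling that $\delta=O(1)$ gives a per-node bound of
\[
O\!\left(nk\bigl(\rho^2(8k)^{k}/\eps\bigr)^{3k\delta}\right)=2^{O(k^2\log k)}\,n\,(\rho^2/\eps)^{3k\delta}.
\]
Summing this over the $k-1$ internal nodes absorbs an additional factor of $k$ into the $2^{O(k^2\log k)}$ prefactor. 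The $O(n\log\Phi(Y))$ summand, which occurs both at the initial call to Lemma~\ref{lem:timennet} and at each internal node, aggregates to $O(kn\log\Phi(Y))$, while the $k$ base cases contribute only $O(kn)$. This yields the claimed bound $2^{O(k^2\log k)}(\rho^2/\eps)^{3k\delta}n+O(kn\log\Phi(Y))$. The only subtlety in the bookkeeping is confirming that $(8k)^{\,k\cdot 3k\delta}=2^{O(k^2\log k)}$, which is legitimate precisely because $\delta$ is a constant.
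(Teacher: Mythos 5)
Your proof is correct and follows essentially the same route as the paper: same recursive construction via Lemmas~\ref{lem:splitting}, \ref{lem:bottom-up}, and \ref{lem:lowlevel}, same recurrence, and the same observation that the accuracy parameter degrades by at most a factor $8k$ per level over a depth-$(k-1)$ recursion tree with $k-1$ internal nodes. You merely spell out the recursion-tree bookkeeping that the paper compresses into the phrase ``this expression expands into a sum of $k-1$ terms''; the bounds match.
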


Our algorithm may also allow us to find more than one matching.
If we return all the matchings stored in $L(X,\eps,r_X)$, we obtain
a collection of $(1+\eps)\rho$-matching that approximates all the
solutions: By Property~\ref{prop:DXY}c,  
for each $\rho$-matching $\sigma$,
one of the  $(1+\eps)\rho$-matching $\sigma_\eps$ that we
return approximates $\sigma$ in the sense that 
$d_M(\sigma,\sigma_\eps) \leq \eps r_X/\rho^2<  2\eps \diam(X)/\rho$.

\section{Improved Algorithm} \label{sec:improved}

In this section, we show how the approximation algorithm 
presented in Section~\ref{sec:decision} 
can be improved using an efficient approximate nearest-neighbor (ANN)
data structure.
More precisely, we will use this data structure to
answer fixed-radius near neighbor queries efficiently.

First we briefly introduce $(3/2)$-approximate nearest neighbor searching.
Let $(S,d_S)$ be a doubling space, with doubling dimension $\delta_S$.
Let $T$ be an $n$-point subset of $S$. 
Given a query point $q \in S$, its nearest neighbor in $T$
is a point $t_q \in T$ such that $d_S(q,t_q)$ is minimum.
A $(3/2)$-approximate nearest neighbor (ANN) is a point $\hat q \in T$
such that $d_S(q,\hat q) \leq (3/2)d_S(q,t_q)$. The data structure
by Cole and Gottlieb~\cite{CG06} 
returns a $(3/2)$-ANN in time $2^{O(\delta_S)}\log n$.
This data structure is dynamic: It allows the insertion or 
deletion of a point in $2^{O(\delta_S)}\log n$ time.

\subsection{Fixed-radius near neighbor searching}\label{sec:fixednn}

We now consider a variation on the problem above. Given 
a radius $R$ and a query point $q$, a 
{\it fixed-radius near neighbor query} returns all the points in $T$ that
are at distance at most $R$ from $q$. We now show
how to efficiently answer these queries in a doubling space.

\begin{lemma}\label{lem:fixednn}
	Suppose that $\dmin(S)\geq R/\lambda$	for some constant $\lambda>0$.	 	
	Then Gottlieb and Cole's ANN data structure~\cite{CG06}  allows us to 
	answer fixed-radius near 	neighbor queries in $2^{O(\delta_S)}\log n$ time. 
	Each such query returns $2^{O(\delta_S)}$ points.
	This data structure also supports insertions and 
	deletions in $2^{O(\delta_S)}\log n$ time.
\end{lemma}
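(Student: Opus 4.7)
The plan is to use Cole and Gottlieb's dynamic ANN data structure~\cite{CG06} as a black box, and to implement a fixed-radius near neighbor query by repeatedly issuing $(3/2)$-ANN queries while temporarily removing each returned point from the structure, stopping as soon as the returned distance exceeds $(3/2)R$. The assumption $\dmin(S)\geq R/\lambda$, combined with the packing lemma, will bound the number of iterations and also the size of the answer set.

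More concretely, given a query $(q,R)$, I would maintain an initially empty answer set $A$ and an initially empty buffer $D$ of temporarily removed points, and iterate the following: ask the ANN structure for a $(3/2)$-approximate nearest neighbor $\hat q$ of $q$; if $d_S(q,\hat q)>(3/2)R$, exit the loop; otherwise delete $\hat q$ from the ANN structure, insert it into $D$, and, if $d_S(q,\hat q)\leq R$, also insert it into $A$. When the loop terminates, reinsert every point of $D$ back into the ANN structure and return $A$.

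Correctness hinges on the stopping rule. If some $t\in T$ with $d_S(q,t)\leq R$ had not been removed when the loop ended, then the true nearest neighbor of $q$ in the remaining set would be at distance at most $R$, so the $(3/2)$-ANN would be forced to return a point at distance at most $(3/2)R$, contradicting the termination test. Conversely, the test $d_S(q,\hat q)\leq R$ guarantees that $A$ contains only points in $b(q,R)$. Hence $A=b(q,R)\cap T$.

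The crux, and the step I expect to be the main technical point, is bounding the number of iterations. Every point ever placed in $D$ lies in $b(q,3R/2)$, so the subset of $T$ removed during the loop has diameter at most $3R$ and, by hypothesis, minimum interpoint distance at least $R/\lambda$; its spread is therefore at most $3\lambda$, a constant. Applying Lemma~\ref{lem:packing} to this subset bounds its cardinality by $(12\lambda)^{\delta_S}=2^{O(\delta_S)}$, which simultaneously bounds $|A|$ and the number of loop iterations. Each iteration performs one ANN query and one deletion, and the restoration phase performs at most $2^{O(\delta_S)}$ reinsertions, all at cost $2^{O(\delta_S)}\log n$ per operation by the guarantees of~\cite{CG06}. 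Multiplying, the total query time is $2^{O(\delta_S)}\log n$. The insertion and deletion bounds stated in the lemma are inherited directly from the underlying data structure.
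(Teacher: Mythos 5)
Your proposal is correct and matches the paper's proof essentially step for step: the same loop of repeated $(3/2)$-ANN queries with temporary deletion, the same $3R/2$ termination threshold, the same use of the packing lemma via a spread bound of $3\lambda$ on the removed set, and the same reinsertion phase. The only addition is that you spell out the correctness of the stopping rule explicitly, which the paper leaves implicit.
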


We now show how to answer a query. Let $q$ be a query point, so we
want to return all the points in $T \cap \ball(q,R)$. We repeatedly
perform (3/2)-ANN queries, as long as the distance $d_S(q,\hat q)$ from $q$
to its approximate near neighbor $\hat q$ is at most $3R/2$.
If $d_S(q,\hat q) \leq R$, then we report $\hat q$. In any case,
we delete $\hat q$ from $T$ and insert it into a set $U$.
In the end, we find a point $\hat q$ such that $d_S(q,\hat q) > 3R/2$,
and we complete the procedure by inserting 
all the points of $U$ back into $T$.

As $U \subseteq \ball(q,3R/2)$ we have $\diam(U)\leq 3R$.
Since $U \subseteq S$ we also have $\dmin(S) \geq R/\lambda$.
Therefore, $\Phi(U) \leq 3\lambda$, and since $U$ has doubling
dimension $\delta_S$, we have $|U| =2^{O(\delta_S)}$ by
Lemma~\ref{lem:packing}. Therefore, we answer one fixed-radius
near neighbor query by performing $2^{O(\delta_S)}$ ANN queries,
insertions and deletions. So a fixed-radius near neighbor
query is answered in $2^{O(\delta_S)}\log n$ time.

\subsection{Faster pruning}

The bottleneck of our approximation algorithm is 
when we check whether a matching $\sigma_\eps$ is at distance
at least $\eps r_X/(2\rho^2)$ from each matching 
$\sigma_\eps' \in M(y,X,\eps,r_X)$.
(See Algorithm~\ref{alg:lowlevel}.)
As  $|M(y,X,\eps,r_X)|=O((\rho^2/\eps)^{k\delta})$,
it takes time $O(k(\rho^2/\eps)^{k\delta})$ for each 
matching $\sigma_\eps$ using brute force. 

Instead of checking by brute force, we record each set $M(y,X,\eps,r_X)$ 
in a $(3/2)$-ANN data structure. Remember that we identify the space
of matchings from $X$ to $Y$ with $Y^k$, and thus this space has
doubling dimension $k\delta$ by Corollary~\ref{cor:dimmatching}.
By Property~\ref{prop:DXY}b, we have 
$\dmin(M(y,X,\eps,r_X))\geq\eps r_X/(2\rho^2)$.
By Lemma~\ref{lem:fixednn},
this data structure allows us to  perform insertions 
and fixed-radius near neighbor 
queries in $2^{O(k\delta)}\log n$ time with radius $R=\eps r_X/(2\rho^2)$. 
As we assume that $\delta=O(1)$, this is $2^{O(k)}\log n$ time.
We only need one such query to determine whether 
 $\sigma_\eps$ is at distance
at least $\eps r_X/(2\rho^2)$ from each matching 
$\sigma_\eps' \in M(y,X,\eps,r_X)$.

In summary, we replace the factor $O(k(\rho^2/\eps)^{k\delta})$
in the time bound from Lemma~\ref{lem:lowlevel} with a factor
$2^{O(k)}\log n$.
So the running time of our approximation algorithm is 
given by the recurrence relation (see Theorem~\ref{thm:algorithm}):
\[
	T(n,k,\rho,\eps)=T(n,k_1,\rho,\beta)+ 
	T(n,k_2,\rho,\beta)+
	2^{O(k)}(\rho^2/\beta)^{2k\delta} n \log n+O(n \log \Phi(Y)),
\]
We thus obtain the following time bound.

\begin{lemma}\label{lem:improved}
	The $(\rho,\eps)$-distortion problem can be solved in 
	$2^{O(k^2 \log k)} (\rho^2/\eps)^{2k\delta}n\log n+O(k n\log\Phi(Y))$ time.
\end{lemma}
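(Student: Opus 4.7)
My plan is to take the algorithm of Theorem~\ref{thm:algorithm} and substitute the brute-force duplicate check at line~\ref{line:lowlevel:compare} of Algorithm~\ref{alg:lowlevel} by a fixed-radius near neighbor query in an auxiliary $(3/2)$-ANN structure maintained alongside each set $M(y,X,\eps,r_X)$. The bottleneck in the proof of Lemma~\ref{lem:lowlevel} is the $O(k(\rho^2/\beta)^{k\delta})$ time spent per candidate matching checking that it is at distance at least $\eps r_X/(2\rho^2)$ from every previously inserted element. Since we only care about whether any previously inserted matching lies within distance $\eps r_X/(2\rho^2)$, a single fixed-radius query suffices to decide this.

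The first thing to check is that Lemma~\ref{lem:fixednn} applies in the space $(Y^k,d_M)$. By Corollary~\ref{cor:dimmatching} this space has doubling dimension $k\delta=O(k)$, so the ANN bound becomes $2^{O(k\delta)}\log n = 2^{O(k)}\log n$. The separation hypothesis $\dmin \geq R/\lambda$ with $\lambda = O(1)$ is exactly Property~\ref{prop:DXY}b, which guarantees that every set $M(y,X,\eps,r_X)$ we maintain has minimum pairwise distance at least $\eps r_X/(2\rho^2)$, i.e.\ the query radius itself. So each insertion, deletion, and fixed-radius query in $M(y,X,\eps,r_X)$ takes $2^{O(k)}\log n$ time, and the query returns $2^{O(k)}$ points, which is fine because we only need to know whether the returned set is empty.

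With this replacement, the cost per candidate $\sigma_\eps$ in Algorithm~\ref{alg:lowlevel} drops from $O(k(\rho^2/\beta)^{k\delta})$ to $2^{O(k)}\log n$. The number of candidates examined at each node $y \in Y_{r_X}$ is still $O((\rho^2/\beta)^{2k\delta})$ (a constant number of pairs $(y^P,y^Q)$ by Lemma~\ref{lem:sizennet}, times $O((\rho^2/\beta)^{k\delta})^2$ pairs of matchings by Lemma~\ref{lem:sizemr}), and there are at most $n$ such nodes, so Lemma~\ref{lem:lowlevel} is upgraded to a bound of $2^{O(k)}(\rho^2/\beta)^{2k\delta} n \log n$. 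The cost $O(nk(\rho^2/\beta)^{2k\delta}+n\log\Phi(Y))$ of the bottom-up phase (Lemma~\ref{lem:bottom-up}) is absorbed into this since we only use $k$ scales of the navigating net along the recursion.

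It remains to solve the recurrence
\[
T(n,k,\rho,\eps)=T(n,k_1,\rho,\beta)+T(n,k_2,\rho,\beta)+2^{O(k)}(\rho^2/\beta)^{2k\delta}n\log n+O(n\log\Phi(Y)),
\]
with $k_1+k_2=k$, $1\leq k_1\leq k-1$, and $\beta=\eps/(8k-8)$. Unrolling gives $k-1$ non-trivial nodes and $k$ base cases; at depth at most $k$, the smallest value of the precision parameter is of order $\eps/(8k)^k$. Substituting this back, the dominant per-node cost is
\[
2^{O(k)}\bigl(\rho^2(8k)^k/\eps\bigr)^{2k\delta} n\log n = 2^{O(k^2\log k)}(\rho^2/\eps)^{2k\delta}n\log n,
\]
using $((8k)^k)^{2k\delta} = 2^{O(k^2\log k)}$ since $\delta=O(1)$. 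Summing the $O(k)$ nodes of the recursion tree yields the stated total $2^{O(k^2\log k)}(\rho^2/\eps)^{2k\delta}n\log n + O(kn\log\Phi(Y))$, where the second term accounts for constructing the navigating net once at the start (Lemma~\ref{lem:timennet}). No step looks technically hard; the only subtlety is confirming that every ANN structure we use satisfies the separation hypothesis of Lemma~\ref{lem:fixednn}, which is precisely what Property~\ref{prop:DXY}b is designed to guarantee.
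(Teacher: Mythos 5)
Your proposal is correct and follows essentially the same approach as the paper: replace the brute-force pairwise distance check in Algorithm~\ref{alg:lowlevel} by a fixed-radius near neighbor query in a $(3/2)$-ANN structure over $(Y^k,d_M)$, verify the applicability of Lemma~\ref{lem:fixednn} via Corollary~\ref{cor:dimmatching} and Property~\ref{prop:DXY}b, and re-solve the recurrence. The only slight imprecision is the remark that the $O(kn\log\Phi(Y))$ term ``accounts for constructing the navigating net once at the start'': it actually arises from the $O(n\log\Phi(Y))$ traversal cost incurred in each of the $k-1$ bottom-up steps (Lemma~\ref{lem:bottom-up}) plus the one-time construction, but this does not affect the stated bound.
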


\subsection{Removing the dependency on the spread}
We now show how to remove the $O(kn\log\Phi(Y))$ term in 
the above time bound. This term comes
from the construction and traversal of the navigating net. But we do not
need to compute more than $k$ levels of the navigating nets.
More precisely, we need to compute it at scale $r_X$, and at
each scale $r_{X'}$ where $X'$ is obtained by recursively 
splitting $X$ into two subsets according to Lemma~\ref{lem:splitting}.
In order to perform our bottom-up construction, we will also
need to be able to find, for two such scales $r<r'$, 
an ancestor in $Y_{r'}$  for any node $y \in Y_r$.
We now explain how to achieve this using the ANN data structure.

We first show how to compute $Y_r$ in $O(n \log n)$ time.
We construct $Y_r$ incrementally: For each point $y \in Y$,
we check whether it is at distance at least $r$ from each previously
inserted point. It can be checked in $O(\log n)$ time by
performing a fixed-radius near neighbor query. 
If $y$ is indeed at distance at least $r$ from each previously
inserted point, we insert $y$ into $Y_r$.
In order to perform this test efficiently, we record $Y_r$
in an ANN data structure. By Lemma~\ref{lem:fixednn},
as $\dmin(Y_r) \geq r$, it
allows us to answer fixed-radius near neighbor queries in
$O(\log n)$ time with radius $R=r$. Insertions are also
performed in $O(\log n)$ time, so we obtain the following:
\begin{lemma}\label{lem:rnet}
	Let $Y$ be an $n$-point metric space of constant doubling dimension.
	We can compute an $r$-net of $Y$ 	in $O(n \log n)$ time.
\end{lemma}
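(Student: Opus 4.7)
The plan is to build $Y_r$ incrementally while maintaining it in the dynamic $(3/2)$-ANN data structure of Cole and Gottlieb, exploiting Lemma~\ref{lem:fixednn} to perform fast ``is anything within distance $r$?'' tests. I would iterate through the points of $Y$ in arbitrary order: for each $y$, I perform a single fixed-radius near neighbor query with radius $R = r$ against the currently stored $Y_r$. If the query returns no point, then $y$ is at distance more than $r$ from every previously selected net point, and I insert $y$ into both $Y_r$ and the ANN data structure; otherwise I skip $y$.

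For correctness, the acceptance rule guarantees that every pair of points in the final $Y_r$ is at distance at least $r$, so $\dmin(Y_r) \geq r$. Conversely, any $y \in Y$ that was skipped must have had some $z \in Y_r$ within distance $r$ at the moment it was processed, hence $y \in \ball(z,r)$. Therefore $Y \subseteq \bigcup_{z \in Y_r} \ball(z,r)$, which together with the separation condition confirms that $Y_r$ is an $r$-net.

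For the running time, I would invoke Lemma~\ref{lem:fixednn} with the ambient space $Y$, dynamic set $T = Y_r$, radius $R = r$, and constant $\lambda = 1$. The hypothesis $\dmin(Y_r) \geq R/\lambda$ is precisely the invariant maintained by the algorithm, so it is available at every step. Each fixed-radius near neighbor query and each insertion therefore takes $2^{O(\delta)} \log n = O(\log n)$ time, using $\delta = O(1)$. Summing over the $n$ points of $Y$ gives the claimed $O(n \log n)$ bound.

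The only subtle point — and really the entire substance of the argument — is ensuring that the hypothesis of Lemma~\ref{lem:fixednn} remains valid throughout the incremental construction; this is automatic because the hypothesis and the insertion rule are one and the same. The hard part of the work was already done in establishing Lemma~\ref{lem:fixednn}, so here I merely have to check that the bootstrap is consistent.
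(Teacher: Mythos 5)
Your proposal matches the paper's proof exactly: both build $Y_r$ incrementally, maintain it in Cole and Gottlieb's dynamic ANN structure, and use a fixed-radius near-neighbor query with $R=r$ (valid because the insertion rule keeps $\dmin(Y_r)\geq r$) to decide whether to admit each new point, yielding $O(\log n)$ per point and $O(n\log n)$ overall. The only difference is that you spell out the bootstrap argument a bit more explicitly; the substance is the same.
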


Our algorithm for the $(\rho,\eps)$-distortion problem requires
us to know the horizontal edges of each $r$-net we consider.
They can also be computed in $O(n \log n)$ time using
the same approach as the lemma above, except that the radius
$R$ is now $6r$.

\begin{lemma}\label{lem:computezrh}
	We can compute all the horizontal edges of $Y_r$ in 
	$O(n \log n)$ time.
\end{lemma}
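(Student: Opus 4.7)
The plan is to adapt the incremental construction used in the proof of Lemma~\ref{lem:rnet}, replacing the role of the exclusion query (radius $r$) with a neighbor-enumeration query (radius $6r$). First, I would compute $Y_r$ in $O(n\log n)$ time by Lemma~\ref{lem:rnet}, and insert every point of $Y_r$ into a fresh $(3/2)$-ANN data structure of Cole and Gottlieb. Since $\dmin(Y_r)\geq r$, we are exactly in the setting of Lemma~\ref{lem:fixednn} with $R=6r$ and $\lambda=6$, so this data structure supports fixed-radius near neighbor queries of radius $6r$ in $2^{O(\delta)}\log n = O(\log n)$ time, each returning $2^{O(\delta)}=O(1)$ points.

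Next, for each $y\in Y_r$, I would issue one fixed-radius near neighbor query centered at $y$ with radius $6r$ against the data structure holding $Y_r$. The query returns the set $\{y'\in Y_r : d_Y(y,y')\leq 6r\}$, which by construction is precisely the set of horizontal neighbors of $y$ at scale $r$. For each returned point $y'\neq y$, I output the edge $(y,y')$ (either keeping a hash table to avoid emitting each edge twice, or simply accepting the constant factor of duplication since each query returns $O(1)$ points).

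Since $|Y_r|\leq n$, the total number of queries is at most $n$, each costing $O(\log n)$ time, and the total number of edges produced is $O(n)$ because each node has $O(1)$ horizontal neighbors (as already shown in the proof of Lemma~\ref{lem:sizennet}). Adding the $O(n\log n)$ cost of computing $Y_r$ itself and the $O(n\log n)$ cost of building the ANN structure by successive insertions, the overall running time is $O(n\log n)$, as required.

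There is essentially no real obstacle here: the lemma is a direct reuse of the machinery of Lemma~\ref{lem:fixednn} and Lemma~\ref{lem:rnet}. The only mild subtlety to flag is that the constant $\lambda$ in the hypothesis of Lemma~\ref{lem:fixednn} must be verified for the larger radius $R=6r$; this holds with $\lambda=6$, so the per-query cost remains $O(\log n)$ and the bound $O(n\log n)$ follows.
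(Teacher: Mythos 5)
Your argument is correct and matches the paper's (very terse) approach: the paper simply says to reuse the construction of Lemma~\ref{lem:rnet} with the query radius changed from $r$ to $6r$, which is exactly what you do. The only extra care you add — checking that $\dmin(Y_r)\geq r$ gives the constant $\lambda=6$ in Lemma~\ref{lem:fixednn} — is a detail the paper leaves implicit, and you handle it correctly.
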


Finally, at scales $r<r'$, we need to be able to find the
ancestor in $Y_{r'}$ for each node $y \in Y_r$, in order to run
our bottom-up construction of $L(W,\beta,r')$ from
$L(W,\beta,r)$. As we no longer have vertical edges, 
we relax the definition of ancestors as follows:
The {\em ancestor} of $y \in Y_r$  is
a point $y' \in Y_{r'}$ such that 
$d_Y(y,y') \leq r'$. (There could be several such points;
 we can pick any one of them to be the ancestor of $y$.)
This property suffices for our algorithm to work as
described in Section~\ref{sec:decision}.

In order to compute the ancestors, we construct an ANN data
structure for $Y_{r'}$.  By Lemma~\ref{lem:fixednn},
as $\dmin(Y_{r'}) \geq r'$, we can construct the ancestor of any
point in $O(\log n)$ time by performing a fixed radius
near-neighbor query with $R=r'$.

\begin{lemma}\label{lem:ancestorz}
	Given $Y_r$ and $Y_{r'}$ such that $r'>r$, we can compute
  $O(n \log n)$ time the ancestors in $Y_{r'}$ of all
	the points in $Y_r$.
\end{lemma}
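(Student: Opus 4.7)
The plan is essentially to invoke the fixed-radius near-neighbor machinery of Lemma~\ref{lem:fixednn} on $Y_{r'}$. First I would build a $(3/2)$-ANN data structure on the point set $Y_{r'}$, inserting its points one by one; since $\dmin(Y_{r'}) \geq r'$, Lemma~\ref{lem:fixednn} with $\lambda = 1$ guarantees each insertion takes $2^{O(\delta)}\log n = O(\log n)$ time, for a total preprocessing cost of $O(n\log n)$.

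Next, for each $y \in Y_r$ I would perform a single fixed-radius near-neighbor query with center $y$ and radius $R = r'$. By Lemma~\ref{lem:fixednn}, each such query runs in $O(\log n)$ time and returns every point of $Y_{r'}$ lying within distance $r'$ of $y$. The relaxed definition of ancestor requires only that one such point exist, and this is guaranteed: since $Y_{r'}$ is an $r'$-net of $Y$, the balls of radius $r'$ centered at $Y_{r'}$ cover $Y$, so there is at least one $y' \in Y_{r'}$ with $d_Y(y,y') \leq r'$. I would designate an arbitrary point returned by the query as the ancestor of $y$.

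Summing over the $|Y_r| \leq n$ queries, the total query time is $O(n \log n)$, and combined with the preprocessing this yields the claimed $O(n \log n)$ bound. There is no real obstacle here: the argument is a direct application of Lemma~\ref{lem:fixednn} once one observes that $r'$ is exactly the right radius, both to make the packing bound $\dmin(Y_{r'}) \geq R$ hold with $\lambda=1$ and to guarantee a nonempty answer via the $r'$-net covering property.
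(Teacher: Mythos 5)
Your proposal is correct and matches the paper's proof: both build a $(3/2)$-ANN data structure on $Y_{r'}$ and, for each $y \in Y_r$, run a fixed-radius near-neighbor query with $R = r'$, using $\dmin(Y_{r'}) \geq r'$ to invoke Lemma~\ref{lem:fixednn} and the $r'$-net covering property to guarantee a nonempty answer.
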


This shows that we can run the algorithm from 
Section~\ref{sec:decision}  without computing the whole navigating net,
hence the $O(nk\log \Phi(Y))$ term in Lemma~\ref{lem:improved}
is replaced with $O(kn \log n)$, as we only compute $Y_r$ for $k$
different values of $r$. In summary, we obtain the following result.

\begin{theorem}
	\label{th:nospread}
	The $(\rho,\eps)$-distortion problem can be solved in 
	$2^{O(k^2 \log k)} (\rho^2/\eps)^{2k\delta}n\log n$ time.
\end{theorem}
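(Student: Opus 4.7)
The plan is to combine Lemma~\ref{lem:improved} with the observation that the recursive algorithm of Section~\ref{sec:decision} only ever probes the navigating net at $O(k)$ distinct scales, so we can avoid constructing all $O(\log \Phi(Y))$ levels and thereby eliminate the spread-dependent term.

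First, I would trace the recursion: starting from $X$ we repeatedly split $X$ via Lemma~\ref{lem:splitting}, producing a recursion tree with $O(k)$ subsets $W\subseteq X$, and for each such $W$ the algorithm reads the navigating net only at the single scale $r_W$ (the smallest scale that is at least $\rho\diam(W)$). The bottom-up steps of Lemma~\ref{lem:bottom-up} then relate a computed level $Y_{r_W}$ to a larger level $Y_{r_{W'}}$ for the parent subset $W'$ in the recursion. So the total information that the algorithm actually needs from the navigating net is: the $r$-net $Y_r$ and its horizontal edges for $O(k)$ scales $r$, together with, for each consecutive pair $r<r'$ in this set of scales, an ancestor assignment $Y_r\to Y_{r'}$.

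Second, I would produce each of these ingredients on demand in $O(n\log n)$ time using the Cole--Gottlieb ANN data structure: the $r$-net $Y_r$ by Lemma~\ref{lem:rnet}, its horizontal edges by Lemma~\ref{lem:computezrh}, and the ancestors by Lemma~\ref{lem:ancestorz}. Since only $O(k)$ scales are queried, the total cost of this "sparse" navigating-net construction is $O(kn\log n)$, replacing the $O(kn\log\Phi(Y))$ term of Lemma~\ref{lem:improved}. Adding this to the $2^{O(k^2\log k)}(\rho^2/\eps)^{2k\delta}n\log n$ term yields the claimed bound, since $O(kn\log n)$ is absorbed.

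The main obstacle will be verifying that the relaxed ancestor relation of Lemma~\ref{lem:ancestorz}, defined only by $d_Y(y,y')\leq r'$ rather than by a chain of parent pointers in a full navigating net, still suffices for the bookkeeping of Section~\ref{sec:decision}. Concretely, the bottom-up step of Lemma~\ref{lem:bottom-up} uses the ball-containment $\ball(y,3r)\subseteq \ball(y',3r')$ to propagate Property~\ref{prop:DXY}a upward, which under the relaxed relation becomes $\ball(y,3r)\subseteq \ball(y',r'+3r)\subseteq \ball(y',4r')$; likewise the distance estimate $d_Y(\bar y,\bar y^P)\leq 6r_X$ in the proof of Lemma~\ref{lem:lowlevel} has to be re-derived from $d_M(\bar\sigma^P,\bar\sigma_\beta^P)\leq \beta r_X/\rho^2$ and the relaxed ancestor bound. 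Each of these checks goes through with only a constant-factor loosening of the radii appearing in Property~\ref{prop:DXY}, which can be absorbed by adjusting the constants hidden in the $O(\cdot)$ notation of the radius $3r$ and of the time bound, yielding Theorem~\ref{th:nospread}.
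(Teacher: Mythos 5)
Your proposal matches the paper's proof of Theorem~\ref{th:nospread}: the paper likewise observes that only $O(k)$ scales are needed, computes each required $r$-net, its horizontal edges, and the ancestor map on demand in $O(n\log n)$ time via Lemmas~\ref{lem:rnet}, \ref{lem:computezrh}, and~\ref{lem:ancestorz}, and absorbs the resulting $O(kn\log n)$ into the dominant term. The ``obstacle'' you flag is actually a non-issue: since $r<r'$ are distinct scales (powers of two), $r\leq r'/2$, so the relaxed ancestor $d_Y(y,y')\leq r'$ already yields $\ball(y,3r)\subseteq \ball(y',r'+3r)\subseteq\ball(y',3r')$ with no constant loosening, and the bound $d_Y(\bar y,\bar y^P)<6r_X$ in Lemma~\ref{lem:lowlevel} involves only horizontal edges at scale $r_X$, not the ancestor relation.
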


\section{Minimum distortion}\label{sec:mindist}

In this section, we show how to extend our results on the
$\rho$-distortion problem (Sections~\ref{sec:hardness},~\ref{sec:decision}
and~\ref{sec:improved}) to the minimum distortion problem.

\subsection{Hardness result}
We  show how to modify the construction from Section~\ref{sec:hardness} 
so as to obtain a reduction
to the minimum distortion problem. Let $\lambda$ be a large enough number,
say $\lambda=5m\rho^2 2^{k}$.
We add an extra point $x_{k+1}$ to $X$ such that $d_X(x_{i},x_{k+1})=\lambda$ for all $i\leq k$.
We also add a point $p_{k+1}$ to $Y$ that is at distance $\lambda$ from every other
point. If there is a $k$-clique in $G$, then the mapping with minimum distortion
is the same as in Section~\ref{sec:hardness}, where we add 
$\sigma_{x_{k+1}}=y_{k+1}$. It satisfies $\expansion(\sigma)=1$ and 
$\expansion(\sigma^{-1})=\rho$, hence $\dist(X,Y)=\dist(\sigma)=\rho$.

Conversely, suppose that $\dist(X,Y)=\rho$. We first argue that we must have
$\sigma(x_{k+1})=p_{k+1}$. If it were not the case, then 
$d_Y(\sigma(x_1),\sigma(x_{k+1}))\leq 2^k$ so $\expansion(\sigma^{-1})\geq 5m\rho^2$.
As $d_X(x_1,x_2)=4\rho$ and $d_Y(\sigma(x_1),\sigma(x_2)) \geq 1/m$, we also have
$\expansion(\sigma) \geq 1/(4\rho m)$ and thus $\dist(x,y)\geq 5 \rho/4$, a
contradiction. 

Therefore, we must have $\sigma(x_{k+1})=y_{k+1}$. It implies that 
$\expansion(\sigma) \geq 1$. Then the only way to obtain $\dist(\sigma) \leq \rho$
is to use the same construction as in Section~\ref{sec:hardness}, where
we match each point $x_i$ with a point in $R_i$ for $3\leq i \leq k$.
The corresponding vertices in the $k$-clique instance form a clique.

\begin{corollary}\label{corr:minimum}
	Suppose that $X$ and $Y$ have doubling dimension $\log_2 3$.
	For any $\rho \geq 1$, the problem of deciding whether the minimum distortion 
	$\dist(X,Y)$ is equal to $\rho$ 	is NP-hard, and is $W[1]$-hard when  parameterized by $k$.
	It cannot be solved in time $f(k) \cdot n^{o(k)}$ for any computable function
	$f$, unless  ETH is false.
\end{corollary}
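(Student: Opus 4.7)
The plan is to extend the reduction of Section~\ref{sec:hardness} by adding a single far-away point to each of $X$ and $Y$, as sketched in the paragraph preceding the corollary. This converts the $\rho$-distortion reduction into one for the minimum distortion problem, while preserving both the FPT-reduction structure (parameter $k+1$, which is equivalent to parameterization by $k$) and the doubling dimension $\log_2 3$.

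First I would formalize the extended construction: starting from $(X,d_X)$ and $(Y,d_Y)$ built from $G(V,E)$ in Section~\ref{sec:construction}, add a point $x_{k+1}$ to $X$ with $d_X(x_i,x_{k+1})=\lambda$ for all $i\leq k$, and add $p_{k+1}$ to $Y$ at distance $\lambda$ from every other point, where $\lambda=5m\rho^2 2^k$ is chosen so that $\lambda \gg \diam(Y)$ and $\lambda \gg \diam(X)/\rho$. I would verify that these augmented spaces are still metric spaces (the new distances dominate the triangle-inequality constraints because $\lambda$ exceeds twice the diameter of the original spaces) and still have doubling dimension $\log_2 3$: a ball of radius $r<\lambda/2$ centered at $p_{k+1}$ (resp.\ $x_{k+1}$) contains only that single point, while a ball of radius $r\geq \lambda/2$ covers essentially everything and can be decomposed using the original construction's cover plus one extra ball of half-radius containing the new point, and the argument of Lemma~\ref{lem:Ydoubling} extends routinely since the new distances behave like an outermost ``ring at scale $\lambda$.''

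Next I would prove the equivalence. The forward direction reuses the $\rho$-distortion matching from Section~\ref{sec:hardness} and extends it by $\sigma(x_{k+1})=p_{k+1}$; since $d_Y(p_{k+1},\sigma(x_i))=\lambda=(1/1)\cdot d_X(x_{k+1},x_i)$, this pair contributes $1$ to $\expansion(\sigma)$ and $1$ to $\expansion(\sigma^{-1})$, so by the Section~\ref{sec:hardness} analysis $\dist(\sigma)=\rho$. For the converse, assume $\dist(X,Y)\leq \rho$. I would argue that any minimum-distortion $\sigma$ must send $x_{k+1}$ to $p_{k+1}$: otherwise $\sigma(x_{k+1})$ lies in some ring $R_i$, so $d_Y(\sigma(x_j),\sigma(x_{k+1}))\leq 2^k$ for the appropriate $j$, forcing $\expansion(\sigma^{-1})$ to be enormous, and since some pair in the remaining rings forces $\expansion(\sigma)$ to be bounded away from zero, the product exceeds $\rho$, contradicting $\dist(\sigma)\leq \rho$. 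The choice of $\lambda$ makes this quantitative; I would verify the arithmetic carefully to rule out $\expansion(\sigma)\cdot \expansion(\sigma^{-1})\leq \rho$. Once $\sigma(x_{k+1})=p_{k+1}$ is forced, the analysis of Section~\ref{sec:hardness} applies to the restriction of $\sigma$ to $X\setminus\{x_{k+1}\}$ (after rescaling by $\expansion(\sigma)\geq 1$, which comes from the pair $(x_i,x_{k+1})$), yielding a $k$-clique in $G$ via Lemma~\ref{lem:natural_gadget} and the subsequent arguments.

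The main obstacle will be the quantitative bookkeeping in the converse: because $\dist(\sigma)=\expansion(\sigma)\cdot\expansion(\sigma^{-1})$ is multiplicative, I must ensure that no rescaling of the pattern by an arbitrary factor can hide a bad matching on $X\setminus\{x_{k+1}\}$. This is why $\lambda$ must be chosen much larger than both $2^k$ and $\rho m$: it pins down $\expansion(\sigma)$ and $\expansion(\sigma^{-1})$ to within a factor $1+o(1)$ on the pair involving $x_{k+1}$, which then propagates to all other pairs and forces the clique structure. Once the equivalence is established, NP-hardness and W[1]-hardness follow immediately since the reduction is polynomial and FPT in $k$, and the ETH lower bound transfers from \cite[Theorem 14.21]{Cygan15} exactly as in Corollary~\ref{cor:ETH}.
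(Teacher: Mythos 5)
Your proposal follows the paper's argument exactly: the same augmented construction with $\lambda = 5m\rho^2 2^k$, the same forward direction, and the same contradiction forcing $\sigma(x_{k+1}) = p_{k+1}$. One small simplification you could make in the last step: since $d_X(x_i,x_{k+1}) = d_Y(\sigma(x_i),p_{k+1}) = \lambda$ exactly, both $\expansion(\sigma)\geq 1$ and $\expansion(\sigma^{-1})\geq 1$ hold with no slack, so $\dist(\sigma)\leq\rho$ immediately forces each factor to be at most $\rho$ and hence the restriction of $\sigma$ to $X\setminus\{x_{k+1}\}$ is a genuine $\rho$-matching — no rescaling is needed.
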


\subsection{A first algorithm}
We now present a decision algorithm. Given two positive reals $e$, $e'$,
we would like to know whether there is a mapping $\sigma: X \to Y$ such
that $\expansion(\sigma) \leq e$ and $\expansion(\sigma^{-1}) \leq e'$.
As observed by Kenyon et al.~\cite{Kenyon09}, this is true if and only if there is
mapping $\sigma': X \to Y$ with 
$\expansion(\sigma') \leq \sqrt{ee'}$ and $\expansion(\sigma'^{-1}) \leq \sqrt{ee'}$,
where  we use the metric $d'_{Y}=\sqrt{e'/e}\cdot d_Y$ on $Y$, and we still use
$d_X$ on $X$. In other words, we are solving the $\rho$-distortion problem
with $\rho=\sqrt{ee'}$ on $(X,d_X)$ and $(Y,d'_Y)$. So we obtain
an approximate answer using our algorithm from Theorem~\ref{th:nospread}:
\begin{lemma}\label{lem:decision}
	Given $e,e'>0$ and $0<\eps \leq 1$, our decision algorithm returns 
		a positive answer if there exists a mapping $\sigma:X \to Y$ such that
  $\expansion(\sigma) \leq e$ and $\expansion(\sigma^{-1}) \leq e'$,
  	and returns a negative answer if there is no mapping $\sigma:X \to Y$ 
  	such that $\expansion(\sigma) \leq (1+\eps)e$ and $\expansion(\sigma^{-1}) \leq (1+\eps)e'$.
  Its running time is $2^{O(k^2 \log k)}  (ee'/\eps)^{2k\delta}n\log n.$
\end{lemma}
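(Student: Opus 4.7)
My plan is to reduce the problem to the $(\rho,\eps)$-distortion problem via the scaling trick of Kenyon et al., and then apply Theorem~\ref{th:nospread} as a black box. The key algebraic identity is that a mapping $\sigma:X\to Y$ has $\expansion(\sigma)\leq e$ and $\expansion(\sigma^{-1})\leq e'$ with respect to $d_Y$ if and only if, in the rescaled metric $d'_Y=\sqrt{e'/e}\cdot d_Y$, the same mapping $\sigma$ satisfies the two-sided inequality
\[
\frac{1}{\sqrt{ee'}}\,d_X(x,x')\;\leq\; d'_Y(\sigma(x),\sigma(x'))\;\leq\;\sqrt{ee'}\,d_X(x,x'),
\]
i.e., $\sigma$ is a $\sqrt{ee'}$-matching in the sense of Equation~\eqref{eq:distortion}. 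Setting $\rho=\sqrt{ee'}$ thus turns the input of the decision problem into an instance of the $\rho$-distortion problem on $(X,d_X)$ and $(Y,d'_Y)$.

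Next I would verify that this rescaled instance is admissible for Theorem~\ref{th:nospread}. Scaling $d_Y$ by a positive constant preserves the doubling dimension, so $(Y,d'_Y)$ is still doubling of dimension $\delta$. If $ee'<1$ then $\rho<1$ and no $\rho$-matching is possible (since $\dist(X,Y)\geq 1$ for any finite spaces), in which case no mapping with $\expansion(\sigma)\leq e$ and $\expansion(\sigma^{-1})\leq e'$ exists and the algorithm can safely return a negative answer. Otherwise $\rho=\sqrt{ee'}\geq 1$, and we invoke Theorem~\ref{th:nospread} on $(X,d_X),(Y,d'_Y)$ with parameters $\rho$ and $\eps$.

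The correctness is then immediate from the guarantees of Theorem~\ref{th:nospread}. For the positive direction: if a mapping with $\expansion(\sigma)\leq e$ and $\expansion(\sigma^{-1})\leq e'$ exists, it is a $\rho$-matching in the rescaled instance, so the $(\rho,\eps)$-distortion algorithm returns some $(1+\eps)\rho$-matching, which we translate into a positive answer. For the negative direction: whenever the algorithm returns a matching, it is a $(1+\eps)\rho$-matching in $(Y,d'_Y)$, and undoing the scaling yields a mapping whose expansion is at most $(1+\eps)e$ and whose inverse expansion is at most $(1+\eps)e'$; so in the absence of any such mapping, the algorithm returns no matching, hence a negative answer. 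The intermediate gap case (where a $(1+\eps)e,(1+\eps)e'$ mapping exists but no $e,e'$ mapping does) is consistent with either output, as the statement allows.

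The running time follows by direct substitution: Theorem~\ref{th:nospread} runs in $2^{O(k^2\log k)}(\rho^2/\eps)^{2k\delta}n\log n$ time, and $\rho^2=ee'$, giving the claimed bound $2^{O(k^2\log k)}(ee'/\eps)^{2k\delta}n\log n$. I do not anticipate a real obstacle here since everything reduces cleanly to the black box; the one subtlety to be careful about is handling the degenerate case $ee'<1$ and verifying that the scaling does not affect the doubling dimension, both of which are one-liners.
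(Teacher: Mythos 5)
Your proposal is correct and follows the same route as the paper: rescale $d_Y$ by $\sqrt{e'/e}$ so that the two one-sided expansion bounds become the symmetric $\rho$-matching condition with $\rho=\sqrt{ee'}$, then apply Theorem~\ref{th:nospread} as a black box, giving the stated running time via $\rho^2=ee'$. Your extra remarks about the $ee'<1$ degenerate case and scale-invariance of the doubling dimension are sound and only make the argument more explicit than the paper's brief sketch.
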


Still following the approach by Kenyon et al., we can turn this decision
algorithm into an optimization algorithm by observing that $\expansion(\sigma)$
and $\expansion(\sigma^{-1})$ are ratios of distances between points of $X$
and $Y$, and thus there are $O(k^2n^2)$ candidate values for each.
So for any distinct $x,x' \in X$ and $y,y' \in Y$, we run the algorithm from
Lemma~\ref{lem:decision} using $e=d_Y(y,y')/d_X(x,x')$ and $e'=d_X(x,x')/d_Y(y,y')$
and with a relative error ratio $\eps'=\eps/3$. 
More precisely, we consider these pairs $(e,e')$ by increasing value of the product $ee'$, 
and we stop as soon as
we obtain a positive answer. At this point, there must be a matching $\sigma:X \to Y$
such that $\expansion(\sigma) \leq (1+\eps/3)e$ and $\expansion(\sigma^{-1}) \leq (1+\eps/3)e'$,
and since $0<\eps\leq 1$, we have $\dist(X,Y) \leq(1+\eps/3)^2 \leq (1+\eps)ee'$.
On the other hand, we must have $ee' \leq \dist(X,Y)$ as otherwise,
the algorithm would have halted earlier. So taking $\Delta_\eps=ee'$, we obtain the following:

\begin{lemma}\label{lem:optimization}
	Given $0 < \eps \leq 1$, we can compute 
	a  number $\Delta_\eps$ such that $\dist(X,Y) \leq \Delta_\eps \leq (1+\eps)\dist(X,Y)$
	in time 	$2^{O(k^2 \log k)} (\dist(X,Y)/\eps)^{2k\delta}k^4n^5\log n$.
\end{lemma}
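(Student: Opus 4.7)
The plan is to follow Kenyon et al.'s reduction from the minimum distortion problem to a family of $\rho$-distortion decisions, each answered by Lemma~\ref{lem:decision}. Observe that for every injection $\sigma:X\to Y$, both $\expansion(\sigma)$ and $\expansion(\sigma^{-1})$ are maxima of ratios of the form $d_Y(y,y')/d_X(x,x')$ or $d_X(x,x')/d_Y(y,y')$, so they lie in a set
\[
  \mathcal E = \{\, d_Y(y,y')/d_X(x,x') : x\neq x'\in X,\ y\neq y'\in Y\,\} \cup \{\, d_X(x,x')/d_Y(y,y') : x\neq x'\in X,\ y\neq y'\in Y\,\}
\]
of size $O(k^2 n^2)$. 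First I would enumerate the $O(k^4 n^4)$ pairs $(e,e')\in \mathcal E\times \mathcal E$, sort them by increasing product $ee'$, and then sweep through them in this order, invoking the decision algorithm of Lemma~\ref{lem:decision} on each pair with approximation parameter $\eps/3$. The algorithm halts at the first pair that yields a positive answer and returns $\Delta_\eps=(1+\eps/3)^2\, e e'$.

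For correctness, let $\sigma^*$ be an optimal matching and set $\hat e=\expansion(\sigma^*)$, $\hat e'=\expansion(\sigma^{*-1})$, so that $(\hat e,\hat e')\in\mathcal E\times\mathcal E$ and $\hat e\hat e'=\dist(X,Y)$. The decision algorithm necessarily returns positive on $(\hat e,\hat e')$, so our algorithm halts on some pair $(e,e')$ processed no later than $(\hat e,\hat e')$, hence with $ee'\leq\dist(X,Y)$. At that halting pair, Lemma~\ref{lem:decision} furnishes a matching $\sigma$ with $\expansion(\sigma)\leq(1+\eps/3)e$ and $\expansion(\sigma^{-1})\leq(1+\eps/3)e'$, so $\dist(X,Y)\leq\dist(\sigma)\leq(1+\eps/3)^2 ee'=\Delta_\eps$. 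Conversely, $\Delta_\eps=(1+\eps/3)^2 ee'\leq(1+\eps)\dist(X,Y)$ since $(1+\eps/3)^2\leq 1+\eps$ for all $0<\eps\leq 1$.

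For the running time, sorting the $O(k^4 n^4)$ pairs costs $O(k^4 n^4\log n)$, which will turn out to be dominated. Since the algorithm halts no later than on reaching $(\hat e,\hat e')$, every decision call we actually make is on a pair with $ee'\leq \dist(X,Y)$, so by Lemma~\ref{lem:decision} each call runs in time $2^{O(k^2\log k)}(\dist(X,Y)/\eps)^{2k\delta}n\log n$. Multiplying by the $O(k^4 n^4)$ calls yields the stated bound $2^{O(k^2\log k)}(\dist(X,Y)/\eps)^{2k\delta}k^4 n^5\log n$.

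The main subtlety I expect is to argue that the per-call cost must be controlled by $\dist(X,Y)$ rather than by the largest product in $\mathcal E\times\mathcal E$ (which could be far larger than $\dist(X,Y)$); sorting pairs in increasing order of product and halting early is precisely what secures this. A secondary point is that we crucially rely on the spread-free bound of Theorem~\ref{th:nospread} baked into Lemma~\ref{lem:decision}, for otherwise the per-call cost would carry an unavoidable $\log \Phi(Y)$ factor that could swamp the target running time.
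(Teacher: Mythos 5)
Your proof is correct and follows essentially the same route as the paper: enumerate the $O(k^4 n^4)$ candidate expansion pairs, sort by increasing product $ee'$, and call the decision procedure of Lemma~\ref{lem:decision} with parameter $\eps/3$, halting at the first positive answer; the increasing-product ordering guarantees that every call has $ee'\leq \dist(X,Y)$, which is what keeps the per-call cost bounded by $(\dist(X,Y)/\eps)^{2k\delta}$. One small refinement you make over the paper's wording: by returning $\Delta_\eps=(1+\eps/3)^2 ee'$ (instead of the paper's stated $\Delta_\eps=ee'$), you correctly secure the lower bound $\dist(X,Y)\leq\Delta_\eps$, which the paper's literal choice does not.
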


\subsection{Improved algorithm} We now describe a faster version of this optimization
algorithm. To this end, we need to introduce {\it well-separated pairs decompositions} 
(WSPD)~\cite{HPbook}. 
Given a metric space $(S,d_S)$ 
and two subsets $A,B \subseteq S$, we say that $A$ and $B$ are 
$(1/\eps)$-{\it separated} if  $\diam(A) \leq \eps d_S(A,B)$ and 
$\diam(B) \leq \eps d_S(A,B)$. As a result,
for any $a,a' \in A$ and $b,b' \in B$, we have
\begin{equation}\label{eq:WSPD}
	d_S(a,b)/(1+2\eps) \leq d_S(a',b') \leq (1+2\eps)d_S(a,b).
\end{equation}
A WSPD of $S$ of size $m$  is a collection of pairs of subsets
$\{A_1,B_1\}$, $\{A_2,B_2\}$,  $\dots,\{A_m,B_m\}$ of $S$ such that 
$A_i$ and $B_i$ are $(1/\eps)$-separated for all $i$, and for any two distinct points
$a,b \in S$, there is a pair $A_i,B_i$ such that $(a,b) \in A_i \times B_i$ or
$(a,b) \in B_i \times A_i$.

Callahan and Kosaraju~\cite{CallahanK95} showed that a WSPD of size $n/\eps^{O(1)}$
can be implicitly computed in $O(n \log n)+n/\eps^{O(1)}$ time when $S$ is a set of $n$ points
in a fixed-dimensional Euclidean space. In particular, we can obtain a pair
of points $(a_i,b_i) \in A_i \times B_i$ for all $i$ within this time bound.
This result was  generalized to metrics of fixed doubling dimension
by Talwar~\cite{Talwar04}, but with size and construction time bounds
larger by a factor $O(\log \Phi(S))$. Still for doubling metrics, Har-Peled and Mendel gave a randomized 
algorithm that constructs  in  $O(n \log n)+n/\eps^{O(1)}$ time a WSPD of size
$n/\eps^{O(1)}$~\cite{HM06}, which can be derandomized
using Cole and Gottlieb's data structure~\cite{CG06}. 
So if we let $\ell_i=d_S(a_i,b_i)$, we obtain the following.
\begin{lemma}\label{lem:WSPD}
	We can compute in time $O(n \log n)+n/\eps^{O(1)}$ a set of lengths $\{\ell_1,\dots,\ell_m\}$
	such that $m=n/\eps^{O(1)}$ and for every pair of distinct points $a,b \in S$, there exists
	$i$ such that $\ell_i/(1+\eps) \leq d_S(a,b) \leq (1+\eps)\ell_i$.
\end{lemma}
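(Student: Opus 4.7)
The plan is to assemble the lemma directly from the cited WSPD construction for doubling metrics, adjusting the separation parameter so that the stated $(1+\eps)$ guarantee holds.

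First, I would invoke the Har-Peled--Mendel construction \cite{HM06}, derandomized via Cole and Gottlieb's ANN data structure \cite{CG06}, with separation parameter $\eps' = \eps/3$ (the precise constant will be fixed in the calculation below). This yields, in time $O(n \log n) + n/\eps^{O(1)}$, a collection of $m = n/\eps^{O(1)}$ pairs $\{A_1,B_1\},\dots,\{A_m,B_m\}$ that is a $(1/\eps')$-WSPD of $S$: each pair satisfies $\diam(A_i),\diam(B_i) \leq \eps' d_S(A_i,B_i)$, and every pair of distinct points of $S$ is covered. The same algorithm returns, for each index $i$, a representative pair $(a_i,b_i) \in A_i \times B_i$; we set $\ell_i := d_S(a_i,b_i)$. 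The total time is within the claimed bound.

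Next, I would verify the approximation property. Let $a,b \in S$ be distinct. By the covering property, there exists $i$ such that, without loss of generality, $a \in A_i$ and $b \in B_i$. Applying Inequality~\eqref{eq:WSPD} with $\eps'$ in place of $\eps$ to the pair $(a,b)$ and the representative $(a_i,b_i)$, we get
\[
	\frac{\ell_i}{1+2\eps'} \leq d_S(a,b) \leq (1+2\eps')\ell_i.
\]
Choosing $\eps' = \eps/3$ gives $1 + 2\eps' = 1 + 2\eps/3 \leq 1+\eps$, and since $(1-\eps/3)^{-1} \geq 1+\eps/3$ for $\eps\leq 1$, we also have $(1+2\eps')^{-1} \geq 1/(1+\eps)$. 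Thus $\ell_i/(1+\eps) \leq d_S(a,b) \leq (1+\eps)\ell_i$, as required. The size $m$ and construction time are unchanged up to the $\eps^{O(1)}$ factors hidden in the exponent.

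The only mildly subtle point is ensuring the chosen reference (Har-Peled--Mendel for doubling metrics) really does produce a WSPD with both size and construction time of the form $O(n\log n)+n/\eps^{O(1)}$ independent of $\Phi(S)$; this is not automatic from the earlier Talwar construction \cite{Talwar04}, which carries a $\log \Phi(S)$ factor. I would therefore cite \cite{HM06} explicitly for both bounds, noting the derandomization via \cite{CG06}, and use the representative pair that the construction already maintains, so that no additional computation is needed to extract the $\ell_i$. The rest is constant juggling, which is why the main obstacle here is really bibliographic rather than mathematical.
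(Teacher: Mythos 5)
Your proof follows essentially the same route as the paper: invoke the Har-Peled--Mendel WSPD construction for doubling metrics (derandomized via Cole--Gottlieb), take $\ell_i$ to be the distance of the maintained representative pair, and note that the well-separation property (Inequality~\eqref{eq:WSPD}, with a rescaled separation parameter) gives the stated $(1+\eps)$-approximation. The only difference is that you spell out the choice of $\eps'$ explicitly, which the paper leaves implicit; the parenthetical about $(1-\eps/3)^{-1}$ is unnecessary since $1+2\eps' \le 1+\eps$ already yields the lower bound directly, but the conclusion is correct.
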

We can now present an approximate decision algorithm for the minimum distortion problem.
Given a distortion $\Delta$, it returns a positive answer if $\dist(X,Y) \leq \Delta$,
and it returns a negative answer if $\dist(X,Y) > (1+\eps)\Delta$. We first compute
using Lemma~\ref{lem:WSPD} a set $\{\ell_1,\dots,\ell_m\}$ such
that $m=n/\eps^{O(1)}$ and for all $y,y' \in Y$, $y \neq y'$,  we have
$\ell_i/(1-\eps/6) \leq d_Y(y,y') \leq (1+\eps/6)\ell_i$ for some $i$. Then for each $i \in \{1,\dots,m\}$,
and for each pair of distinct points $x,x' \in X$, 
we run the decision algorithm of Lemma~\ref{lem:decision} with 
expansions $e=(1+\eps/6)\ell_i/d_X(x,x')$ and $e'=(1+\eps/6)\Delta d_X(x,x')/\ell_i$, 
and with relative error ratio
$\eps/6$. If one of the answers to a calls to the algorithm of Lemma~\ref{lem:decision} is positive,
we return a positive answer, and otherwise we return a negative answer.

We now prove that this algorithm is correct. Suppose that $\dist(X,Y) \leq \Delta$.
Then there exists a mapping $\sigma:X \to Y$ such that 
$\expansion(\sigma) \expansion(\sigma^{-1}) \leq \Delta$. We must have 
$\expansion(\sigma)=d_Y(y,y')/d_X(x,x')$ for some $x,x' \in X$ and $y,y' \in Y$.
So there is $i$ such that 
\[
	\frac{1}{1+\eps/6}\cdot\frac{\ell_i}{d_X(x,x')} \leq \expansion(\sigma) \leq 
(1+\eps/6)\frac{\ell_i}{d_X(x,x')}
\] 
and thus
\[
	\expansion(\sigma^{-1}) \leq \frac{\Delta}{\expansion(\sigma)} \leq
	(1+\eps/6) \Delta \frac{d_X(x,x')}{\ell_i}.
\]
So we have $\expansion(\sigma) \leq e$ and $\expansion(\sigma^{-1}) \leq e'$,
and thus the corresponding call to the decision algorithm of Lemma~\ref{lem:decision}
must have returned a positive answer.

Now suppose that our algorithm returns a positive answer. 
So there must be $x,x' \in X$, $i \in \{1,\dots,m\}$ and
a mapping $\sigma: X \to Y$  such that $\expansion(\sigma) \leq (1+\eps/6)e$ and
$\expansion(\sigma^{-1}) \leq (1+\eps/6)e$, where 
$e=(1+\eps/6)\ell_i/d_X(x,x')$ and $e'=(1+\eps/6)\Delta d_X(x,x')/\ell_i$. It
follows that $\dist(X,Y) \leq (1+\eps/6)^2ee'$, 
and thus $\dist(X,Y) \leq (1+\eps/6)^4\Delta$. As $0 < \eps \leq 1$, it implies
that $\dist(X,Y) < (1+\eps)\Delta$.

So we have proved that this decision algorithm is correct. As we run the algorithm
of Lemma~\ref{lem:decision} $k^2n/\eps^{O(1)}$ times, we obtain the following.
\begin{lemma}\label{lem:decision2}
	Given $0<\eps\leq 1<\Delta$, the approximate decision algorithm above returns 
	a positive answer if $\dist(X,Y) \leq \Delta$, and a negative answer if
	$\dist(X,Y) \geq (1+\eps) \Delta$.
	Its running time is $2^{O(k^2 \log k)} (\Delta^{2k\delta}/\eps^{2k\delta+O(1)})n^2\log n.$
\end{lemma}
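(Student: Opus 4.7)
The plan is to decompose the argument into a correctness proof for the two-sided decision guarantee and a running-time analysis that amounts to counting the invocations of the earlier decision algorithm and multiplying by its per-call cost.

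For correctness I would argue both directions. For completeness, assume $\dist(X,Y) \leq \Delta$ and fix a witness $\sigma:X\to Y$ with $\expansion(\sigma)\cdot\expansion(\sigma^{-1}) \leq \Delta$. Since $\expansion(\sigma) = d_Y(y,y')/d_X(x,x')$ for some specific pair $x,x' \in X$ and $y,y' \in Y$, I would invoke Lemma~\ref{lem:WSPD} on $Y$ to pick an index $i$ with $\ell_i/(1+\eps/6) \leq d_Y(y,y') \leq (1+\eps/6)\ell_i$. For the loop iteration indexed by $(x,x',i)$ the algorithm sets $e = (1+\eps/6)\ell_i/d_X(x,x')$ and $e' = (1+\eps/6)\Delta\, d_X(x,x')/\ell_i$; a direct substitution shows $\expansion(\sigma)\leq e$ and $\expansion(\sigma^{-1}) \leq \Delta/\expansion(\sigma) \leq e'$, so by Lemma~\ref{lem:decision} that call must return a positive answer. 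For soundness, if the algorithm returns positive, some underlying call supplies a $\sigma$ with $\expansion(\sigma)\leq (1+\eps/6)e$ and $\expansion(\sigma^{-1})\leq(1+\eps/6)e'$; multiplying gives $\dist(\sigma)\leq (1+\eps/6)^2 ee' = (1+\eps/6)^4\Delta$, and the elementary inequality $(1+\eps/6)^4 \leq 1+\eps$ for $\eps \in (0,1]$ finishes the bound.

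For the running time I would count as follows. Lemma~\ref{lem:WSPD} provides the list of $m = n/\eps^{O(1)}$ representative lengths in $O(n\log n) + n/\eps^{O(1)}$ time, which is absorbed by later costs. The nested loop issues one decision call per pair $(x,x')\in X^2$ and per index $i$, for a total of $O(k^2 m) = k^2 n/\eps^{O(1)}$ calls. Each call feeds Lemma~\ref{lem:decision} parameters with $ee' = (1+\eps/6)^2\Delta = \Theta(\Delta)$ and relative error $\Theta(\eps)$, so its cost is $2^{O(k^2 \log k)} (\Delta/\eps)^{2k\delta} n \log n$. Multiplying the number of calls by the per-call cost gives the claimed bound $2^{O(k^2 \log k)}\, \Delta^{2k\delta}/\eps^{2k\delta + O(1)}\, n^2 \log n$.

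The main obstacle is not really algorithmic, since the substance is handled by Lemmas~\ref{lem:decision} and~\ref{lem:WSPD}; rather, the delicate step is accounting for the three independent sources of slack (the WSPD approximation of $d_Y(y,y')$, the $(1+\eps/6)$ padding baked into the choice of $e$ and $e'$, and the $(1+\eps/6)$ relative error of the decision procedure) and verifying that their product still fits under the target factor $(1+\eps)$. A secondary point worth checking is that $ee'$ stays within $O(\Delta)$ uniformly over loop iterations, so that the per-call time bound plugged into the product is valid for every call rather than just the ``successful'' one.
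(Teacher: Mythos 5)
Your proposal is correct and follows essentially the same route as the paper: completeness via Lemma~\ref{lem:WSPD} plus the exact-expansion guarantee of Lemma~\ref{lem:decision}, soundness by multiplying the three $(1+\eps/6)$ slack factors and checking $(1+\eps/6)^4\leq 1+\eps$ on $(0,1]$, and the running time by multiplying the $k^2 n/\eps^{O(1)}$ calls by the per-call cost of Lemma~\ref{lem:decision} with $ee'=\Theta(\Delta)$. The only cosmetic difference is that you state the WSPD guarantee as $\ell_i/(1+\eps/6)\leq d_Y(y,y')$, which matches Lemma~\ref{lem:WSPD} as written, whereas the paper's prose has a $(1-\eps/6)$ that looks like a typo; this does not affect the argument.
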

We can turn this decision algorithm into an approximation algorithm for the minimum
distortion using exponential search. So we run this algorithm with $\Delta=1,2,4,\dots$ until
it returns a positive answer for $\Delta=2^i$. At this point we know that 
$2^{i-1} < \dist(X,Y) < 2^{i+1}$. Then we perform binary search for $\Delta$
in the interval $[2^{i-1},2^{i+1}]$, still using the decision algorithm
of Lemma~\ref{lem:decision2}. After $O(\log(1/\eps))$ steps, it yields a
$(1+\eps)$-approximation of $\dist(X,Y)$. So we obtain the following result.

\begin{theorem}\label{thm:optimization}
	Given $0 < \eps \leq 1$, we can compute 
	a  number $\Delta_\eps$ such that $\dist(X,Y) \leq \Delta_\eps \leq (1+\eps)\dist(X,Y)$
	in time 	$2^{O(k^2 \log k)} (\dist(X,Y)^{2k\delta}/\eps^{2k\delta+O(1)})n^2\log n$.
\end{theorem}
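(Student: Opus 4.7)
The plan is to combine the approximate decision algorithm of Lemma~\ref{lem:decision2} with a two-phase search strategy to locate $\dist(X,Y)$ within a multiplicative factor of $1+\eps$. First I would run the decision algorithm with parameter $\eps/4$ and with $\Delta = 1, 2, 4, 8, \ldots$, doubling $\Delta$ at each step, and stopping as soon as a positive answer is returned at some $\Delta = 2^i$. Since the algorithm returns positive whenever $\dist(X,Y) \leq \Delta$ and negative whenever $\dist(X,Y) \geq (1+\eps/4)\Delta$, this exponential phase yields the enclosing interval $2^{i-1} \leq \dist(X,Y) \leq (1+\eps/4)\cdot 2^i < 2^{i+1}$.

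Once inside this constant-ratio interval, I would perform binary search on $\Delta$ within $[2^{i-1},2^{i+1}]$, again invoking Lemma~\ref{lem:decision2} with error parameter $\eps/4$ at each step, for $O(\log(1/\eps))$ rounds. The smallest $\Delta$ accepted by the decision algorithm then satisfies $\dist(X,Y) \leq \Delta$ and $\Delta \leq (1+\eps/4) \cdot (1+O(\eps)) \dist(X,Y) \leq (1+\eps)\dist(X,Y)$ after rescaling $\eps$ by a constant factor; this yields the desired $\Delta_\eps$.

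For the running time, the key observation is that the cost of a single call to Lemma~\ref{lem:decision2} is $2^{O(k^2 \log k)}(\Delta^{2k\delta}/\eps^{2k\delta+O(1)})n^2 \log n$, which grows polynomially in $\Delta$. The exponential phase therefore consists of calls whose costs form a geometric series with ratio $2^{2k\delta}$, dominated by the final call at $\Delta = \Theta(\dist(X,Y))$. The binary search contributes an additional $O(\log(1/\eps))$ calls, each with $\Delta \leq 2^{i+1} = O(\dist(X,Y))$, which only inflates the total by a logarithmic factor that is absorbed into the $\eps^{-O(1)}$ term. Summing these contributions yields the claimed bound $2^{O(k^2\log k)}(\dist(X,Y)^{2k\delta}/\eps^{2k\delta+O(1)})n^2\log n$.

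The only genuine subtlety — and what I expect to be the main point requiring care — is the compounding of approximation errors across the two phases: the decision algorithm has a one-sided $(1+\eps)$-gap, so the interval returned by exponential search is not tight, and successive binary-search refinements may each incur a multiplicative slack. Choosing the error parameter of each subcall to be $\eps/c$ for a suitable constant $c$, and verifying that $(1+\eps/c)^{O(1)} \leq 1+\eps$ for $0<\eps\leq 1$, resolves this issue cleanly. Everything else is bookkeeping on the geometric series in the running-time analysis.
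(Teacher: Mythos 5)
Your proposal is correct and follows essentially the same strategy as the paper: exponential search on $\Delta$ via the approximate decision algorithm of Lemma~\ref{lem:decision2} to locate a constant-ratio enclosing interval $[2^{i-1},2^{i+1}]$, followed by $O(\log(1/\eps))$ rounds of binary search within that interval. Your additional observations — that the geometric series of running times is dominated by the last exponential-search call because the cost grows as $\Delta^{2k\delta}$, and that the error parameter of the subcalls must be a constant fraction of $\eps$ to absorb the compounding of the one-sided gap — are correct and make explicit two bookkeeping points the paper treats tersely; they do not constitute a different approach.
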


\bibliographystyle{plain}
\bibliography{distortion}

\begin{thebibliography}{10}

\bibitem{Alt00}
Helmut Alt and Leonidas~J. Guibas.
\newblock Discrete geometric shapes: Matching, interpolation, and
  approximation.
\newblock In {\em Handbook of Computational Geometry}, pages 121--153.
  Elsevier, 2000.

\bibitem{Alt88}
Helmut Alt, Kurt Mehlhorn, Hubert Wagener, and Emo Welzl.
\newblock Congruence, similarity, and symmetries of geometric objects.
\newblock {\em Discrete {\&} Computational Geometry}, 3:237--256, 1988.

\bibitem{AryaMVX08}
Sunil Arya, David~M. Mount, Antoine Vigneron, and Jian Xia.
\newblock Space-time tradeoffs for proximity searching in doubling spaces.
\newblock In {\em Proc. European Symposium on Algorithms}, pages 112--123,
  2008.

\bibitem{Borradaile19}
Glencora Borradaile, Hung Le, and Christian Wulff{-}Nilsen.
\newblock Greedy spanners are optimal in doubling metrics.
\newblock In {\em Proc. Symposium on Discrete Algorithms}, pages 2371--2379,
  2019.

\bibitem{CallahanK95}
Paul~B. Callahan and S.~Rao Kosaraju.
\newblock A decomposition of multidimensional point sets with applications to
  k-nearest-neighbors and n-body potential fields.
\newblock {\em J. {ACM}}, 42(1):67--90, 1995.

\bibitem{Chan18}
T.{-}H.~Hubert Chan, Shuguang Hu, and Shaofeng~H.{-}C. Jiang.
\newblock A {PTAS} for the steiner forest problem in doubling metrics.
\newblock {\em {SIAM} J. Comput.}, 47(4):1705--1734, 2018.

\bibitem{ChanLNS15}
T.{-}H.~Hubert Chan, Mingfei Li, Li~Ning, and Shay Solomon.
\newblock New doubling spanners: Better and simpler.
\newblock {\em {SIAM} J. Comput.}, 44(1):37--53, 2015.

\bibitem{Chew97}
L.~Paul Chew, Michael~T. Goodrich, Daniel~P. Huttenlocher, Klara Kedem, Jon~M.
  Kleinberg, and Dina Kravets.
\newblock Geometric pattern matching under euclidean motion.
\newblock {\em Comput. Geom.}, 7:113--124, 1997.

\bibitem{CG06}
Richard Cole and Lee{-}Ad Gottlieb.
\newblock Searching dynamic point sets in spaces with bounded doubling
  dimension.
\newblock In {\em Proc. {ACM} Symposium on Theory of Computing}, pages
  574--583, 2006.

\bibitem{CLRS}
Thomas~H. Cormen, Charles~E. Leiserson, Ronald~L. Rivest, and Clifford Stein.
\newblock {\em Introduction to Algorithms, 3rd Edition}.
\newblock {MIT} Press, 2009.

\bibitem{Cygan17}
Marek Cygan, Fedor~V. Fomin, Alexander Golovnev, Alexander~S. Kulikov, Ivan
  Mihajlin, Jakub Pachocki, and Arkadiusz Socala.
\newblock Tight lower bounds on graph embedding problems.
\newblock {\em J. {ACM}}, 64(3):18:1--18:22, 2017.

\bibitem{Cygan15}
Marek Cygan, Fedor~V. Fomin, Lukasz Kowalik, Daniel Lokshtanov, D{\'{a}}niel
  Marx, Marcin Pilipczuk, Michal Pilipczuk, and Saket Saurabh.
\newblock {\em Parameterized Algorithms}.
\newblock Springer, 2015.

\bibitem{DingY19}
Hu~Ding and Mingquan Ye.
\newblock On geometric alignment in low doubling dimension.
\newblock In {\em Proc. {AAAI} Conference on Artificial Intelligence}, pages
  1460--1467, 2019.

\bibitem{Fellows05}
Michael~R. Fellows, Fedor~V. Fomin, Daniel Lokshtanov, Elena Losievskaja,
  Frances~A. Rosamond, and Saket Saurabh.
\newblock Distortion is fixed parameter tractable.
\newblock In {\em Proc. International Colloquium on Automata, Languages and
  Programming}, pages 463--474, 2009.

\bibitem{Friggstad19}
Zachary Friggstad, Mohsen Rezapour, and Mohammad~R. Salavatipour.
\newblock Local search yields a {PTAS} for k-means in doubling metrics.
\newblock {\em {SIAM} J. Comput.}, 48(2):452--480, 2019.

\bibitem{GottliebR08}
Lee{-}Ad Gottlieb and Liam Roditty.
\newblock Improved algorithms for fully dynamic geometric spanners and
  geometric routing.
\newblock In {\em Proc. {ACM-SIAM} Symposium on Discrete algorithms}, pages
  591--600, 2008.

\bibitem{GKL03}
Anupam Gupta, Robert Krauthgamer, and James~R. Lee.
\newblock Bounded geometries, fractals, and low-distortion embeddings.
\newblock In {\em Proc. {IEEE} Symposium on Foundations of Computer Science},
  pages 534--543, 2003.

\bibitem{Hall05}
A.~Hall and C.~Papadimitriou.
\newblock Approximating the distortion.
\newblock In {\em Proc. APPROX-RANDOM}, pages 111--122, 2005.

\bibitem{HPbook}
Sariel Har-peled.
\newblock {\em Geometric Approximation Algorithms}.
\newblock American Mathematical Society, 2011.

\bibitem{HM06}
Sariel Har{-}Peled and Manor Mendel.
\newblock Fast construction of nets in low-dimensional metrics and their
  applications.
\newblock {\em {SIAM} J. Comput.}, 35(5):1148--1184, 2006.

\bibitem{Kenyon09}
Claire Kenyon, Yuval Rabani, and Alistair Sinclair.
\newblock Low distortion maps between point sets.
\newblock {\em {SIAM} J. Comput.}, 39(4):1617--1636, 2009.

\bibitem{KN19}
Michael Kerber and Arnur Nigmetov.
\newblock Metric spaces with expensive distances.
\newblock {\em CoRR}, abs/1901.08805, 2019.

\bibitem{Khot07}
Subhash Khot and Rishi Saket.
\newblock Hardness of embedding metric spaces of equal size.
\newblock In {\em Proc. APPROX-RANDOM}, pages 218--227, 2007.

\bibitem{KL04}
Robert Krauthgamer and James~R. Lee.
\newblock Navigating nets: {S}imple algorithms for proximity search.
\newblock In {\em Proc. {ACM-SIAM} Symposium on Discrete Algorithms}, pages
  798--807, 2004.

\bibitem{Nayyeri15}
Amir Nayyeri and Benjamin Raichel.
\newblock Reality distortion: Exact and approximate algorithms for embedding
  into the line.
\newblock In {\em Proc. IEEE Symposium on Foundations of Computer Science},
  pages 729--747, 2015.

\bibitem{Talwar04}
Kunal Talwar.
\newblock Bypassing the embedding: algorithms for low dimensional metrics.
\newblock In {\em Proc. {ACM} Symposium on Theory of Computing}, pages
  281--290, 2004.

\end{thebibliography}

\end{document}